\newtheorem{theorem}{Theorem}
\newtheorem{proposition}[theorem]{Proposition}
\newtheorem{lemma}[theorem]{Lemma}
\newtheorem*{lemma*}{Lemma}
\newtheorem*{assumption*}{Assumption}
\newcounter{assumptionc}
\newtheorem{corollary}[theorem]{Corollary}
\newtheorem{remark}[theorem]{Remark}
\numberwithin{equation}{section}
\numberwithin{theorem}{section}
\newenvironment{assumption+}
 {\ifnum\value{subassumption}=0 \stepcounter{assumptionc}\fi\subassumption}
 {\endsubassumption}
\def\text#1{\hbox{#1}}
\def\build #1_#2{\mathrel{\mathop{\kern 0pt #1}\limits_\zs{#2}}}
\newcommand{\zs}[1]{{\mathchoice{#1}{#1}{\lower.25ex\hbox{$\scriptstyle#1$}}
{\lower0.25ex\hbox{$\scriptscriptstyle#1$}}}}
\numberwithin{equation}{section}
\newtheorem*{example*}{Example}
\newcommand{\hochkomma}{$^{,}$}
\newcommand{\PP}{\mathbb{P}}
\newcommand{\EX}{\mathbb{E}}
\newcommand{\Real}{\mathbb{R}}
\newcommand{\N}{\mathbb{N}}
\newcommand{\Var}{\text{Var}}
\newcommand{\1}{\mathbbm{1}}
\newcommand{\diff}{\mathrm{d}}
\newcommand\norm[1]{\left\lVert#1\right\rVert}
\renewcommand{\norm}[1]{\left\lVert#1\right\rVert}
\newenvironment{myproof}[1][\proofname]{%
  \par\pushQED{\qed}\normalfont%
  \topsep6\p@\@plus6\p@\relax
  \trivlist\item[\hskip\labelsep\bfseries#1\@addpunct{.}]%
  \ignorespaces
}{%
  \popQED\endtrivlist\@endpefalse
}
\newcommand{\myitem}[1]{%
	\item[#1]\protected@edef\@currentlabel{#1}%
}
\def\dfrac{\displaystyle\frac}
\def\FF{\mathcal{F}}
\begin{document}
	
\begin{acronym}
	\acro{eu}[EU]{expected utility}
\end{acronym}

\title{Mean-Variance Optimization for Participating Life Insurance Contracts\footnote{Declarations of interest: none}}
\author{Felix Fie{\ss}inger\footnote{University of Ulm, Institute of Insurance Science and Institute of Mathematical Finance, Faculty of Mathematics and Economics, Ulm, Germany. Email: felix.fiessinger@uni-ulm.de} \hochkomma \footnote{Corresponding author}  \, and Mitja Stadje\footnote{University of Ulm, Institute of Insurance Science and Institute of Mathematical Finance, Faculty of Mathematics and Economics, Ulm, Germany. Email: mitja.stadje@uni-ulm.de}}
\date{\today}
\maketitle
\begin{abstract}
	This paper studies the equity holders' mean-variance optimal portfolio choice problem for (non-)protected participating life insurance contracts. We derive explicit formulas for the optimal terminal wealth and the optimal strategy in the multi-dimensional Black-Scholes model, showing the existence of all necessary parameters. 
	Moreover, we provide a numerical analysis of the Black-Scholes market. The equity holders on average increase their investment into the risky asset in bad economic states and decrease their investment over time.
\end{abstract}

\noindent\textbf{Keywords:} optimal portfolio, portfolio insurance, mean-variance optimization, participating life insurance, non-concave utility maximization\\

\noindent\textbf{JEL:} C61, G11, G22

\section{Introduction}

This paper investigates a mean-variance optimization from the perspective of the equity holders of an insurance company for the two standard designs of participating life insurance contracts, i.e., with a protected or non-protected guarantee. In a participating life insurance contract, the policyholder gets a (possibly protected) guarantee and participates proportionally at maturity from the portfolio value exceeding a pre-defined threshold higher than the guarantee. The policyholders receive at least the guarantee value at maturity if the guarantee is protected. If the guarantee is not protected, then the policyholders get, at most, the portfolio value where, initially, the insurance company gives additional equity to the premium for the investment, i.e., the equity has only limited liability. 

The portfolio theory research goes back to the 1950s with the pioneering work of Markowitz \cite{markowitz1952portfolio,markowitz1959portfolio}, who used variance to measure the riskiness of stock returns in a one-period setting. Mean-variance was later also analyzed in dynamic settings, see for instance Hakansson \cite{hakansson1971capital}, Samuelson \cite{samuelson1975lifetime} or Merton \cite{merton1972analytic,merton1975optimum}, and has been extended in various directions. For a more detailed overview of mean-variance portfolio optimizations, see the literature review from Zhang et al. \cite{zhang2018portfolio}. In these works, the management is typically assumed to maximize the mean-variance of the entire portfolio, and no distinction between equity holders and debt holders or policyholders is made.

This paper aims to provide an explicit formula for the optimal terminal wealth (from the perspective of the equity holders of an insurance company), show that the optimal solution exists (which is non-trivial since the solution includes an additional parameter), and prove an analytical formula for the optimal strategy. Moreover, we 
discuss the characteristics of the terminal wealth and the optimal strategy in a numerical analysis.

While the optimal investment problem for participating life insurance contracts was solved for \ac{eu}, an analysis for mean-variance is lacking, which is the contribution of our paper. This analysis is relevant as mean-variance is widely spread in the industry, and most finance papers and textbooks use it as the benchmark to measure risk; see, for instance, Cochrane \cite{cochrane2009asset}. The reason is that mean-variance provides a straightforward interpretation of risk vs. reward, admits tractable statistical properties, and also, due to historical contingencies, has become the leading standard taught in business schools, making it easier for a risk manager to justify its use compared to specifying a utility function. Finally, due to the nonlinearity of variance, the mathematical analysis differs from \ac{eu}. In particular, one needs to specify an equivalent problem which, contrary to \ac{eu}-maximization, adds an additional parameter $\lambda$ to the model. The existence of such a $\lambda$ is not apparent, and we are only able to derive its existence and compute it in a semi-explicit way in the case of the Black-Scholes model. 

As discussed in the beginning, the insurance company offering a participating life insurance contract pays back a surplus to the policyholder in good economic states. Insurance policies with profit participation play an essential role in the life insurance sector. According to the European Insurance Overview 2023 \cite{EIOPAreport}, issued by the European Insurance and Occupational Pensions Authority (EIOPA), policyholders spent 2022 around a quarter of their gross premiums on profit participation insurance policies in the life sector (includes life, health, and pension insurance). In Croatia, Italy, and Belgium, these policies have a market share of over 50 \% in the life sector. Several publications are concerned with the valuation and hedging of such insurance policies, e.g., Bryis and de Varenne \cite{briys1997risk}, Bacinello and Persson \cite{rita2002design}, Gatzert and Kling \cite{gatzert2007analysis}, Schmeiser and Wagner \cite{schmeiser2015proposal} or Mirza and Wagner \cite{mirza2018policy}. In recent years, the study of optimal investments for participating life insurance contracts in continuous time started. Lin et al. \cite{lin2017optimal} analyzed 2017 the optimal investment for an \ac{eu}-problem for a specific S-shaped utility function. Afterwards, Nguyen and Stadje \cite{nguyen2020nonconcave} examined a similar setting for more general S-shaped utility functions under a Value at Risk constraint and mortality risk. He et al. \cite{he2020weighted} made another generalization by considering a weighted utility function between the insurer and the policyholders. Moreover, Dong et al. \cite{dong2020optimal} added to the \ac{eu}-problem a Value at Risk and a portfolio insurance constraint. Chen et al. \cite{chen2018optimal} considered Value at Risk, expected shortfall, and Average Value at Risk constraints. To the best of our knowledge, we are the first to consider participating insurance contracts in continuous time under mean-variance optimization.

A significant factor of an optimal policy design is the payoff structure: For a participating life insurance contract, the payoff has at least one point where it is non-differentiable, e.g., at the threshold where the proportional surplus participation starts. (The payoff of the non-protected product also has a second point of non-differentiability at the guarantee value.) At these points, the payoff changes its slope, resulting in a payoff that is generally neither convex nor concave (see Figure \ref{fig: Participating life insurance} for an illustration of the insurer's payoff).
In such situations, one often uses concavification techniques; see, for instance, Larsen \cite{larsen2005optimal} or Reichlin \cite{reichlin2013utility}. Liang et al. \cite{liang2021unified} used a generalization of this method. For alternative approaches in non-concave portfolio optimization, see, for instance, Kraft and Steffensen \cite{kraft2013dynamic}, Dai et al. \cite{dai2019non}, or Qian and Yang \cite{qian2023non}.
In this work, we use the specific structure of a payoff in a complete market to carefully compare different solution candidates and show that the optimal terminal wealth actually has a closed and, up to the implicit parameters whose existence proof and computation in our setting is delicate, simple form. We also use a standard Lagrangian approach to get the optimal terminal wealth and use the price density process to give the analytic formula for the optimal strategy. This method was used to solve several other optimization problems in complete markets; see, for instance, Basak and Shapiro \cite{basak2001value}, Cuoco et al. \cite{cuoco2008optimal}, Chen et al. \cite{chen2018optimal}, Chen et al. \cite{chen2019constrained}, Nguyen and Stadje \cite{nguyen2020nonconcave}, Mi et al. \cite{mi2023optimal}, {Chen et al.} \cite{chen2024equivalence}, or {Avanzi et al.} \cite{avanzi2024optimal}. 
To use this Lagrangian approach for mean-variance, we start to give an equivalent problem in the spirit of Zhou and Li \cite{zhou2000continuous}. The ansatz described above then leads to having two multipliers instead of only one Lagrangian multiplier.  
We obtain a system of non-linear equations with several variables yielding implicit functions whose properties, through various intermediate results, then entail the existence of a solution. Although mean-variance does not respect first-order stochastic dominance, we do get similar results compared to \ac{eu}-optimization regarding the general form of the optimal solution, but with the terminal wealth having a simpler structure. In particular, in a complete market, the optimal terminal wealth is piecewise linear in the price density. 
{Compared to the papers discussed above, the main difference of our ansatz is to consider an optimization functional, where we include an option-like payoff of the participation instead of adding further conditions to a non-option-like payoff. We show in Proposition \ref{prop: equivalent-like problem} that our optimization corresponds to a target problem, where one aims to be as close as possible to certain exogenously given constants. This alternative formulation is close to Avanzi et al. \cite{avanzi2024optimal}, who minimize a variance-like function under solvency conditions. However, while they face a 1-target problem our optimization corresponds to a 2-target problem, where the exogenously given target depends on the region of the terminal wealth. An additional constraint ensures that mass can only be shifted from one region to another at a rate which depends on the amount of participation. Target-based optimization is common in the literature as an alternative to mean-variance, see for instance Avanzi et al. \cite{avanzi2024optimal}, Menoncin and Vigna \cite{menoncin2017mean}, or Li and Forsyth \cite{li2019data}.}
Finally, our numerical results show a somewhat different investment behavior compared to \ac{eu}-maximization, {in particular for S-shaped utility functions}. Specifically, we observe that the investment gets more conservative with shorter maturities, and the equity holders on average increase their investment into the risky asset in bad economic states. Moreover, the insurer invests more riskily when offering a non-protected participation life insurance product than when offering a protected one due to a reduced downside potential. 
Since variance is the most widely used risk measure in industry and the cornerstone of most modern portfolio theory and the finance literature, these results might be potentially relevant from a descriptive point of view.

Section \ref{chapter: model setup} describes participating life insurance contracts and briefly introduces the functional to optimize. In Section \ref{chapter: complete market}, we explicitly show the optimal terminal wealth, the optimal strategy, and the existence of the necessary parameters in the Black-Scholes model. 
Section \ref{numerics} analyzes some numerical results for the Black-Scholes model. Finally, Section \ref{conclusion} concludes the paper.

\section{Model Setup} \label{chapter: model setup}

\subsection{Participating life insurance contracts}

Participating insurance contracts or, in general, insurance contracts with some profit participation play a crucial role in the life sector. Figure \ref{fig: diagram} shows that, on average, around a quarter of the gross premiums 2022 in the life sector are spent on policies with profit participation. In some countries, like Croatia, Belgium, or Italy, even more than 50 \% of the gross premiums are invested in such policies.

\begin{figure}[!htb]
	\centering
	\includegraphics[trim= 22mm 98mm 28mm 28mm, clip,width=\textwidth]{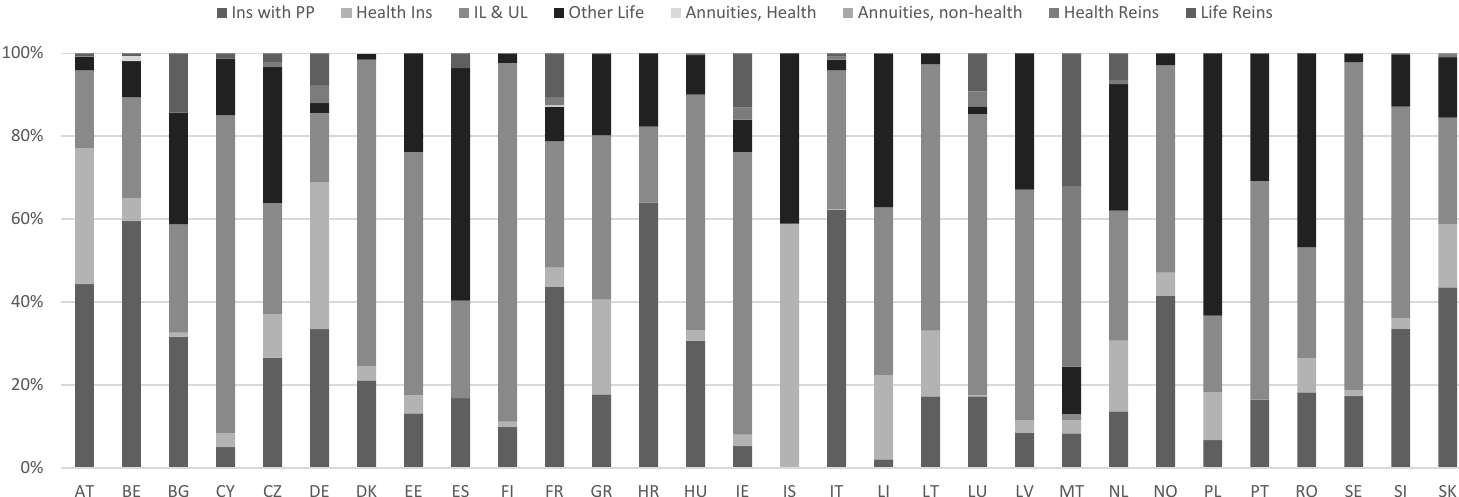}
	\caption{Market share in 2022 of the gross premium separated by the line of business in the life sector. Data source: European Insurance Overview from the EIOPA \cite{EIOPAreport}}
	\label{fig: diagram}
\end{figure}	

This paper focuses on two standard designs of participating life insurance contracts. Both products offer a guarantee value $G$ and a proportional surplus participation rate $\alpha_2$ when the portfolio value exceeds a threshold, which we denote by $k_2$. Note that $k_2$ is always higher or equal to $G$. The difference between the two designs is that the first product offers a non-protected guarantee, whereas the second provides a protected one. Protected means, in this case, that the policyholders get at least their guarantee value, independent of the economic situation at maturity. In contrast, in the non-protected case, the insurance company declares bankruptcy if the portfolio value is below $G$. In this case, the policyholders only get the portfolio value. For the portfolio, it is essential to note that the initial portfolio value $x_0$ is the sum of the premiums from the policyholders plus some initial capital from the insurer (i.e., the equity holders). There is no rule on how to set the guarantee value. One possible example of setting $G$ is to take the sum of the premiums in addition to a guaranteed interest rate below the risk-free interest rate. Moreover, product designers often set the threshold $k_2$ as the sum of the premiums divided by the share of the policyholders in the portfolio. 

Hence, we conclude the following payoffs $V$ for the policyholders (pol) and the insurer (ins) for the contracts with non-protected (non) resp. protected (pro) guarantees and terminal portfolio value $X_T \geq 0$:\footnote{For simplicity, we assume that the management or the regulator does not allow the total wealth to become negative. However, if either we extend the following definitions also for $X_T<0$, or let the insurer cover in both cases all losses stemming from a negative terminal portfolio value, all our results hold (see also Remark \ref{optimal wealth remark}\ref{remark: part extension smaller 0}).}
\begin{align*}
	V_{\scriptsize \text{pol}}^{\scriptsize\text{non}} (X_T) &= \begin{cases}
		X_T & \text{if } X_T < G, \\
		G & \text{if } G \leq X_T < k_2, \\
		G + \alpha_2(X_T-k_2) & \text{if } X_T \geq k_2,
	\end{cases} \\
	V_{\scriptsize \text{pol}}^{\scriptsize\text{pro}} (X_T) &= \begin{cases}
		G & \text{if } X_T < k_2, \\
		G + \alpha_2(X_T-k_2) & \text{if } X_T \geq k_2,
	\end{cases} \\
	V_{\scriptsize \text{ins}}^{\scriptsize\text{non}} (X_T) &= \begin{cases}
		0 & \text{if } X_T < G, \\
		X_T-G & \text{if } G \leq X_T < k_2, \\
		X_T - G -\alpha_2 (X_T-k_2) & \text{if } X_T \geq k_2,
	\end{cases} \\
	V_{\scriptsize \text{ins}}^{\scriptsize\text{pro}} (X_T) &= \begin{cases}
		X_T-G & \text{if } X_T < k_2, \\
		X_T - G -\alpha_2(X_T-k_2) & \text{if } X_T \geq k_2.
	\end{cases}
\end{align*}
Note that solely $V_{\scriptsize \text{ins}}^{\scriptsize\text{pro}}$ can attain negative values. For a more detailed explanation of participating insurance, we refer to Nguyen and Stadje \cite{nguyen2020nonconcave}. 

\subsection{Optimization Functional}

Before defining the optimization functional, let us introduce the basic financial market, which in Section \ref{chapter: complete market} and \ref{numerics} will become the Black-Scholes model. \\
Let $(\Omega,\FF,(\FF)_{t \in [0,T]},\PP)$ be a filtered probability space with time horizon $T>0$. The filtration is generated by the $d$-dimensional Brownian Motion $W$ satisfying the usual conditions. We consider an arbitrage-free market with a risk-free asset $B$ and a deterministic interest rate $r_t \geq 0$ following the price process $\diff B_t = B_t r_t \diff t$, and $d$ risky assets $S^i$, $i \in \{1,\ldots,d\}$ with adapted price processes. Let the dynamic strategy $u$ represent the fraction of wealth invested in the corresponding risky asset, while the remaining money is invested in the risk-free asset with the corresponding wealth process $X_t$ and initial value $X_0 = x_0$. We denote by $\mathcal{U}$ the set of all admissible strategies given by all $u$, which are progressively measurable and induce an integrable $X_T$.

Next, we introduce a general functional to optimize, the previously defined participating life insurance products being special cases. We look for the optimal strategy $\hat{u} \in \mathcal{U}$ such that 
\begin{align} \label{J definition}
	J (0,T,\hat{u},x_0) = \sup_{u \in \mathcal{U}} J(0,T,u,x_0),
\end{align} 
where the value functional $J$ is defined as
\begin{align*}
	J(0,T,u,x_0) := \EX [ F(0,T,u,x_0)] - \gamma \Var(F(0,T,u,x_0))
\end{align*}
with the risk aversion parameter $\gamma>0$. Hence, we are looking for the mean-variance optimal strategy for the (continuous) function $F$, which we define for $s<t$ as
\begin{align*}
	F(s,t,u,x) :=&\ \alpha\left( (X_t - k_1)_+ - k_0 \right)-\alpha_2 (X_t-k_2)_+ \\
	=&\ \begin{cases}
		-\alpha k_0 & X_t<k_1, \\
		\alpha(X_t-k_1-k_0) & k_1 \leq X_t < k_2, \\
		\tilde{\alpha}(X_t-k_2) + \alpha(k_2-k_1-k_0) & X_t \geq k_2,
	\end{cases}
\end{align*}
where $X_s=x$, $0 \leq k_0, k_1 \leq k_2 < \infty$ with $k_0+k_1 \leq k_2$, $K_2>0$, $0\leq \alpha_2 < \alpha < \infty$ with $\tilde{\alpha}:=\alpha-\alpha_2$. Note that $0 < \tilde{\alpha} \leq \alpha$ holds. We also write $F(X_T)$ instead of $F(0,T,u,x)$ when the trading strategy is clear (for instance, for the optimal terminal wealth $\hat{X}_T$), suppressing the initial value.

Now, one can observe that if $\alpha=1$, $k_0=0$, and $k_1=G$, the function $F$ reduces to the payoff of the insurer for the non-protected participating life insurance contract $V_{\scriptsize \text{ins}}^{\scriptsize\text{non}}$. If $\alpha=1$, $k_0=G$, and $k_1=0$, $F$ reduces to the payoff of the insurer for the protected product $V_{\scriptsize \text{ins}}^{\scriptsize\text{pro}}$. In the following Figure \ref{fig: Participating life insurance}, we show the payoff of the insurer for these two variants of participating life insurance contracts. 

\begin{figure}[!htb]
	\centering
	\begin{minipage}{0.48\textwidth}
		\includegraphics[trim= 10mm 15mm 10mm 8mm, clip,width=\textwidth]{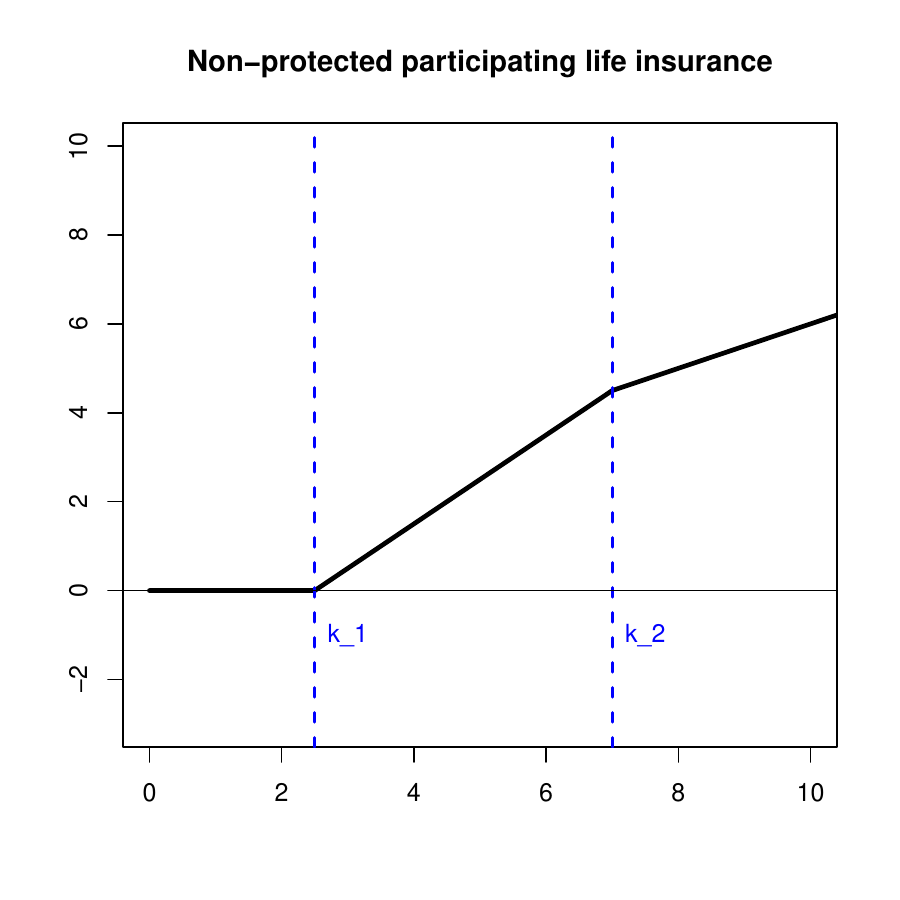}
	\end{minipage}
	\quad
	\begin{minipage}{0.48\textwidth}
		\includegraphics[trim= 10mm 15mm 10mm 8mm, clip,width=\textwidth]{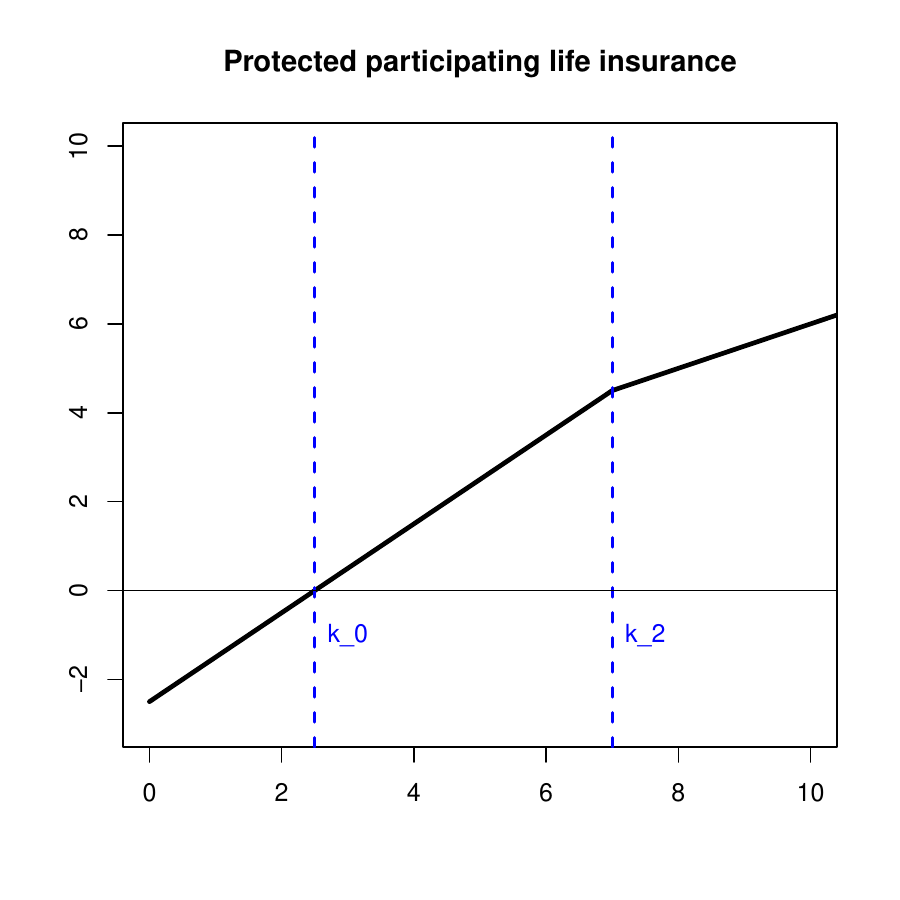}
	\end{minipage}
	\caption{Payoffs for the insurer with $k_1 = 2.5$ (left) resp. $k_0 = 2.5$ (right), $k_2 = 7$, $\alpha=1$, $\alpha_2=0.5$.}
	\label{fig: Participating life insurance}
\end{figure}	

\begin{remark}
	{Since the problem in this paper is a variant of a mean-variance optimization, it inherits also the time-inconsistency of the classical mean-variance approach, see, for instance, Basak and Chabakauri \cite{basak2010dynamic} or Bj{\"o}rk and Murgoci \cite{bjork2010general}. There are several ways to approach this time-inconsistency. We take in this paper the perspective of a so-called pre-committed investor. Such an investor optimizes once in the beginning and sticks to this strategy, i.e., the chosen strategy of the investor is not dynamically reoptimized. Note that our problem is not covered by the general theory from Bj{\"o}rk and Murgoci \cite{bjork2010general} which assumes a sophisticated investor, i.e., she/he knows that her/his optimal strategy changes over time and only optimizes over all strategies she/he would actually follow.}
\end{remark}

\section{Optimization in a Black-Scholes market} \label{chapter: complete market}

This section assumes that the underlying financial market follows the Black-Scholes model, i.e., the market is complete. Then, the price dynamics for the $d$ risky assets $S^i$, $i \in \{1,\ldots,d\}$ are given by
\begin{align*}
	\diff S^i_t &= S^i_t \mu^i_t \diff t + S^i_t \sigma^i_t \diff W_t,
\end{align*}
where $W$ is the Brownian Motion generating the filtration $\FF$, $\mu_i$ is the deterministic drift of the $i$'th asset and $\sigma^{ij}$ is the deterministic volatility between the $i$'th and the $j$'th asset. We assume that $\sigma_t$ is bounded, bounded away from zero and invertible. Then, there exists a unique price density process $\xi$ with the wealth process $X$ and the price density process $\xi$ admitting the following dynamics
\begin{align} \label{X definition}
	\diff X_t &= X_t \left[r_t + u_t^T (\mu_t -r) \right] \diff t + X_t u_t^T \sigma_t \diff W_t, \\
	\diff \xi_t &= -\xi_t r_t \diff t - \xi_t \kappa_t^T \diff W_t, \label{xi definition}
\end{align}
with $X_0 = x_0$ and $\xi_0 =1$, where $\kappa_t = (\sigma_t)^{-1} (\mu_t - r_t)$ is the Sharpe ratio process and $\cdot^T$ denotes the transpose of a vector. The term $\xi_T(\omega)$, $\omega \in \Omega$, can be interpreted as the Arrow-Debreu value per probability unit in state $\omega$ at time $T$. Note that $\xi_T(\omega)$ can be written as a decreasing function of the stock price, and therefore attains high values in times of a bad economy and low values in times of a good economy. We assume that the processes $r_t$ and $\mu_t$ are integrable and the processes $\sigma_t$ and $\kappa_t$ are square-integrable over $[0,T]$ to ensure that the previous SDEs and all of the following integrals are well-defined.

\subsection{Derivation of the optimal terminal wealth}

The mean-variance optimization \eqref{J definition} is challenging to solve directly due to the term $(\EX[X])^2$ in the decomposition formula $\Var(X) = \EX[X^2] - (\EX[X])^2$. Hence, we show in the following Lemma \ref{Alternative optimization} that if an optimal strategy exists, we can alternatively look for the optimal strategy considering the value functional $\tilde{J}$ which we define as:
\begin{align} \label{J tilde definition}
	\tilde{J}(0,T,u,x_0) := \EX [\lambda F(0,T,u,x_0) - \gamma F(0,T,u,x_0)^2]
\end{align}
with $\lambda = 1 + 2\gamma \EX \left[ F(0,T,\hat{u},x_0) \right]$ where $\hat{u}$ is the optimal strategy.

\begin{lemma} \label{Alternative optimization}
	If $\hat{u}$ is an optimal strategy for $J$, it is also an optimal strategy for $\tilde{J}$.
\end{lemma}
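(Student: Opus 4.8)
The plan is to use the standard embedding argument of Zhou and Li \cite{zhou2000continuous}: expand the variance, complete the square in the mean, and exploit that $\lambda$ is \emph{frozen} at its optimal value. To this end, write $m := \EX[F(0,T,\hat{u},x_0)]$ and, for an arbitrary admissible $u \in \mathcal{U}$, abbreviate $\mu_u := \EX[F(0,T,u,x_0)]$. The key observation is that, once $\hat{u}$ is given, $\lambda = 1 + 2\gamma m$ is a fixed constant, not a quantity that varies with $u$.

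First I would rewrite $J$ by inserting the decomposition $\Var(F) = \EX[F^2] - (\EX[F])^2$, so that
\[
	J(0,T,u,x_0) = \mu_u - \gamma\,\EX[F(0,T,u,x_0)^2] + \gamma\,\mu_u^2.
\]
Comparing this with the definition \eqref{J tilde definition} of $\tilde{J}$ and substituting $\lambda - 1 = 2\gamma m$, a short algebraic computation (completing the square in $\mu_u$) yields the key identity
\[
	\tilde{J}(0,T,u,x_0) = J(0,T,u,x_0) + \gamma m^2 - \gamma\,(\mu_u - m)^2 .
\]

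Next I would evaluate this identity at $\hat{u}$. Since $\mu_{\hat{u}} = m$, the squared term vanishes and $\tilde{J}(0,T,\hat{u},x_0) = J(0,T,\hat{u},x_0) + \gamma m^2$. For any competing $u$, the squared term is nonnegative, so $\tilde{J}(0,T,u,x_0) \leq J(0,T,u,x_0) + \gamma m^2$. Combining this with the optimality of $\hat{u}$ for $J$, i.e.\ $J(0,T,u,x_0) \leq J(0,T,\hat{u},x_0)$, gives $\tilde{J}(0,T,u,x_0) \leq \tilde{J}(0,T,\hat{u},x_0)$ for all $u \in \mathcal{U}$, which is precisely the claim.

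There is no serious analytic obstacle here; the argument is purely algebraic once the variance is expanded. The only point requiring care --- and the reason the lemma is phrased conditionally on the existence of $\hat{u}$ --- is that $\lambda$ is defined through the unknown optimizer $\hat{u}$, so it must be treated as a fixed constant throughout and not re-optimized jointly with $u$. This is exactly what makes the completion of the square legitimate: the penalty $-\gamma(\mu_u - m)^2$ attains its maximum value $0$ precisely at the mean level $m$ realized by $\hat{u}$, so the extra term cannot be used to beat $\hat{u}$ under $\tilde{J}$.
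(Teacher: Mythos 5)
Your proof is correct: the identity $\tilde{J}(0,T,u,x_0) = J(0,T,u,x_0) + \gamma m^2 - \gamma(\mu_u - m)^2$ checks out, and evaluating it at $\hat{u}$ and at a generic $u$ immediately yields the claim. This is essentially the paper's own argument: the paper obtains the comparison $\tilde{J}(u)-\tilde{J}(\hat{u}) \leq J(u)-J(\hat{u})$ from the convexity (supporting-hyperplane) inequality for $G(x,y)=y-\gamma x+\gamma y^2$ and concludes by contradiction, whereas you make that inequality exact by completing the square and argue directly — the mathematical content is the same.
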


Consequently, the optimal terminal wealth also coincides for the two problems maximizing $J$ resp. $\tilde{J}$. This result is a slight generalization of Theorem 3.1 in Zhou and Li \cite{zhou2000continuous}, who showed this lemma in the case of $F$ being the identity. We will include the proof in Appendix \ref{proofs} for the reader's convenience.

Since we have a complete market, we optimize with the following three steps. First, we use a Lagrangian approach to find the optimal terminal wealth $\hat{X}$ (using the alternative problem). Second, we derive the optimal strategy, and third, we determine the $\lambda$ and the Lagrangian multiplier $y$ since the optimal terminal wealth and strategy depend on these values initially. In the following, we suppress the dependence on $\lambda$ and $y$ for the sake of simplicity in the notation unless stated otherwise in some proofs, and we use the convention that $(a, a]= \emptyset$ and $[a,b] = \emptyset$ if $b<a$.

\begin{theorem} \label{optimal wealth}
	The optimal terminal wealth $\hat{X}_T$ is given by:
	\begin{align} \label{eq: optimal terminal wealth}
		\hat{X}_T := \begin{cases}
			k_2 + \dfrac{\lambda \tilde{\alpha} - y \xi_T}{2 \gamma \tilde{\alpha}^2} - \dfrac{\alpha}{\tilde{\alpha}} (k_2-k_1-k_0) & \xi_T \in (0,\xi_1^*], \\
			k_2 & \xi_T \in (\tilde{\alpha}\hat{\xi},\xi_2^*], \\
			k_0 + k_1 + \dfrac{\lambda \alpha - y \xi_T}{2 \gamma \alpha^2} & \xi_T \in (\alpha\hat{\xi},\xi_3^*], \\
			0 & \text{else,}
		\end{cases}
	\end{align}
	where $y$ is the Lagrangian multiplier which solves $\EX [\xi_T \hat{X}_T (y)]  = \xi_0 x_0$, $\lambda = 1 + 2\gamma \EX \left[ F(0,T,\hat{u},x_0) \right]$, and
	\begin{align*}
		\hat{\xi} &:= \max \left\{ 0,\dfrac{\lambda-2 \gamma \alpha (k_2-k_1-k_0)}{y} \right\}, \\
		\bar{\xi} &:= \dfrac{\lambda \alpha}{y} + \dfrac{2 \gamma \alpha^2 k_0}{y}, \\
		\tilde{\xi}_1^* &:= \tilde{\alpha} \hat{\xi} - \dfrac{2 \gamma \tilde{\alpha}}{y} \left( \sqrt{ \max \left\{ 0, (\alpha (k_0+k_1) - \alpha_2 k_2)^2 - \alpha^2k_0^2 + \dfrac{\lambda}{\gamma} (\alpha k_1 - \alpha_2 k_2)\right\}} - \tilde{\alpha} k_2 \right), \\
		\xi_1^* &:= \max \left\{0, \min \left\{ \tilde{\alpha} \hat{\xi}, \tilde{\xi}_1^* \right\}\right\}, \\
		\tilde{\xi}_2^* &:= \dfrac{\alpha \lambda}{y} - \dfrac{\gamma \alpha^2 (k_2 - k_1)^2 - 2 \gamma \alpha^2 k_0 (k_2-k_1) + \lambda \alpha k_1}{yk_2}, \\
		\xi_2^* &:= \max \left\{ \tilde{\alpha} \hat{\xi} , \min \left\{ \alpha \hat{\xi}, \tilde{\xi}_2^* \right\}\right\}, \\
		\xi_3^* &:= \max\left\{ \alpha \hat{\xi}, \bar{\xi} - \dfrac{2 \gamma \alpha^2}{y} \left( \sqrt{k_1^2 +k_1 \left(2k_0 + \dfrac{\lambda}{\gamma \alpha}\right)} - k_1 \right)\right\}.
	\end{align*}
	In particular, such $\hat{u},t,y$ exist. Moreover, let $\xi^*$ be defined by:
	\begin{align*}
		\xi^* := \begin{cases}
			\xi_3^* & \text{,if } \xi_3^* > \alpha \hat{\xi}, \\
			\xi_2^* & \text{,if } \xi_3^* = \alpha \hat{\xi}, \xi_2^* > \tilde{\alpha} \hat{\xi}, \\
			\xi_1^* & \text{,if } \xi_3^* = \alpha \hat{\xi}, \xi_2^* = \tilde{\alpha} \hat{\xi}. 
		\end{cases}
	\end{align*}
	Then, it holds that $\xi^*>0$ and $\hat{X}_T >0$ for $\xi \in (0,\xi^*)$ and $\hat{X}_T = 0$ for $\xi > \xi^*$.
\end{theorem}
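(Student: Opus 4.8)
The plan is to read the statement through the pointwise (Lagrangian) characterisation underlying the formula for $\hat{X}_T$: for $\PP$-a.e.\ state the optimal wealth maximises
$$\Phi(x,\xi) := \lambda F(x) - \gamma F(x)^2 - y\xi x, \qquad x\ge 0,$$
where I abbreviate $\xi=\xi_T$. The one structural fact I would isolate first is that the value of the competitor $x=0$ is state-independent: since $F(0)=-\alpha k_0$, we have $\Phi(0,\xi)=-\lambda\alpha k_0-\gamma\alpha^2k_0^2=:c_0$, free of $\xi$.

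The engine of the whole argument is then a monotonicity (down-interval) lemma: $P:=\{\xi>0:\hat{X}_T(\xi)>0\}$ is an interval $(0,\xi^\sharp)$ or $(0,\xi^\sharp]$ with $\xi^\sharp:=\sup P$. To see this, fix $0<\xi'<\xi$ with $x_\xi:=\hat{X}_T(\xi)>0$. Optimality at $\xi$ gives $\Phi(x_\xi,\xi)\ge\Phi(0,\xi)=c_0$, and from $\Phi(x,\xi')-\Phi(x,\xi)=yx(\xi-\xi')$ together with $y>0$ and $x_\xi>0$ I get $\Phi(x_\xi,\xi')>c_0=\Phi(0,\xi')$; hence $x=0$ is strictly suboptimal at $\xi'$, so the maximiser $\hat{X}_T(\xi')$ is positive and $\xi'\in P$. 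Non-degeneracy is then immediate: if $P=\emptyset$ then $\hat{X}_T\equiv0$ and the budget constraint would give $0=\EX[\xi_T\hat{X}_T]=\xi_0x_0=x_0>0$, a contradiction, so $\xi^\sharp>0$. At this point the qualitative claim "$\hat{X}_T>0$ on $(0,\xi^\sharp)$ and $\hat{X}_T=0$ for $\xi>\xi^\sharp$" already holds, and it only remains to identify the explicitly defined $\xi^*$ with $\xi^\sharp$.

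For the identification I would first record the ordering $0\le\xi_1^*\le\tilde\alpha\hat\xi\le\xi_2^*\le\alpha\hat\xi\le\xi_3^*$, immediate from the $\max/\min$ form of the definitions and $\tilde\alpha\le\alpha$; this makes the three active intervals $(0,\xi_1^*]$, $(\tilde\alpha\hat\xi,\xi_2^*]$, $(\alpha\hat\xi,\xi_3^*]$ pairwise disjoint and ordered. Next I would verify that the formula is strictly positive on each: the middle one gives $\hat{X}_T=k_2>0$; when $\hat\xi>0$ a short computation using $y\hat\xi=\lambda-2\gamma\alpha(k_2-k_1-k_0)$ shows the top branch equals $k_2$ at $\xi=\tilde\alpha\hat\xi$, so $\hat{X}_T\ge k_2$ on $(0,\xi_1^*]$; and since $\alpha\hat\xi\le\bar\xi$, the lower branch equals $k_1$ at $\xi=\bar\xi$ and $\xi_3^*\le\bar\xi$, whence $\hat{X}_T>k_1\ge0$ on $(\alpha\hat\xi,\xi_3^*)$.

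Finally I would run the three cases defining $\xi^*$. In each, $\xi^*$ is exactly the right endpoint of the topmost \emph{non-degenerate} active interval, so any $\xi>\xi^*$ avoids all active intervals and yields $\hat{X}_T=0$, giving $\xi^\sharp\le\xi^*$; conversely the positivity just shown furnishes points of $P$ arbitrarily close to $\xi^*$ from below (in the degenerate cases the surviving lower interval carries the positive wealth, and $P\ne\emptyset$ forces its right endpoint to be strictly positive), giving $\xi^\sharp\ge\xi^*$. Thus $\xi^*=\xi^\sharp$, and the down-interval lemma promotes this to the full claim. I expect the real difficulty to be organisational rather than analytic: the monotonicity lemma is precisely what lets me avoid comparing the delicate indifference thresholds $\tilde\xi_1^*,\tilde\xi_2^*$ against $\tilde\alpha\hat\xi,\alpha\hat\xi$ by hand, since the interval structure of $P$ rules out any interior gap between consecutive active regions automatically; the care needed lies in the bookkeeping of the degenerate regions in the second and third cases.
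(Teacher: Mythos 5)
Your overall frame---pointwise maximisation of $\lambda F(x)-\gamma F(x)^2-y\xi x$ over $x\ge 0$, comparison against the state-independent value $c_0=\Phi(0,\xi)$ of the competitor $x=0$---is exactly the paper's approach. Your down-interval lemma is a genuinely nicer ingredient than what the paper does: the observation that $\Phi(x,\xi')-\Phi(x,\xi)=yx(\xi-\xi')$ together with $\Phi(0,\cdot)\equiv c_0$ shows in three lines that $\{\hat{X}_T>0\}$ is a down-set in $\xi$, which is essentially Proposition \ref{prop: * Eigenschaften} (the connectedness of the three active intervals); the paper instead proves that proposition by a page of algebraic comparison of $\tilde{\xi}_1^*,\tilde{\xi}_2^*$ against $\tilde{\alpha}\hat{\xi},\alpha\hat{\xi}$ via reformulated inequalities in $\lambda$. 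That part of your plan is a real simplification and would be worth keeping.

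However, there are two genuine gaps. First, the theorem's main content is the \emph{explicit} formulas for $\tilde{\xi}_1^*$, $\tilde{\xi}_2^*$ and the square-root expression inside $\xi_3^*$, and your plan never derives them. They arise as the relevant roots of the quadratic indifference equations $L(X_i,\xi)=c_0$ on each branch (the paper's Cases 1--3, where one must also check that the discarded root $\xi_+$ lies outside the branch's interval, and handle the negative-discriminant case in the top branch). Your step ``identify the explicitly defined $\xi^*$ with $\xi^\sharp$'' presupposes that the stated $\xi_i^*$ are the correct crossover points between the positive candidate and $x=0$; that is precisely what has to be computed, and the down-interval lemma does not supply it---it only tells you the crossover happens exactly once, not where. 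Verifying positivity of the displayed formula on the active intervals, as you propose, is not a substitute for verifying optimality there. Second, the theorem asserts that $\hat{u}$, $\lambda$, $y$ exist; your non-degeneracy argument invokes the budget constraint $\EX[\xi_T\hat{X}_T]=x_0$, which already assumes a solution $(y,\lambda)$ of the coupled fixed-point system exists. In the paper this is a separate and substantial step (Proposition \ref{y lambda exist}, proved via an intermediate-value argument on the pair $(f_1,f_2)$), and your proposal does not address it at all.
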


\begin{remark} \label{optimal wealth remark}
	\begin{enumerate}[(a)]
		\item The terminal wealth $\hat{X}_T$ denotes the wealth before distributing the wealth to the insurer and the policyholders. The terminal wealth of the insurer is given by $F(\hat{X}_T) = \alpha ( (\hat{X}_T - k_1)_+ - k_0 )-\alpha_2 (\hat{X}_T-k_2)_+$. In particular, in the case of a protected participating life insurance contract, i.e., $\alpha=1$, $k_1=0$, the insurer makes a loss if $\hat{X}_T < k_0$. In the case of a non-protected participating life insurance contract, the insurer cannot make a loss by construction.
		\item If $\alpha_2 = \alpha$ resp. $\tilde{\alpha} = 0$, the surplus over $k_2$ is fully distributed to the policyholders. In this case, it holds that $\tilde{\xi}_1^* = 0$ and $\tilde{\alpha} \hat{\xi} = 0$, i.e., the result is similar with the exception that the first case, $\hat{X}_T = k_2 + \frac{\lambda \tilde{\alpha} - y \xi_T}{2 \gamma \tilde{\alpha}^2} - \frac{\alpha}{\tilde{\alpha}} (k_2-k_1-k_0)$, does not need to be considered. The proofs are similar, but we exclude this case for the ease of exposition.
		\item\label{remark: part extension smaller 0} We can generalize the result for $X_T \in \Real$ when we extend the function $F$ to $\Real_{< 0}$ as $F(X_T) = \alpha (X_T-k_0)$. Then, the optimal terminal wealth has the same structure as before, but with the additional case that $\hat{X}_T = k_0 + \frac{\lambda \alpha - y \xi_T}{2 \gamma \alpha^2}$ if $\xi_T > \bar{\xi}$. Hence, the restriction to $X_T \geq 0$ is not crucial for the structure of the solution. Note that $F(\hat{X}_T (\xi_T = \bar{\xi})) = 0$, and that for $\xi_T > \bar{\xi}$ the function $F$ is linearly decreasing in $\xi_T$. Again, the proofs are similar to before, but we nevertheless restrict to $X_T \geq 0$ to avoid too technical proofs.
		\item The representation of $\hat{X}_T$ also holds in a general complete market (without necessarily using a Black-Scholes market model) if we assume the existence of the {multipliers} $y$ and $\lambda$. However, that these exist is then not clear. {Note that the price density process, $\xi$, in a different complete market would change, but still admit a geometric form. For example, if we replace the Brownian Motion $W$ by a Poisson process $N$ with parameter $\lambda_2$ (in the Black-Scholes model), the price density admits a dynamic of the form (see, e.g., Cont and Tankov \cite[p.303]{cont2004financial}):
		\begin{align*}
			\diff \xi_{t-} &= \xi_{t-} (\lambda_2-\lambda_1- r_t) \diff t - \xi_{t-} \ln \tfrac{\lambda_2}{\lambda_1} \diff N_t,
		\end{align*}
		where $\lambda_1>0$ is an appropriately chosen constant.}
		\item The way how to compute the parameters $\lambda$ and $y$ is given in the proof of Proposition \ref{y lambda exist}. 
	\end{enumerate}
\end{remark}

From the definition formula of $\hat{X}_T$, we can derive the following two propositions, which are proven in Appendix \ref{proofs}:

\begin{proposition} \label{prop: * Eigenschaften}
	It holds that $(0,\xi_1^*] \cup (\tilde{\alpha} \hat{\xi},\xi_2^*] \cup (\alpha \hat{\xi} ,\xi_3^*] = (0,\xi^*]$ with these constants defined in Theorem \ref{optimal wealth}, i.e., the three intervals of the terminal wealth are connected. 
\end{proposition}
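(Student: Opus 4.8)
The plan is to turn the claimed set identity into a handful of endpoint equalities that I can read off from the $\max$/$\min$ structure of the defining constants, and then verify those equalities case by case according to the definition of $\xi^*$. Writing $I_1 := (0,\xi_1^*]$, $I_2 := (\tilde{\alpha}\hat{\xi},\xi_2^*]$ and $I_3 := (\alpha\hat{\xi},\xi_3^*]$, I first record the ordering
\[
0 \le \xi_1^* \le \tilde{\alpha}\hat{\xi} \le \xi_2^* \le \alpha\hat{\xi} \le \xi_3^*,
\]
which is immediate from the definitions: $\xi_1^*=\max\{0,\min\{\tilde{\alpha}\hat{\xi},\tilde{\xi}_1^*\}\}\le\tilde{\alpha}\hat{\xi}$; the truncation $\xi_2^*=\max\{\tilde{\alpha}\hat{\xi},\min\{\alpha\hat{\xi},\tilde{\xi}_2^*\}\}$ lies in $[\tilde{\alpha}\hat{\xi},\alpha\hat{\xi}]$ using $\tilde{\alpha}\le\alpha$ and $\hat{\xi}\ge 0$; and $\xi_3^*=\max\{\alpha\hat{\xi},\cdot\}\ge\alpha\hat{\xi}$. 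Consequently $I_1,I_2,I_3$ appear in left-to-right order, and the only possible gaps between consecutive intervals are $(\xi_1^*,\tilde{\alpha}\hat{\xi}]$ and $(\xi_2^*,\alpha\hat{\xi}]$. Hence the union equals $(0,\xi^*]$ precisely when every gap that precedes a nonempty later interval is empty, where (by the convention $(a,a]=\emptyset$) emptiness of the first gap means $\xi_1^*=\tilde{\alpha}\hat{\xi}$ and of the second means $\xi_2^*=\alpha\hat{\xi}$.

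Next I would split along the three branches defining $\xi^*$. If $\xi_3^*=\alpha\hat{\xi}$ and $\xi_2^*=\tilde{\alpha}\hat{\xi}$, then $I_2=I_3=\emptyset$ and the union is $I_1=(0,\xi_1^*]=(0,\xi^*]$, so nothing is to prove. If $\xi_3^*=\alpha\hat{\xi}$ and $\xi_2^*>\tilde{\alpha}\hat{\xi}$, then $I_3=\emptyset$, $\xi^*=\xi_2^*$, and it remains only to close the first gap, i.e.\ to show $\xi_1^*=\tilde{\alpha}\hat{\xi}$; by the $\max$/$\min$ form of $\xi_1^*$ this is equivalent to $\tilde{\xi}_1^*\ge\tilde{\alpha}\hat{\xi}$ (and is automatic when $\hat{\xi}=0$). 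If $\xi_3^*>\alpha\hat{\xi}$, then $I_3\ne\emptyset$, $\xi^*=\xi_3^*$, and I must close both gaps, i.e.\ show $\xi_2^*=\alpha\hat{\xi}$ and $\xi_1^*=\tilde{\alpha}\hat{\xi}$, equivalently $\tilde{\xi}_2^*\ge\alpha\hat{\xi}$ and $\tilde{\xi}_1^*\ge\tilde{\alpha}\hat{\xi}$.

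Everything therefore reduces to the monotone implications $\xi_2^*>\tilde{\alpha}\hat{\xi}\Rightarrow\tilde{\xi}_1^*\ge\tilde{\alpha}\hat{\xi}$ and $\xi_3^*>\alpha\hat{\xi}\Rightarrow(\tilde{\xi}_2^*\ge\alpha\hat{\xi}\text{ and }\tilde{\xi}_1^*\ge\tilde{\alpha}\hat{\xi})$. Conceptually these encode that the wealth can fall to the zero branch only once, at the far right: if a branch corresponding to a more expensive state (larger $\xi_T$) is still active and positive, the cheaper-state branches must already run all the way up to their junction points $\tilde{\alpha}\hat{\xi}$ and $\alpha\hat{\xi}$, where, by the computation underlying Theorem \ref{optimal wealth}, pieces $1$ and $2$ agree at the common value $k_2$ and pieces $2$ and $3$ agree at $k_2$ as well. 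The cleanest way to see this is the comparative-statics fact that the pointwise Lagrangian maximizer $\hat{X}_T(\xi_T)$ is nonincreasing in $\xi_T$: after concavification the integrand is concave in $X_T$ and carries the linear term $-y\xi_T X_T$, so its maximizer decreases in $\xi_T$, whence $\{\hat{X}_T>0\}$ is automatically an interval of the form $(0,\xi^*]$ and the three explicit branches, being consecutive, must union to it.

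Alternatively one verifies the three implications head-on from the explicit expressions for $\tilde{\xi}_1^*$, $\tilde{\xi}_2^*$, $\bar{\xi}$ and $\hat{\xi}$, by comparing the stated values of $\xi_3^*$ and $\xi_2^*$ against $\alpha\hat{\xi}$ and $\tilde{\alpha}\hat{\xi}$. I expect this direct route to be the main obstacle: one must control the square-root terms in $\tilde{\xi}_1^*$, $\tilde{\xi}_2^*$ and $\xi_3^*$ and the $\max\{0,\cdot\}$ appearing in $\hat{\xi}$, and keep track of the degenerate boundary cases (for instance $\hat{\xi}=0$, where all lower intervals collapse to $\emptyset$ and the claim is immediate). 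By contrast, the reduction in the first two paragraphs is purely a matter of the interval bookkeeping above and should be routine.
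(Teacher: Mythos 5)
Your reduction is sound and matches the paper's: the ordering $0 \le \xi_1^* \le \tilde{\alpha}\hat{\xi} \le \xi_2^* \le \alpha\hat{\xi} \le \xi_3^*$ does follow from the $\max$/$\min$ structure of the definitions, and the proposition is indeed equivalent to the two implications you isolate, which are exactly the statements (i) and (ii) that the paper proves: $\xi_3^* > \alpha\hat{\xi} \Rightarrow \xi_2^* = \alpha\hat{\xi}$ and $\xi_2^* > \tilde{\alpha}\hat{\xi} \Rightarrow \xi_1^* = \tilde{\alpha}\hat{\xi}$. The problem is that you never prove them. The direct route you defer as ``the main obstacle'' is the entire content of the paper's proof: it rewrites $\xi_3^* > \alpha\hat{\xi}$ as the condition $k_1 = 0$ or $\lambda < \gamma\alpha\bigl(k_2^2/k_1 - k_1 - 2k_0\bigr)$, rewrites $\tilde{\xi}_2^* \ge \alpha\hat{\xi}$ as $\lambda k_1 \le \gamma\alpha\bigl(k_2^2 - k_1^2 - 2k_0k_1\bigr)$, and checks the first implies the second (and an analogous computation, using $\tilde{\alpha}+\alpha_2=\alpha$ to simplify the discriminant, for the pair $\tilde{\xi}_1^*$ versus $\tilde{\alpha}\hat{\xi}$). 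Without these computations the proposition is not established.

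Your ``conceptual'' alternative is attractive but as stated it is not correct: no concavification is performed in this paper, and the Lagrangian integrand $L(X,\xi) = \lambda F(X) - \gamma F(X)^2 - y\xi X$ is \emph{not} concave in $X$ (it is only piecewise concave, which is precisely why the three local maximizers must be compared with $X_1 = 0$ and why the cutoffs $\tilde{\xi}_1^*, \tilde{\xi}_2^*, \xi_3^*$ arise). The property you actually need is that $L$ has strictly decreasing differences in $(X,\xi)$ because of the term $-y\xi X$ with $y>0$, so that any selection of global maximizers is nonincreasing in $\xi$ and the positivity set of the maximizer is a down-set; be aware also that this route still requires handling ties between $0$ and a positive maximizer at the interval endpoints (e.g.\ $X_2 = 0$ at $\xi = \bar{\xi}$ when $k_1 = 0$), and that you cannot instead invoke the monotonicity of $\hat{X}_T$ from Proposition \ref{prop: xhat continuous}, since that proposition is proved in the paper using the present one. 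Either repair the monotonicity argument along these lines or carry out the algebraic comparisons; as written, the core of the proof is missing.
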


\begin{proposition} \label{prop: xhat continuous}
	It holds that $\hat{X}_T$ as a function of $\xi$ is continuous and non-increasing in $(0,\xi^*) \cup (\xi^*,\infty)$ with $\xi^*$ as in Theorem \ref{optimal wealth}. If $k_1>0$, then $\hat{X}_T$ is always discontinuous at $\xi^*$.
\end{proposition}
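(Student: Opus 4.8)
The plan is to read off continuity, monotonicity, and the jump directly from the explicit branch representation of $\hat X_T(\xi)$ in Theorem \ref{optimal wealth}: the first and third branch are affine in $\xi$, while the second branch and the ``else'' branch are constant. Throughout I use $y>0$, which is guaranteed by the existence result together with the requirement that $\hat X_T$ be decreasing in $\xi$ (a non-positive $y$ would make the affine branches non-decreasing, incompatible with $\EX[\xi_T\hat X_T]=x_0>0$, $\xi_T>0$, $\hat X_T\ge 0$).

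First I would treat $(0,\xi^*)$. On the first and third branch the $\xi$-derivatives are $-y/(2\gamma\tilde\alpha^2)<0$ and $-y/(2\gamma\alpha^2)<0$, so both are strictly decreasing, while the second and the ``else'' branch are constant; hence $\hat X_T$ is non-increasing on the interior of each region. By Proposition \ref{prop: * Eigenschaften} the three non-trivial regions are contiguous and exhaust $(0,\xi^*]$, so the only interior junctions are $\xi=\tilde\alpha\hat\xi$ (first/second) and $\xi=\alpha\hat\xi$ (second/third). The decisive computation is to evaluate the affine branches there: when $\hat\xi>0$ one has $\lambda-y\hat\xi=2\gamma\alpha(k_2-k_1-k_0)$, and substituting $\xi=\tilde\alpha\hat\xi$ into the first branch and $\xi=\alpha\hat\xi$ into the third branch makes both collapse to the constant value $k_2$ of the middle branch. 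This yields continuity at the junctions; together with branchwise monotonicity it shows $\hat X_T$ is continuous and non-increasing on $(0,\xi^*)$. On $(\xi^*,\infty)$ it is identically $0$, hence continuous and non-increasing there as well.

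For the jump I would argue conceptually and then confirm by the formula. The Lagrangian underlying Theorem \ref{optimal wealth} characterises $\hat X_T(\xi)$ as a pointwise maximiser of $\lambda F(x)-\gamma F(x)^2-y\xi x$ over $x\ge 0$; on $[0,k_1)$ the map $F$ equals the constant $-\alpha k_0$, so this objective is strictly decreasing there (as $y\xi>0$) and the maximiser is never interior to $(0,k_1)$. Thus $\hat X_T(\xi)\in\{0\}\cup[k_1,\infty)$ for every $\xi$. Since $\hat X_T>0$ on $(0,\xi^*)$ by Theorem \ref{optimal wealth}, this forces $\hat X_T\ge k_1$ on $(0,\xi^*)$, so the left limit at $\xi^*$ (which exists by monotonicity) satisfies $\lim_{\xi\uparrow\xi^*}\hat X_T\ge k_1>0$, whereas the right limit is $0$ because $\hat X_T\equiv 0$ on $(\xi^*,\infty)$. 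Hence $\hat X_T$ jumps at $\xi^*$ whenever $k_1>0$. As a check, evaluating the branch that contains $\xi^*$ gives the explicit left limits $\sqrt{k_1\big(k_1+2k_0+\lambda/(\gamma\alpha)\big)}$ if $\xi^*=\xi_3^*$, the value $k_2$ if $\xi^*=\xi_2^*$, and, when $\xi_1^*=\tilde\xi_1^*<\tilde\alpha\hat\xi$, the value $R/\tilde\alpha$ with $R:=\sqrt{\max\{0,(\alpha(k_0+k_1)-\alpha_2k_2)^2-\alpha^2k_0^2+\tfrac{\lambda}{\gamma}(\alpha k_1-\alpha_2k_2)\}}$, each of which is strictly positive for $k_1>0$.

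The main obstacle is the bookkeeping hidden in these evaluations: every $\xi_i^*$ is defined through nested $\max$/$\min$ operations, so before any cancellation one must determine which inner argument is active in the case at hand — for instance that $\xi_3^*>\alpha\hat\xi$ selects the square-root expression over $\alpha\hat\xi$, so that $y\xi_3^*$ can be rewritten using $y\bar\xi=\lambda\alpha+2\gamma\alpha^2k_0$, and that $\tilde\xi_1^*<\tilde\alpha\hat\xi$ forces $R>\tilde\alpha k_2>0$. One must also dispose of the degenerate configurations, in particular $\hat\xi=0$, where the first two regions collapse and only the third branch survives, so that $\xi^*=\xi_3^*$ and the corresponding computation applies verbatim; similar care is needed when a junction point coincides with $\xi^*$ and an adjacent region is empty. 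Only after this case analysis do the clean closed forms above emerge, confirming the conceptual argument.
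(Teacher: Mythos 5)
Your proof is correct. The continuity part is the same as the paper's: both reduce the problem (via Proposition \ref{prop: * Eigenschaften}) to the two junctions $\tilde{\alpha}\hat{\xi}$ and $\alpha\hat{\xi}$ and verify by substitution, using $y\hat{\xi}=\lambda-2\gamma\alpha(k_2-k_1-k_0)$ when $\hat{\xi}>0$, that the affine branches meet the constant branch at $k_2$; your explicit treatment of the branchwise slopes and of the degenerate case $\hat{\xi}=0$ only makes explicit what the paper leaves as ``follows immediately by plugging in the values.'' For the discontinuity at $\xi^*$ you take a genuinely different route: the paper distinguishes $\xi^*=\xi_3^*$ from $\xi^*\neq\xi_3^*$ and computes $\hat{X}_T(\xi_3^*)=\sqrt{k_1(k_1+2k_0+\lambda/(\gamma\alpha))}$ explicitly, which forces it to also verify $\lambda>-2\gamma\alpha k_0-\gamma\alpha k_1$ via the bound $\lambda>C\geq-2\gamma\alpha k_0$ from the existence proof, whereas you observe structurally that the Lagrangian objective is strictly decreasing on $[0,k_1)$, so the pointwise maximiser never lands in $(0,k_1)$, hence $\hat{X}_T\geq k_1$ wherever it is positive and the left limit at $\xi^*$ is at least $k_1>0$ while the right limit is $0$. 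Your argument is more elementary and avoids both the square-root computation and the appeal to the lower bound on $\lambda$; the paper's computation buys the exact value of the jump size in the $\xi^*=\xi_3^*$ case, which you recover anyway in your confirmation step. No gaps.
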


Therefore, $\hat{X}_T$ as a function of $\xi_T$ is non-increasing, taking the value $0$ for $\xi_T > \xi^*$ and one possible discontinuity point. See the black lines in Figure \ref{fig: Optimal Value} for a visualization.

{In the following proposition, we give an alternative formulation for our optimization problem. Again, we give the proof in Appendix \ref{proofs}.}

\begin{proposition} \label{prop: equivalent-like problem}
	{Let $\tilde{L}(x) = {\scriptsize\begin{cases}
				\gamma \alpha^2 k_0^2, & \text{if }x<k_1, \\
				\gamma \alpha^2 (x-k_1-k_0)^2, & \text{if }k_1 \leq x < k_2, \\
				\gamma \tilde{\alpha}^2 (x-k_2+\tfrac{\alpha}{\tilde{\alpha}}(k_2-k_1-k_0))^2, & \text{if }x \geq k_2.
		\end{cases}}$ and consider the optimization problem $\min_{u \in \mathcal{U}} \EX [\tilde{L}(X_T)]$ under the constraint $\EX[\alpha(X_T(\lambda)-k_1) \1_{k_1 \leq X_T(\lambda) < k_2} +\linebreak \tilde{\alpha}(X_T(\lambda)-k_2+\tfrac{\alpha}{\tilde{\alpha}}(k_2-k_1)) \1_{X_T(\lambda) \geq k_2}] = \alpha k_0$ and the budget constraint $\EX [\xi_T X_T (y)]  = \xi_0 x_0$, where $\lambda$ and $y$ are the respective Lagrangian multipliers.}
	
	{Then, the optimal terminal wealth is identical to the one of our main problem, i.e., given by \eqref{eq: optimal terminal wealth}.}
\end{proposition}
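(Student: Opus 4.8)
The plan is to show that, after two pointwise algebraic substitutions, the Lagrangian of the constrained target problem collapses—up to an additive constant and the sign convention of the budget multiplier—to exactly the objective whose state-by-state maximization was already carried out in the proof of Theorem \ref{optimal wealth}. Since the pointwise optimization then becomes literally the same problem, its optimizer must again be \eqref{eq: optimal terminal wealth}. So the whole argument is a reduction to the work already done, plus careful bookkeeping of the multipliers.

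First I would record the two identities on which everything rests. A direct check on the three regions $\{x<k_1\}$, $\{k_1\le x<k_2\}$, $\{x\ge k_2\}$ gives $\tilde{L}(x)=\gamma F(x)^2$ for every $x\ge 0$: on each region $F$ is affine and $\tilde{L}$ is precisely $\gamma$ times its square (for instance on $\{x\ge k_2\}$ one has $F(x)=\tilde{\alpha}(x-k_2+\tfrac{\alpha}{\tilde{\alpha}}(k_2-k_1-k_0))$, whose square times $\gamma$ is the stated $\tilde{L}$). The same piecewise comparison shows that the integrand $H(x)$ appearing in the equality constraint satisfies $H(x)=F(x)+\alpha k_0$ on all three regions, so the constraint $\EX[H(X_T)]=\alpha k_0$ is equivalent to $\EX[F(X_T)]=0$.

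Next I would form the Lagrangian of the target problem, with multiplier $\lambda$ for the equality constraint and $y$ for the budget, and substitute. Using the two identities, the constant $\alpha k_0$ cancels and the Lagrangian reduces to $\EX[\gamma F(X_T)^2-\lambda F(X_T)-y\xi_T X_T]+y\xi_0 x_0$. Minimizing this state by state over $X_T\ge 0$ is exactly the negative of maximizing $\lambda F(X_T)-\gamma F(X_T)^2-y\xi_T X_T$, the pointwise problem solved in Theorem \ref{optimal wealth}, with the budget multiplier entering with the opposite sign. Because $F$ is affine on each region, the reduced integrand is convex in $x$ there, so each region contributes one interior first-order candidate together with its endpoints, and the assembly of these candidates into \eqref{eq: optimal terminal wealth} proceeds word for word as before.

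The main obstacle is not the reduction but the bookkeeping around it. I would have to confirm that the region-by-region comparison of candidate values that selects the three branches, the collapsing-interval conventions, and the degenerate case $\tilde{\alpha}=0$ from Remark \ref{optimal wealth remark} all carry over verbatim under the passage from a maximization to a constrained minimization, so that no spurious branch is introduced. The more delicate point is tracking the multipliers: one must check that the two scalar conditions determining $(\lambda,y)$ in the target problem—the budget constraint together with $\EX[F(X_T)]=0$—pin $(\lambda,y)$ down so that the same formula \eqref{eq: optimal terminal wealth} is reproduced, with $\lambda$ and $y$ playing the roles of the mean-variance multiplier and the budget multiplier respectively (the sign flip of $y$ being immaterial, as $y$ is only fixed through the budget). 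Establishing that this pair is determined consistently, rather than merely that the solution has the same functional form, is where I would expect the argument to require the most care.
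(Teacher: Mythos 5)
Your proposal is correct and follows essentially the same route as the paper: both arguments show that the Lagrangian of the constrained target problem coincides pointwise with the Lagrangian \eqref{eq: L function} of the main problem and then reuse the state-by-state maximization from Theorem \ref{optimal wealth}; you organize the verification via the identities $\tilde{L}(x)=\gamma F(x)^2$ and (constraint integrand) $=F(x)+\alpha k_0$, whereas the paper expands the Lagrangian region by region, but these are the same computation. Your closing concern about pinning down $(\lambda,y)$ is reasonable but goes beyond what the paper claims or proves: the proposition asserts only that the optimal terminal wealth has the form \eqref{eq: optimal terminal wealth}, and the paper explicitly notes that $\lambda$ has a different definition and interpretation in the target formulation.
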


{Note that the loss function $\tilde{L}$ is continuous and non-decreasing in $x$, and that the definition and therefore the interpretation of the parameter $\lambda$ is different than for mean-variance. However, Proposition \ref{prop: equivalent-like problem} still gives an equivalent formulation. As already stated in the introduction, this formulation is a 2-target problem. Depending on the region of the terminal wealth, we want to minimize the distance to different exogenously given target values ($k_1+k_0$ if the terminal wealth $X \in [k_1,k_2)$ respectively to $k_2-\tfrac{\alpha}{\tilde{\alpha}}(k_2-k_1-k_0)$ if $X \geq k_2$). Moreover, the first constraint entails that one can only reduce mass in one region when one simultaneously increases the mass in another region where the reduction and the enlargement are, in general, not of equal size. Especially, the weight of the region $[k_2,\infty)$ is heavily influenced by the participation rate $\alpha_2$ through the pre-factor $\tilde{\alpha} (=\alpha-\alpha_2)$.}

\begin{myproof}[Proof of Theorem \ref{optimal wealth}]
We use the Lagrangian multiplier method to prove this theorem. Therefore, choose $\lambda>0$ and $y>0$ as in Proposition \ref{y lambda exist} in the appendix and we define the Lagrangian function $L$ for $X\geq0$ as
\begin{align}
	L(X,y) :=& \ \lambda F(0,T,u,x_0) - \gamma F(0,T,u,x_0)^2 - y \xi X \label{eq: L function}\\
	=& \ \lambda \left[\alpha\left((X - k_1)_+-k_0\right) -\alpha_2 (X-k_2)_+\right] - \gamma \left[\alpha\left((X - k_1)_+-k_0\right) -\alpha_2 (X-k_2)_+ \right]^2 - y \xi X \notag\\
	=& \ \lambda \left[\alpha(X - k_1-k_0) \1_{k_1 \leq X < k_2} + \tilde{\alpha} (X-k_2) \1_{X \geq k_2} + \alpha (k_2-k_1-k_0) \1_{X \geq k_2} - \alpha k_0 \1_{X < k_1} \right] \notag\\
	&- \gamma \left[\alpha(X - k_1-k_0) \1_{k_1 \leq X < k_2} + \tilde{\alpha} (X-k_2) \1_{X \geq k_2} + \alpha (k_2-k_1-k_0) \1_{X \geq k_2} -\alpha k_0 \1_{X < k_1}\right]^2 \notag\\
	&-  y \xi X, \notag\\
	=& \ \lambda \alpha(X - k_1-k_0) \1_{k_1 \leq X < k_2} + \lambda \left[ \tilde{\alpha} (X-k_2) + \alpha (k_2-k_1-k_0) \right] \1_{X \geq k_2} - \lambda \alpha k_0 \1_{X < k_1} \notag\\
	&- \gamma \alpha^2(X - k_1-k_0)^2 \1_{k_1 \leq X < k_2} - \gamma \left[\tilde{\alpha} (X-k_2) + \alpha (k_2-k_1-k_0) \right]^2  \1_{X \geq k_2} - \gamma \alpha^2 k_0^2 \1_{X < k_1} \notag\\
	&- y \xi X \notag\\
	=& \begin{cases}
		-\lambda \alpha k_0 - \gamma \alpha^2 k_0^2 - y \xi X & X \in [0,k_1), \\
		\lambda \alpha (X-k_1-k_0) - \gamma \alpha^2 (X-k_1-k_0)^2 - y \xi X & X \in [k_1,k_2), \\
		\lambda \left[ \tilde{\alpha} (X-k_2) + \alpha (k_2-k_1-k_0) \right] - \gamma \left[\tilde{\alpha} (X-k_2) + \alpha (k_2-k_1-k_0) \right]^2 - y \xi X & X \in [k_2,\infty),
	\end{cases} \notag
\end{align}
where we suppress the $\omega$, write $X$ instead of $X_T$, $\xi$ instead of $\xi_T$, and use $y$ as the multiplier. \\
The following paragraph considers $L$ as a function of $X$. Obviously, $L$ is not smooth in $k_1$ and $k_2$, but it is continuous in $[0,\infty)$. Hence, we optimize the piecewise concave function $L$ in the regions $[0,k_1]$, $[k_1,k_2]$, and $[k_2,\infty)$ separately and compare afterwards the maximal points. Now, for the optimization, we get:
\begin{align*}
	\dfrac{\partial L}{\partial X} = \begin{cases}
		-y \xi & X \in (0,k_1), \\
		\lambda \alpha -2\gamma \alpha^2 (X-k_1-k_0) - y \xi & X \in (k_1,k_2), \\
		\lambda \tilde{\alpha}-2\gamma \tilde{\alpha} \left[\tilde{\alpha}(X-k_2)+\alpha(k_2-k_1-k_0)\right] - y \xi & X \in (k_2,\infty).
	\end{cases}
\end{align*}
Hence, it follows that we get the following possible maximal points (if they are in the respective interval): $X_1=0$, $X_2 = k_0 + k_1 + \frac{\lambda \alpha-y \xi}{2\gamma\alpha^2}$, and $X_3 = k_2 + \frac{\lambda \tilde{\alpha} -y \xi}{2 \gamma \tilde{\alpha}^2} - \frac{\alpha}{\tilde{\alpha}} (k_2-k_1-k_0)$. To have $X_2$ and $X_3$ in the correct interval, we get the following conditions (since $\xi >0$):
\begin{align}
	X_2 \in [k_1,k_2] &&\Leftrightarrow&&& \xi \in \left[\dfrac{\lambda \alpha - 2\gamma \alpha^2(k_2-k_1-k_0)}{y},\dfrac{\lambda \alpha}{y} + \dfrac{2 \gamma \alpha^2 k_0}{y}\right] \notag \\
	&&\Leftrightarrow&&& \xi \in \left[\alpha \hat{\xi}, \bar{\xi} \right], \label{equation: condition X2} \\
	X_3 \in [k_2,\infty) &&\Leftrightarrow&&& \xi \in \left(0, \dfrac{\lambda\tilde{\alpha}-2\gamma \tilde{\alpha} \alpha (k_2-k_1-k_0)}{y} \right] = \left( 0, \tilde{\alpha} \hat{\xi}\right]. \label{equation: condition X3}
\end{align}
Moreover, for the maximum of $L$, we must consider the different boundary values of the corresponding intervals. The first boundary value $k_1$ is always dominated by $X_1$ due to $L$ being non-increasing in $[0,k_1]$. The second boundary value $k_2$ is dominated by $X_2$ or by $X_3$ if $\xi$ is in one of the intervals from \eqref{equation: condition X2} or \eqref{equation: condition X3}. If $\xi > \bar{\xi}$, then $L$ is non-increasing in $[0,\infty)$ since then $\frac{\partial L}{\partial X} \leq 0$ for all $X \geq 0$ and hence $k_2$ is dominated by $X_1$. If $\xi \in [\tilde{\alpha} \hat{\xi},\alpha \hat{\xi}]$, then $L$ has a local maximum in $k_2$ since $\frac{\partial L}{\partial X} \geq 0$ for $X \in (k_1,k_2)$ and $\frac{\partial L}{\partial X} \leq 0$ for $X>k_2$. Hence, we add 
\begin{align} \label{equation: condition X4}
	X_4 = k_2 \quad \text{for} \quad \xi \in [\tilde{\alpha} \hat{\xi},\alpha \hat{\xi}]
\end{align}
to the list of potential maximum points. We note that the intervals of $\xi$ for $X_2$, $X_3$, and $X_4$ are disjoint except the lower (resp. upper) boundary points of $X_2$ (resp. $X_3$) with the upper (resp. lower) boundary point of $X_4$. However, in this case $X_2$ (resp. $X_3$) coincides with $X_4$. It is remarkable that due to the disjointness of the related intervals for $X_2$, $X_3$, and $X_4$, we do not have to compare these values themselves with each other (for the potential maximum of $L$). Thus, we replace the closed intervals for $\xi$ by left-open, right-closed intervals and denote by $X^*$ the combined solution of $X_2$, $X_3$, and $X_4$ with $X^*=0$ for $\xi > \bar{\xi}$, i.e., 
\begin{align*}
	X^* := \begin{cases}
		k_2 + \dfrac{\lambda \tilde{\alpha} - y \xi}{2 \gamma \tilde{\alpha}^2} - \dfrac{\alpha}{\tilde{\alpha}} (k_2-k_1-k_0) & \xi \in (0,\tilde{\alpha} \hat{\xi}], \\
		k_2 & \xi \in (\tilde{\alpha}\hat{\xi},\alpha \hat{\xi}], \\
		k_0 + k_1 + \dfrac{\lambda \alpha - y \xi}{2 \gamma \alpha^2} & \xi \in (\alpha\hat{\xi},\bar{\xi}], \\
		0 & \xi \in (\bar{\xi},\infty).
	\end{cases}
\end{align*}
Note that $X^*$ takes the same values as $\hat{X}_T$ in \eqref{eq: optimal terminal wealth}, but the intervals are not truncated. Moreover, we observe that $X^*$ is continuous in $\xi$ on $(0,\bar{\xi}]$. Therefore, we only have to compare $L(X^*,y)$, $y$ fixed, with $L(X_1,y)= -\gamma \alpha^2 k_0^2 - \lambda \alpha k_0$. Thus, we compare these values in the following separately for the three cases, i.e., depending on $\xi$, which value out of $\{X_2,X_3,X_4\}$ $X^*$ attains. Note that we consider, from now on, $L$ as a function of $\xi$. To emphasize this, we write $L(X,y,\xi)$.

\underline{Case 1:} $\xi \in (\alpha \hat{\xi},\bar{\xi}]$, i.e., $X^* = X_2 \in [k_1,k_2]$:\\
Then, we get:
\begin{align} \label{LX_2 formula of xi}
	L(X_2,y,\xi) &= \lambda \alpha \dfrac{\lambda \alpha-y \xi}{2\gamma\alpha^2} - \gamma \alpha^2 \left( \dfrac{\lambda \alpha-y \xi}{2\gamma\alpha^2} \right)^2 - y \xi (k_0 + k_1) - y \xi \dfrac{\lambda \alpha-y \xi}{2\gamma\alpha^2} \notag \\
	&= \dfrac{y^2}{4 \gamma \alpha^2} \xi^2 - y\left(k_0 + k_1 + \dfrac{\lambda}{2\gamma \alpha}\right) \xi + \dfrac{\lambda^2}{4 \gamma}.
\end{align}
Then $L(X_2,y,\xi) \geq -\gamma \alpha^2 k_0^2 - \lambda \alpha k_0 = L(X_1,y,\xi)$ if $\xi \leq \xi_-$ or $\xi \geq \xi_+$ with 
\begin{align*}
	\xi_{\pm} &= \dfrac{2 \gamma \alpha^2}{y^2} \left( y\left(k_0+k_1+\dfrac{\lambda}{2\gamma \alpha}\right) \pm \sqrt{y^2\left(k_0 + k_1 + \dfrac{\lambda}{2\gamma \alpha}\right)^2 - \dfrac{y^2}{\gamma \alpha^2} \left( \dfrac{\lambda^2}{4\gamma} + \gamma \alpha^2 k_0^2 + \lambda \alpha k_0\right)} \right) \\
	&= \dfrac{\lambda \alpha}{y} + \dfrac{2\gamma \alpha^2 (k_0 + k_1)}{y} \pm \dfrac{2 \gamma \alpha^2}{y^2} \sqrt{y^2 k_1^2 +\dfrac{y^2k_1\lambda}{\gamma \alpha} + 2 y^2 k_0 k_1} \\
	&= \dfrac{\lambda \alpha}{y} + \dfrac{2\gamma \alpha^2 (k_0 + k_1)}{y} \pm \dfrac{2 \gamma \alpha^2}{y} \sqrt{k_1^2 + k_1 \left(2k_0 + \dfrac{\lambda}{\gamma \alpha}\right)} \\
	&= \bar{\xi} + \dfrac{2 \gamma \alpha^2}{y} \left( \pm \sqrt{k_1^2 + k_1 \left(2k_0 + \dfrac{\lambda}{\gamma \alpha}\right)} + k_1 \right).
\end{align*}
We notice that $\xi_+ > \bar{\xi}$ and $\max \left\{\alpha \hat{\xi}, \xi_-\right\} = \xi_3^*$ where $\xi_3^*$ is defined as in Theorem \ref{optimal wealth}. If $\xi_- < \alpha \hat{\xi}$, $X_2$ is not in the interval $[k_1,k_2)$ which entails that $(\alpha \hat{\xi},\xi_3^*] = \emptyset$. Otherwise, it holds that $\xi_- = \xi_3^*$ by which we get \eqref{eq: optimal terminal wealth} for this case. 

\underline{Case 2:} $\xi \in (0,\tilde{\alpha} \hat{\xi}]$, i.e., $X^* = X_3 \in [k_2,\infty)$:\\
Without loss of generality, we assume that $\hat{\xi}>0$, since otherwise $(0,\tilde{\alpha} \hat{\xi}] = \emptyset$ and there is nothing to show. Then it holds:
\begin{align} \label{LX_3 formula of xi}
	L(X_3,y,\xi) =& \ \lambda \left( \dfrac{\lambda \tilde{\alpha}-y \xi}{2\gamma\tilde{\alpha}} - \tilde{\alpha} \dfrac{\alpha}{\tilde{\alpha}} (k_2-k_1-k_0) + \alpha (k_2-k_1-k_0) \right) \notag \\
	&- \gamma \left( \tilde{\alpha} \dfrac{\lambda \tilde{\alpha}-y \xi}{2\gamma\tilde{\alpha}^2} - \tilde{\alpha} \dfrac{\alpha}{\tilde{\alpha}} (k_2-k_1-k_0) + \alpha (k_2-k_1-k_0) \right)^2 \notag \\
	&- y \xi k_2 - y \xi \dfrac{\lambda \tilde{\alpha}-y \xi}{2\gamma\tilde{\alpha}^2} + y \xi \dfrac{\alpha}{\tilde{\alpha}} (k_2-k_1-k_0) \notag \\
	=& \ \dfrac{y^2}{4 \gamma \tilde{\alpha}^2} \xi^2 - y\left(k_2 - \dfrac{\alpha}{\tilde{\alpha}} (k_2-k_1-k_0) + \dfrac{\lambda}{2\gamma \tilde{\alpha}}\right) \xi + \dfrac{\lambda^2}{4 \gamma}.
\end{align}
Note that $k_2 - \frac{\alpha}{\tilde{\alpha}} (k_2-k_1-k_0) = \frac{\alpha}{\tilde{\alpha}} (k_0+k_1) - \frac{\alpha_2}{\tilde{\alpha}} k_2$. First we assume that the discriminant, i.e., the term under the square root in the following equation, is non-negative. Then, we have $L(X_3,y,\xi) \geq -\gamma \alpha^2 k_0^2 - \lambda \alpha k_0 = L(X_1,y,\xi)$ if $\xi \leq \xi_-$ or $\xi \geq \xi_+$ with
\begin{align*}
	\xi_{\pm} =& \dfrac{2 \gamma \tilde{\alpha}^2}{y^2} \left( y\left(k_2- \dfrac{\alpha}{\tilde{\alpha}} (k_2-k_1-k_0)+\dfrac{\lambda}{2\gamma \tilde{\alpha}}\right) \right) \\
	&\pm \dfrac{2 \gamma \tilde{\alpha}^2}{y^2} \sqrt{y^2\left(k_2 - \dfrac{\alpha}{\tilde{\alpha}} (k_2-k_1-k_0) + \dfrac{\lambda}{2\gamma \tilde{\alpha}}\right)^2 - \dfrac{y^2}{\gamma \tilde{\alpha}^2} \left(\dfrac{\lambda^2}{4 \gamma} + \gamma \alpha^2 k_0^2 + \lambda \alpha k_0 \right) } \\
	=& \dfrac{2 \gamma \tilde{\alpha}^2}{y^2} \left( y\left(k_2- \dfrac{\alpha}{\tilde{\alpha}} (k_2-k_1-k_0)+\dfrac{\lambda}{2\gamma \tilde{\alpha}}\right) \right) \\
	&\pm \dfrac{2 \gamma \tilde{\alpha}^2}{y^2} \sqrt{y^2\left(\frac{\alpha^2}{\tilde{\alpha}^2} (2k_0k_1 + k_1^2) + \frac{\alpha_2^2}{\tilde{\alpha}^2} k_2^2 - 2\frac{\alpha \alpha_2}{\tilde{\alpha}^2} (k_0+k_1) k_2 + \frac{\lambda}{\gamma \tilde{\alpha}^2} (\alpha k_1 - \alpha_2 k_2) \right)} \\
	=& \dfrac{\lambda\tilde{\alpha}}{y} - \dfrac{2\gamma \alpha \tilde{\alpha}(k_2-k_1-k_0)}{y} + \dfrac{2 \gamma \tilde{\alpha}^2 k_2}{y}	\\
	&\pm \dfrac{2\gamma \tilde{\alpha}}{y} \sqrt{\alpha^2 (2k_0k_1 + k_1^2) + \alpha_2^2 k_2^2 - 2\alpha \alpha_2 (k_0+k_1) k_2 + \frac{\lambda}{\gamma} (\alpha k_1 - \alpha_2 k_2) } \\
	=& \tilde{\alpha} \hat{\xi}+ \dfrac{2\gamma \tilde{\alpha}}{y} \left(\pm \sqrt{ (\alpha (k_0 + k_1) - \alpha_2 k_2)^2 - \alpha^2 k_0^2 + \frac{\lambda}{\gamma} (\alpha k_1 - \alpha_2 k_2) } + \tilde{\alpha} k_2 \right).
\end{align*}
Again, we notice that $\xi_+ > \tilde{\alpha} \hat{\xi}$ and since the discriminant is non-negative, we get $\xi_- = \tilde{\xi}_1^*$ where we can ignore this case if $\xi_- \leq 0$ due to the assumption that $\xi$ has to be in the interval $(0,\tilde{\alpha} \hat{\xi}]$. Now, if the discriminant is negative, then $L(X_3,y,\xi) \geq -\gamma \alpha^2 k_0^2 - \lambda \alpha k_0$ for all $\xi$ and $\tilde{\xi}_1^* > \tilde{\alpha} \hat{\xi}$. Hence, $X_3$ is optimal for all $\xi \in ( 0, \tilde{\alpha} \hat{\xi}]$ and it holds that $\xi_1^*= \tilde{\alpha} \hat{\xi}$ which implies \eqref{eq: optimal terminal wealth} for this case. 

\underline{Case 3:} $\xi \in (\tilde{\alpha} \hat{\xi},\alpha \hat{\xi}]$, i.e., $X^* = X_4 = k_2$:\\
Now, we get:
\begin{align} \label{LX_4 formula of xi}
	L(X_4,y,\xi) &= \lambda \alpha (k_2-k_1-k_0) - \gamma \alpha^2 (k_2-k_1-k_0)^2 - y \xi k_2.
\end{align}
Then  $L(X_4,y,\xi) \geq -\gamma \alpha^2 k_0^2 -\lambda \alpha k_0 = L(X_1,y,\xi)$ if 
\begin{align*}
	\xi \leq \dfrac{\lambda \alpha}{y} - \dfrac{\gamma \alpha^2(k_2-k_1)^2-2\gamma \alpha^2 k_0 (k_2-k_1)+\lambda\alpha k_1}{y k_2} = \tilde{\xi}_2^*.
\end{align*}
Hence, it follows that $X_4$ is optimal for $\xi \in (\tilde{\alpha} \hat{\xi},\tilde{\xi}_2^*]$ which implies \eqref{eq: optimal terminal wealth} in this case as well. Hence, \eqref{eq: optimal terminal wealth} holds in total. 

Finally, we must have that $\xi^*>0$ since otherwise $(0,\xi_1^*] \cup (\tilde{\alpha} \hat{\xi},\xi_2^*] \cup (\alpha \hat{\xi} ,\xi_3^*] = \emptyset$ which implies that $\hat{X}_T \equiv 0$. Hence, the budget constraint $\EX [\xi_T \hat{X}_T (y)]  = \xi_0 x_0$ cannot be fulfilled, which is a contradiction to Proposition \ref{y lambda exist} whose proof is given below. The property of $\hat{X}_T$ follows directly from Proposition \ref{prop: * Eigenschaften} and $y < \infty$.
\end{myproof}

We deduce from Theorem \ref{optimal wealth}:

\begin{corollary}
	The optimal payoff of the insurer is given by:
	\begin{align*}
		F(\hat{X}_T) = \begin{cases}
			\dfrac{\lambda \tilde{\alpha} - y \xi_T}{2 \gamma \tilde{\alpha}}  & \xi_T \in (0,\xi_1^*], \\
			\alpha (k_2-k_1-k_0) & \xi_T \in (\tilde{\alpha}\hat{\xi},\xi_2^*], \\
			\dfrac{\lambda \alpha - y \xi_T}{2 \gamma \alpha} & \xi_T \in (\alpha\hat{\xi},\xi_3^*], \\
			-\alpha k_0 & \text{else,}
		\end{cases}
	\end{align*}
	where $\hat{\xi}$, $\xi_1^*$, $\xi_2^*$, and $\xi_3^*$ are defined as in Theorem \ref{optimal wealth}.
\end{corollary}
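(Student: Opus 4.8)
The plan is to prove this corollary by direct substitution of the formula \eqref{eq: optimal terminal wealth} for $\hat X_T$ into the piecewise definition of $F$, handling each of the three non-trivial regions of $\xi_T$ separately and verifying in each case which branch of $F$ is active. The key preliminary step, which I would establish first by appealing to the interval conditions \eqref{equation: condition X2}, \eqref{equation: condition X3}, and \eqref{equation: condition X4} from the proof of Theorem \ref{optimal wealth}, is that on each region $\hat X_T$ lands in the expected part of its domain: on $(0,\xi_1^*] \subseteq (0,\tilde\alpha\hat\xi]$ (using $\xi_1^* = \max\{0,\min\{\tilde\alpha\hat\xi,\tilde\xi_1^*\}\} \leq \tilde\alpha\hat\xi$) one has $\hat X_T \geq k_2$ by \eqref{equation: condition X3}; on $(\tilde\alpha\hat\xi,\xi_2^*]$ one has $\hat X_T = k_2$; and on $(\alpha\hat\xi,\xi_3^*] \subseteq (\alpha\hat\xi,\bar\xi]$ one has $k_1 \leq \hat X_T < k_2$ by \eqref{equation: condition X2}. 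The inclusion $\xi_3^* \leq \bar\xi$ here follows because $\sqrt{k_1^2 + k_1(2k_0 + \tfrac{\lambda}{\gamma\alpha})} \geq k_1$ whenever $k_0,\lambda \geq 0$, so that the relevant branch of $F$ is indeed the middle linear piece.

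Then the three substitutions are routine algebraic identities. On $(0,\xi_1^*]$ I would use the branch $F(X) = \tilde\alpha(X - k_2) + \alpha(k_2 - k_1 - k_0)$; inserting $\hat X_T - k_2 = \tfrac{\lambda\tilde\alpha - y\xi_T}{2\gamma\tilde\alpha^2} - \tfrac{\alpha}{\tilde\alpha}(k_2 - k_1 - k_0)$, the two terms $\pm\alpha(k_2-k_1-k_0)$ cancel, leaving $\tfrac{\lambda\tilde\alpha - y\xi_T}{2\gamma\tilde\alpha}$. On $(\tilde\alpha\hat\xi,\xi_2^*]$, evaluating the same branch at $X=k_2$ gives $\alpha(k_2-k_1-k_0)$ immediately. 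On $(\alpha\hat\xi,\xi_3^*]$ I would use $F(X) = \alpha(X - k_1 - k_0)$; since $\hat X_T - k_1 - k_0 = \tfrac{\lambda\alpha - y\xi_T}{2\gamma\alpha^2}$, this yields $\tfrac{\lambda\alpha - y\xi_T}{2\gamma\alpha}$. Finally, on the complementary region $\hat X_T = 0$, so $F(0) = -\alpha k_0$, either directly from the first branch of $F$ when $k_1>0$, or from the middle branch evaluated at $0$ when $k_1 = 0$; both give $-\alpha k_0$.

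Since every step reduces to an algebraic identity, there is no genuine obstacle, and what little care is required lies entirely in the bookkeeping of which piece of $F$ is active on each region. This is exactly what the interval inclusions in the first step guarantee, so I would present the verification of those inclusions first and then record the three one-line computations above.
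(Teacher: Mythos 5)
Your proposal is correct and matches the paper's proof, which simply plugs the formula \eqref{eq: optimal terminal wealth} into $F$ and notes the result follows by direct calculation. The only difference is that you make explicit the (routine) verification that $\hat X_T$ lies in the correct branch of $F$ on each $\xi_T$-region, which the paper leaves implicit.
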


\begin{proof}
	This proof follows immediately by simple calculations when we plug \eqref{eq: optimal terminal wealth} into the function $F$.
\end{proof}

In the following Figure \ref{fig: Optimal Value}, we derive the shape of the optimal value $\hat{X}_T$ and the corresponding wealth of the insurance company as a function of $\xi_T$.

\begin{figure}[!htb]
	\centering
	\begin{minipage}{0.48\textwidth}
		\includegraphics[trim= 1mm 6mm 10mm 8mm, clip,width=\textwidth]{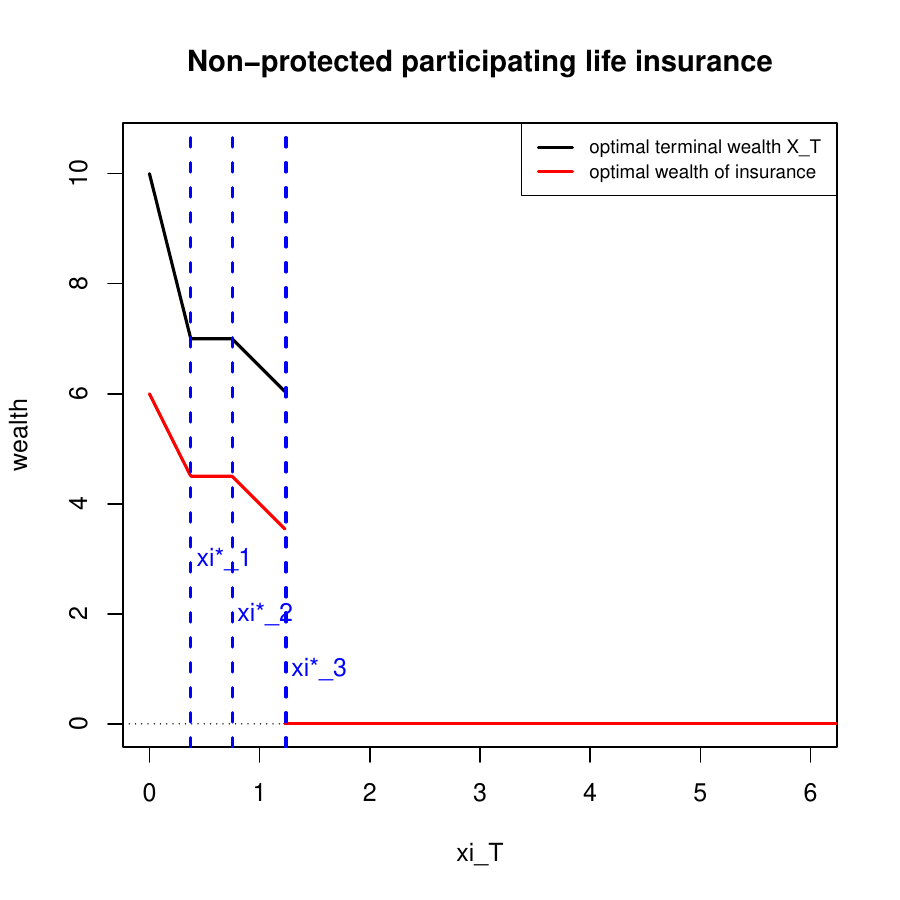}
	\end{minipage}
	\quad
	\begin{minipage}{0.48\textwidth}
		\includegraphics[trim= 1mm 6mm 10mm 8mm, clip,width=\textwidth]{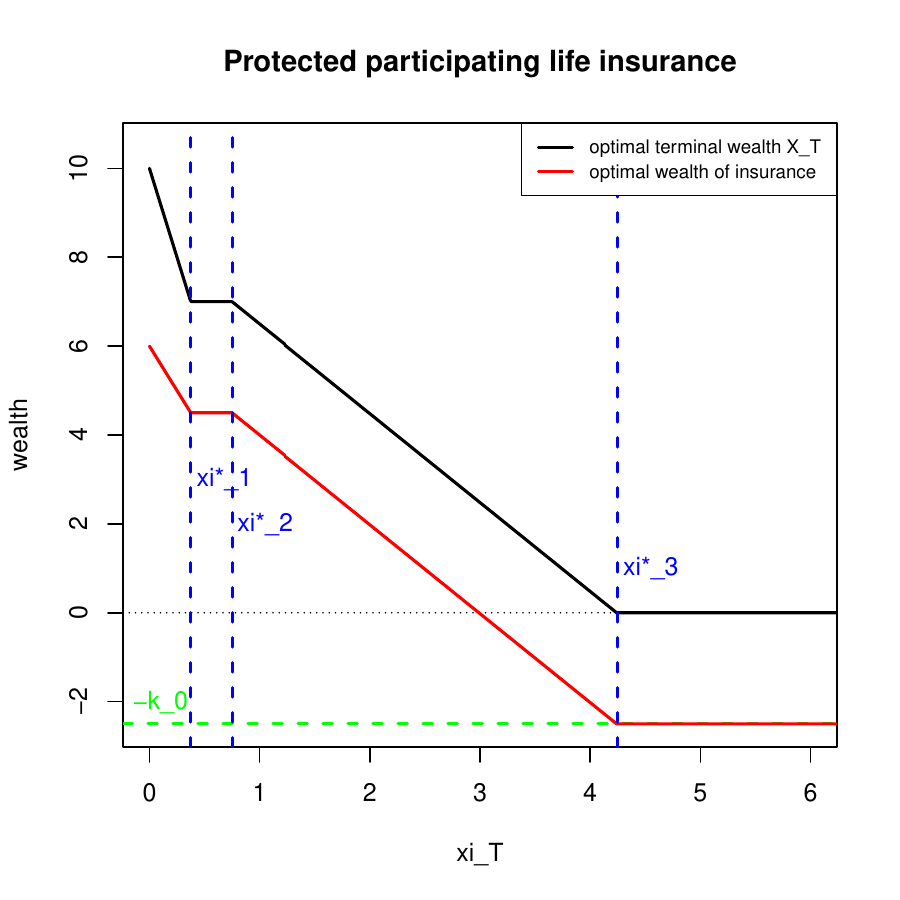}
	\end{minipage}
	\caption{The optimal value $\hat{X}_T$ and the optimal terminal wealth of the insurance company as a function of $\xi_T$ with $k_1 = 2.5$ (left) resp. $k_0 = 2.5$ (right), $k_2 = 7$, $\alpha=1$, $\alpha_2=0.5$, $\lambda=3$, $y=1$, $\gamma=0.25$.}
	\label{fig: Optimal Value}
\end{figure}	

\subsection{Derivation of the optimal strategy} \label{chapter: Black-Scholes}

Next, we derive the optimal strategy for our problem where the proof is somewhat technical and therefore given in Appendix \ref{proofs}.

\begin{theorem} \label{optimal strategy}
	The optimal solution $\hat{u}$ is non-negative and given by:
	\begin{align*}
		\hat{u}_t = (\sigma_t^T)^{-1} \kappa_t \dfrac{v_t}{\hat{X}_t},
	\end{align*}
	where
	\begin{align*}
		v_t =& \left(k_2 + \dfrac{\lambda}{2 \gamma \tilde{\alpha}} - \dfrac{\alpha}{\tilde{\alpha}}(k_2-k_1-k_0) \right) \dfrac{e^{-\int_t^Tr_s \diff s}}{\sqrt{\int_t^T \norm{\kappa_s}^2 \diff s}} \varphi\left(d_1 \left(\xi_1^*,t\right)\right) \\
		&+ \dfrac{y}{2\gamma \tilde{\alpha}^2} \xi_t e^{\int_t^T -(2r_s-\norm{\kappa_s}^2) \diff s} \left[ \Phi \left( d_2 \left(\xi_1^*,t\right) \right) - \dfrac{1}{\sqrt{\int_t^T \norm{\kappa_s}^2 \diff s}} \varphi\left(d_2 \left(\xi_1^*,t\right)\right) \right] \\
		&+ k_2 \dfrac{e^{-\int_t^Tr_s \diff s}}{\sqrt{\int_t^T \norm{\kappa_s}^2 \diff s}} \left( \varphi\left(d_1 \left(\xi_2^*,t\right)\right) - \varphi\left(d_1 \left(\tilde{\alpha} \hat{\xi},t\right)\right) \right) \\
		&+ \left(k_0 + k_1 + \dfrac{\lambda}{2 \gamma \alpha} \right) \dfrac{e^{-\int_t^Tr_s \diff s}}{\sqrt{\int_t^T \norm{\kappa_s}^2 \diff s}} \left( \varphi\left(d_1 \left(\xi_3^*,t\right)\right) - \varphi\left(d_1 \left(\alpha \hat{\xi},t\right)\right) \right) \\
		&+ \dfrac{y}{2\gamma \alpha^2} \xi_t e^{\int_t^T -(2r_s-\norm{\kappa_s}^2) \diff s} \left[ \left(\Phi \left( d_2 \left(\xi_3^*,t\right) \right) - \Phi \left( d_2 \left(\alpha \hat{\xi},t\right) \right) \right) \right. \\
		&\hspace{128pt}- \left. \dfrac{1}{\sqrt{\int_t^T \norm{\kappa_s}^2 \diff s}} \left(\varphi\left(d_2 \left(\xi_3^*,t\right)\right) - \varphi\left(d_2 \left(\alpha \hat{\xi},t\right)\right) \right) \right], \\
		\hat{X}_t =& \left(k_2 + \dfrac{\lambda}{2 \gamma \tilde{\alpha}} - \dfrac{\alpha}{\tilde{\alpha}}(k_2-k_1-k_0) \right) e^{-\int_t^Tr_s \diff s} \Phi\left(d_1 \left(\xi_1^*,t\right)\right) \\
		&- \dfrac{y}{2\gamma \tilde{\alpha}^2} \xi_t e^{\int_t^T -(2r_s-\norm{\kappa_s}^2) \diff s} \Phi \left( d_2 \left(\xi_1^*,t\right) \right)  \\
		&+ k_2 e^{-\int_t^Tr_s \diff s} \left( \Phi\left(d_1 \left(\xi_2^*,t\right)\right) - \Phi\left(d_1 \left(\tilde{\alpha} \hat{\xi},t\right)\right) \right) \\
		&+ \left(k_0+ k_1 + \dfrac{\lambda}{2 \gamma \alpha} \right) e^{-\int_t^Tr_s \diff s} \left( \Phi\left(d_1 \left(\xi_3^*,t\right)\right) - \Phi\left(d_1 \left(\alpha \hat{\xi},t\right)\right) \right) \\
		&- \dfrac{y}{2\gamma \alpha^2} \xi_t e^{\int_t^T -(2r_s-\norm{\kappa_s}^2) \diff s} \left(\Phi \left( d_2 \left(\xi_3^*,t\right) \right) - \Phi \left( d_2 \left(\alpha \hat{\xi},t\right) \right) \right)
	\end{align*}
	with
	\begin{align*}
		d_1 (x,t) =& \dfrac{\ln x - \ln \xi_t + \int_t^T (r_s - \frac{\norm{\kappa_s}^2}{2}) \diff s}{\sqrt{\int_t^T \norm{\kappa_s}^2 \diff s}}, \\
		d_2 (x,t) =& \dfrac{\ln x - \ln \xi_t + \int_t^T (r_s - \frac{3\norm{\kappa_s}^2}{2}) \diff s}{\sqrt{\int_t^T \norm{\kappa_s}^2 \diff s}} = d_1(x,t) - \sqrt{\int_t^T \norm{\kappa_s}^2 \diff s},
	\end{align*}
	where $\Phi$ denotes the cdf and $\phi$ the density of a standard normal distributed random variable. The values $\hat{\xi}$, $\xi_1^*$, $\xi_2^*$, and $\xi_3^*$ are defined as in Theorem \ref{optimal wealth}.
\end{theorem}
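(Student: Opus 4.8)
The plan is to use completeness of the Black--Scholes market. Theorem~\ref{optimal wealth} already produces an admissible terminal wealth $\hat X_T$ of the form \eqref{eq: optimal terminal wealth} satisfying the budget constraint $\EX[\xi_T\hat X_T]=\xi_0 x_0$, so $\hat X_T$ is attainable and the martingale representation theorem yields a \emph{unique} replicating strategy $\hat u$. Moreover the deflated wealth $\xi_t\hat X_t$ is a martingale, whence
\[
	\hat X_t=\frac{1}{\xi_t}\,\EX\!\left[\xi_T\hat X_T\mid\FF_t\right].
\]
The theorem then reduces to two tasks: (i) evaluate this conditional expectation in closed form, and (ii) read off $\hat u$ from It\^o's formula. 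I would assume the integrability stated in Section~\ref{chapter: complete market} throughout so that all the resulting integrals are finite.

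First I would compute $\hat X_t$ explicitly. In the Black--Scholes model $\ln(\xi_T/\xi_t)=-\int_t^T(r_s+\tfrac12\norm{\kappa_s}^2)\,\diff s-\int_t^T\kappa_s^T\,\diff W_s$, so $\xi_T/\xi_t$ is conditionally lognormal. Since $\hat X_T$ is piecewise affine in $\xi_T$ on the intervals of Proposition~\ref{prop: * Eigenschaften}, every contribution to $\xi_T\hat X_T$ has the form $(a\xi_T+b\xi_T^2)\1_{\{\ell<\xi_T\le u\}}$. The two truncated moments are standard: $\EX[\xi_T\1_{\{\ell<\xi_T\le u\}}\mid\FF_t]$ carries the discount $e^{-\int_t^T r_s\,\diff s}$ and differences of $\Phi(d_1(\cdot,t))$, while $\EX[\xi_T^2\1_{\{\ell<\xi_T\le u\}}\mid\FF_t]$ carries $e^{\int_t^T(\norm{\kappa_s}^2-2r_s)\,\diff s}$ and differences of $\Phi(d_2(\cdot,t))$; here one checks via an Esscher/Girsanov shift that the endpoint $\xi_1^*$ produces exactly $\Phi(d_1(\xi_1^*,t))$, and similarly for the other endpoints. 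Dividing by $\xi_t$ and grouping the constant pieces (multiplying $\Phi(d_1)$) and the $\xi_t$-linear pieces (multiplying $\Phi(d_2)$) reproduces the stated $\hat X_t$; Proposition~\ref{prop: * Eigenschaften} is what lets the three truncation intervals be written with the endpoints $\xi_1^*,\tilde\alpha\hat\xi,\xi_2^*,\alpha\hat\xi,\xi_3^*$ appearing in the claim, with several boundary terms combining because the intervals are connected.

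Next I would identify the strategy. Writing $\hat X_t=g(\xi_t,t)$ and applying It\^o's formula to the dynamics \eqref{xi definition}, the diffusion part of $\diff\hat X_t$ is $-\xi_t\,\partial_\xi g(\xi_t,t)\,\kappa_t^T\,\diff W_t$. Matching this with the volatility term $\hat X_t\hat u_t^T\sigma_t\,\diff W_t$ of \eqref{X definition} gives $\hat u_t=(\sigma_t^T)^{-1}\kappa_t\,v_t/\hat X_t$ with $v_t:=-\xi_t\,\partial_\xi g(\xi_t,t)$. Differentiating the closed form of $\hat X_t$ in $\xi_t$ then delivers $v_t$: using $\partial_{\xi_t}d_1(x,t)=-1/(\xi_t\sqrt{\int_t^T\norm{\kappa_s}^2\,\diff s})$, each $\Phi(d_1)$ term contributes a $\varphi(d_1)/\sqrt{\int_t^T\norm{\kappa_s}^2\,\diff s}$ term, and each $\xi_t\Phi(d_2)$ term contributes, by the product rule, both a $\Phi(d_2)$ term and a $\varphi(d_2)$ term, reproducing the bracketed combinations in the statement.

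Finally, for non-negativity I would argue monotonicity. Since $\hat X_T\ge 0$ and $\xi_T>0$, we get $\hat X_t>0$. Using the scaling $\xi_T=\xi_t\eta$ with $\eta$ independent of $\FF_t$, one has $\hat X_t=\EX[\eta\,\hat X_T(\xi_t\eta)]$, so $\partial_{\xi_t}\hat X_t=\EX[\eta^2\,\hat X_T'(\xi_t\eta)]\le 0$ because $\hat X_T$ is non-increasing in $\xi_T$ by Proposition~\ref{prop: xhat continuous} (the downward jump at $\xi^*$ only reinforces the inequality). Hence $v_t\ge 0$, giving the sign assertion. I expect the main obstacle to be the bookkeeping in the second step: tracking all truncation endpoints, verifying that the degenerate or empty-interval cases (e.g.\ $\tilde\alpha\hat\xi=0$, or $(\alpha\hat\xi,\xi_3^*]=\emptyset$) are handled consistently, and checking that the boundary contributions cancel/combine exactly as written. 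The non-smoothness of $\hat X_T$ at $\xi^*$ also needs a short justification that, after taking the (smoothing) conditional expectation, it produces no extra boundary term in the replication.
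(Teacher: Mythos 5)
Your proposal is correct and follows essentially the same route as the paper: the martingale property of $\xi_t\hat X_t$, the conditional lognormal truncated-moment computation yielding the $\Phi(d_1)$ and $\Phi(d_2)$ terms, and the identification of $\hat u_t$ by matching the It\^o diffusion coefficient $-\xi_t\,\partial_{\xi}\hat X_t\,\kappa_t^T$ with $\hat X_t\hat u_t^T\sigma_t$. Your monotonicity argument for $v_t\ge 0$ is a welcome addition, as the paper asserts the sign of $\hat u$ without spelling out that step.
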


\section{Numerical Results} \label{numerics}

In the following section, we discuss the numerical results for the Black-Scholes model with a given calibration of the parameters with a setting as in Section \ref{chapter: complete market}. The first two figures show the impact of different values for the risk aversion resp. the participation rate and of different optimization strategies (\ac{eu} instead of mean-variance) on the optimal terminal wealth as a function of $\xi_T$. The remaining four figures show simulation outcomes for the optimal terminal wealth and the optimal strategy. We analyze the results for the non-protected resp. the protected participation life insurance product, and compare those (a) to each other, (b) to a product with no surplus participation, and (c) to the results with an \ac{eu}-optimization.

For the participating life insurance contract, we use the parameters $k_0 = 0$, $k_1 = 2.5$, $k_2 = 7$, $\alpha = 1$, and $\alpha_2 = 0.25$ for the non-protected insurance product. We change the values of $k_0$ and $k_1$, i.e., $k_0 = 2.5$ and $k_1 = 0$, to get the protected product. The initial wealth, i.e., the sum of the premiums and the initial capital from the insurance company, is given by $x_0 = 4$, the time horizon by $T=10$, and the risk aversion by $\gamma = 0.25$. We assume a constant risk-free interest rate of $r := r_t \equiv 0.02$ for the financial market and consider one risky asset with constant mean $\mu := \mu_t \equiv 0.08$ and volatility $\sigma_t \equiv 0.2$. In particular, the Sharpe ratio is also constant and given by $\kappa_t \equiv 0.3$. For the implementation, we used a time step of $\delta = 0.01$, i.e., we have 1000 intermediate points. Moreover, we simulate {10000} realizations of the Brownian Motion.

In Figure \ref{fig: xiplot}, we present the optimal terminal wealth as a function of $\xi_T$ for different values of the risk aversion parameter $\gamma$. (Note that the risk aversion parameters are not comparable for the different optimization strategies.) For both the non-protected and the protected case, we compare the optimal terminal wealth stemming from our mean-variance optimization with the \ac{eu}-optimization for participating life insurance contracts from Lin et al. \cite{lin2017optimal}. To measure the utility, we use, as Lin et al., the following S-shaped utility function: $U(x) = \begin{cases}
	x^{\tilde{\gamma}} & x \geq 0,\\
	-\tilde{\lambda} (-x)^{-\tilde{\gamma}} & x < 0
\end{cases}$ with $\tilde{\lambda}=2$ and different values for $\tilde{\gamma}$ (Lin et al. \cite{lin2017optimal} used $\tilde{\gamma}=0.5$ in their numerical analysis). Moreover, we added the case that there is no additional surplus participation of the policyholders, i.e., $\alpha_2=0$, $k_0=0$, and $k_1=0$. Both participating optimizations share that the optimal wealth function has three points of significant behavioral change, i.e., at $\xi_1^*$, $\xi_2^*$ and $\xi_3^*$ as in Theorem \ref{optimal wealth}. This theorem shows that the optimal wealth decreases before $\xi_1^*$ and between $\xi_2^*$ and $\xi_3^*$, and that the optimal wealth is constant elsewhere. The main difference in the shape of the functions is that we have a piecewise linearity in the mean-variance case but a total non-linearity in the \ac{eu} case. This effect is already apparent from the respective formulas. Noticeably, the distance between the second and third point of behavioral change is much higher in the \ac{eu}-optimization than in the mean-variance one. Moreover, the figure shows a moral hazard of the insurance companies offering a non-protected participating life insurance contract since the insurance companies favor a portfolio value of $0$ over a portfolio value at the guarantee value level. We can observe this effect for both the mean-variance and the \ac{eu}-optimization in the figure due to the drop-down of the optimal terminal wealth to zero. Protected products have no such effect when optimizing mean-variance since the insurance company always profits from higher portfolio values. While this effect also occurs when optimizing \ac{eu}, the probability is smaller since the drop-down is at a far higher value (compared to the non-protected product).

\begin{figure}[!htb]
	\centering
	\begin{minipage}{0.48\textwidth}
		\includegraphics[trim= 1mm 6mm 10mm 8mm, clip,width=\textwidth]{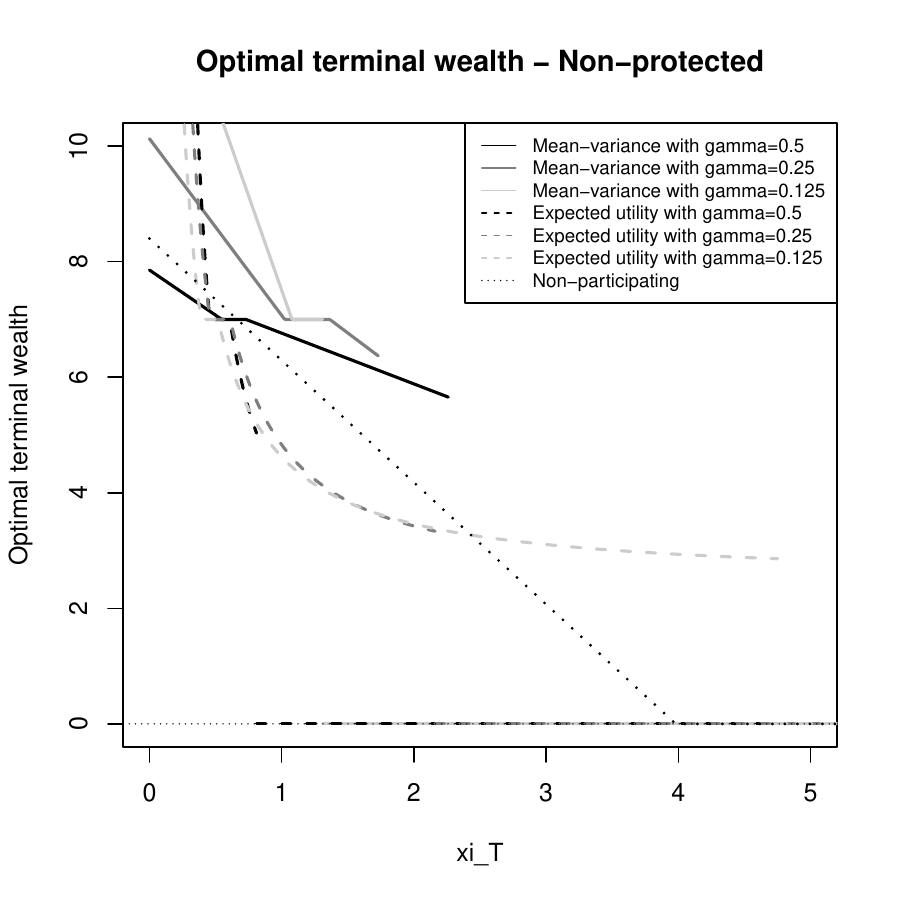}
	\end{minipage}
	\quad
	\begin{minipage}{0.48\textwidth}
		\includegraphics[trim= 1mm 6mm 10mm 8mm, clip,width=\textwidth]{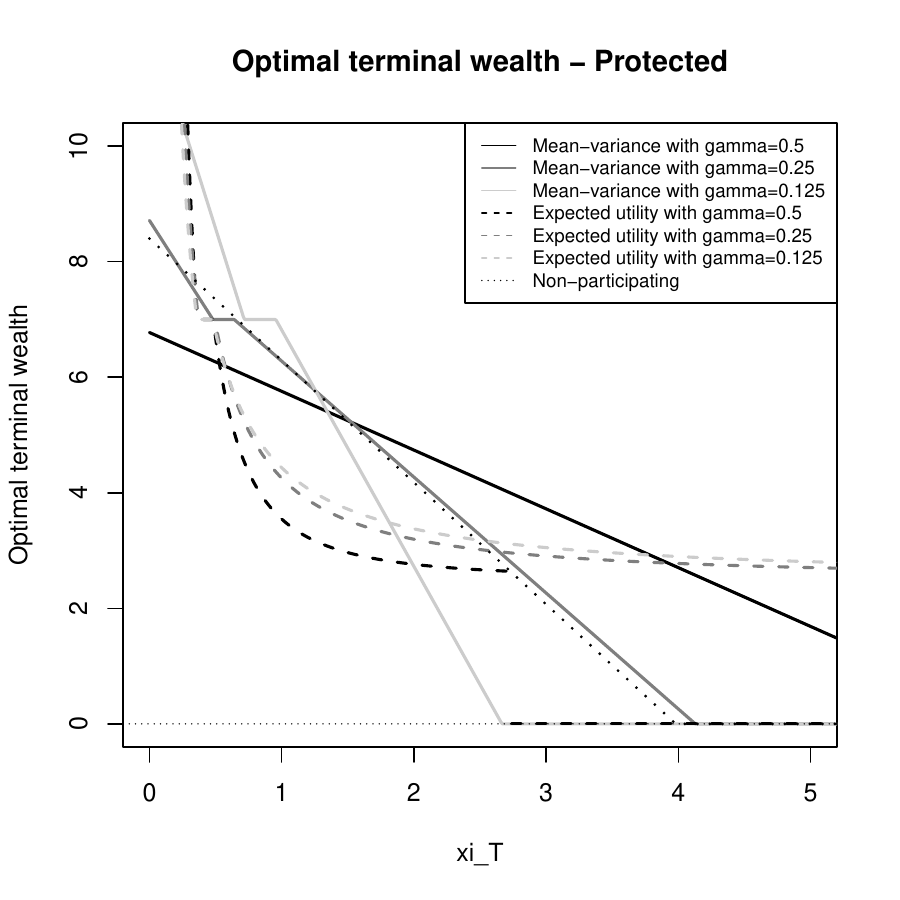}
	\end{minipage}
	\caption{Comparison of the optimal terminal wealth as a function of $\xi_T$ of mean-variance with \ac{eu} optimized participating life insurance products and no participation in the case of a not protected guarantee (left) and a protected one (right). Note that the risk aversion parameters are not comparable for mean-variance and \ac{eu}.}
	\label{fig: xiplot}
\end{figure}	

In Figure \ref{fig: alpha2}, we show the influence of the participation rate $\alpha_2$ on the optimal terminal wealth as a function of $\xi_T$. We can see that a higher rate $\alpha_2$ leads to a more extended plateau at $k_2=7$. This effect is not surprising when checking the formula \eqref{eq: optimal terminal wealth} of the optimal terminal wealth $\hat{X}_T$ since a higher $\alpha_2$ leads to a smaller $\tilde{\alpha}$. Thus, the intermediate interval $(\tilde{\alpha} \hat{\xi},\xi_2^*)$ gets bigger since for our parametrization, it always holds that $\xi_2^* = \alpha \hat{\xi}$. Moreover, we can observe that $\xi^*$ (defined in Theorem \ref{optimal wealth} as the point where in the non-protected case the optimal terminal wealth drops down to $0$) is increasing for an increasing participation rate $\alpha_2$. This effect stems from the different values of the variables $\lambda$ and $y$ when changing the participation rate.
Additionally, the figure shows that for minimal values of $\xi_T$, the optimal terminal wealth is higher for large participation rates, which we can derive directly from the formula of $\hat{X}_T$ in \eqref{eq: optimal terminal wealth}. For the economic interpretation, we infer that a higher participation rate of the policyholders leads to a higher portfolio value in bad economic states, because for a higher $\alpha_2$, the upside potential of the insurer is reduced. Hence, the optimal strategy should be less risky, which implies a higher portfolio value. For good economic states, the opposite effect happens. A somewhat surprising result is that in very good economic states (happening with an extremely low probability), the optimal terminal wealth is higher for higher participation rates, which occurs due to changes in the Lagrange multiplier of the budget constraint.

\begin{figure}[!htb]
	\centering
	\begin{minipage}{0.48\textwidth}
		\includegraphics[trim= 1mm 6mm 10mm 8mm, clip,width=\textwidth]{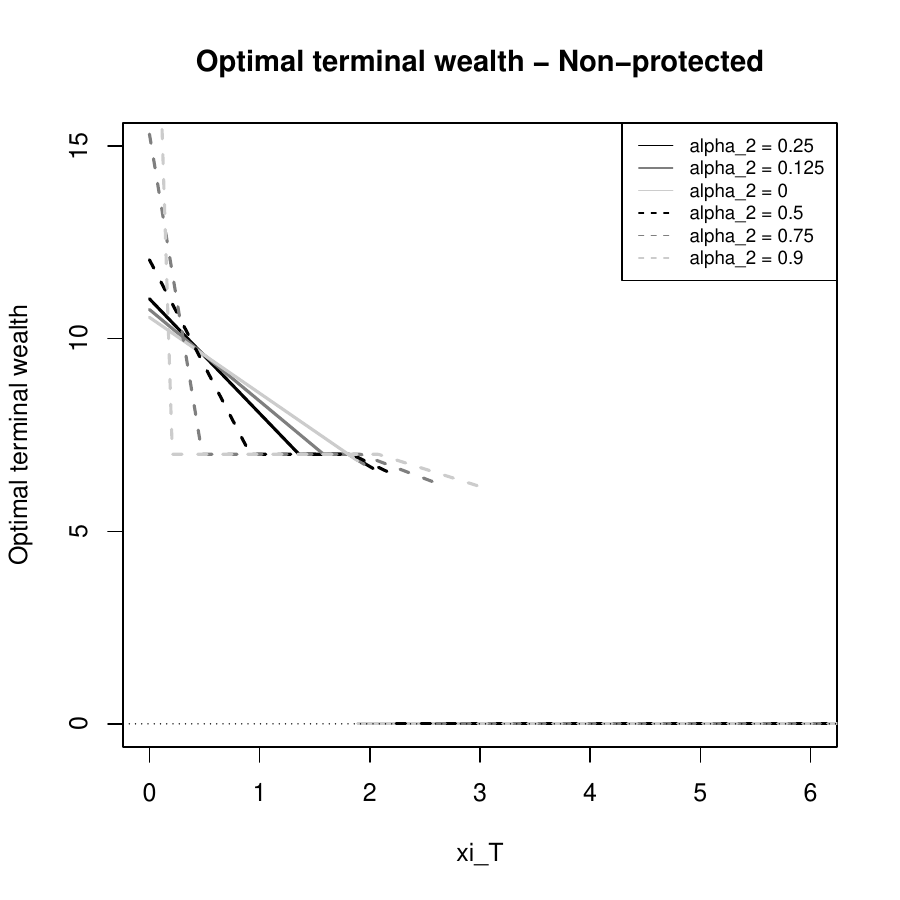}
	\end{minipage}
	\quad
	\begin{minipage}{0.48\textwidth}
		\includegraphics[trim= 1mm 6mm 10mm 8mm, clip,width=\textwidth]{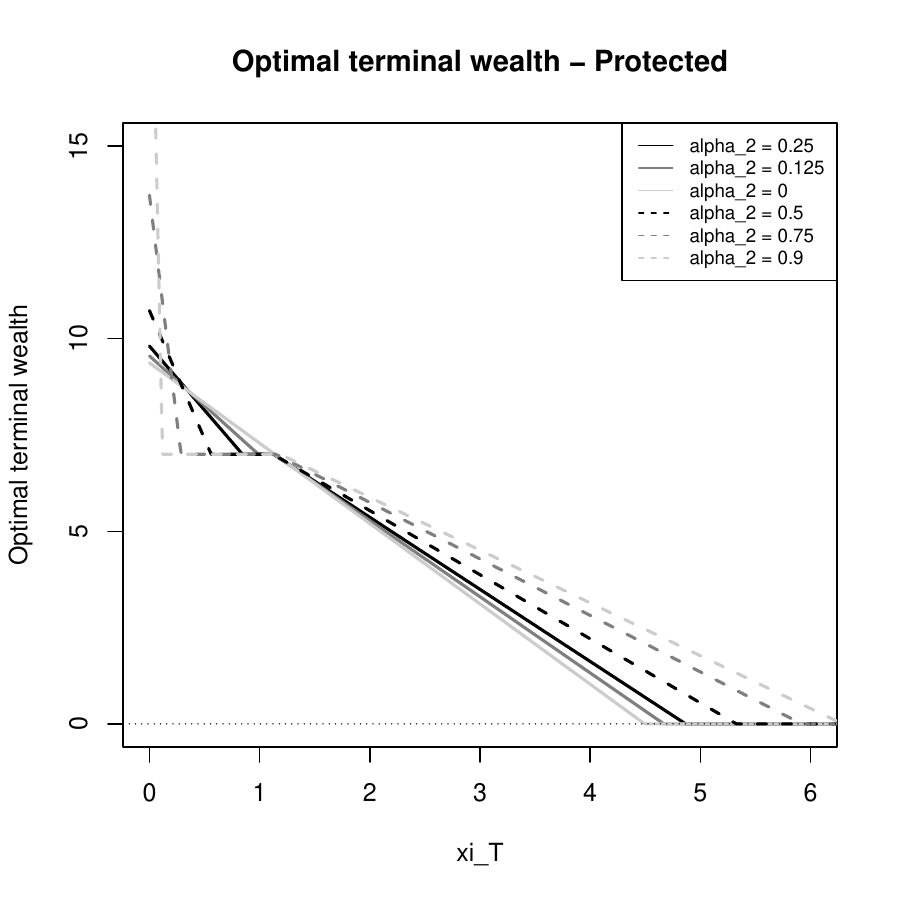}
	\end{minipage}
	\caption{Influence of the participation rate $\alpha_2$ on the optimal terminal wealth as a function of $\xi_T$ of a not protected (left) and a protected (right) participation life insurance.}
	\label{fig: alpha2}
\end{figure}	

{
In both, Figure \ref{fig: xiplot} and Figure \ref{fig: alpha2}, we can observe a jump of the optimal terminal wealth in the non-protected case, which is typical for problems with option-like constraints, see for instance Basak and Shapiro \cite{basak2001value}, Chen et al. \cite{chen2018optimal}, Nguyen and Stadje \cite{nguyen2020nonconcave}, Chen et al. \cite{chen2024equivalence}, or Kraft and Steffensen \cite{kraft2013dynamic}. The reason is that for certain outcomes, the insurance company's incentive change and therefore it can gain additional utility by changing to a different regime.\footnote{{For instance, if the entire wealth for certain outcomes is transferred to the policyholders, the insurance company will target a terminal wealth of zero in these states, in order to gain additional wealth in states where it benefits more. (Note that the overall expectation is fixed due to the budget constraint.)}} Moreover, the piecewise linearity observed also occurs in, e.g., Avanzi et al. \cite{avanzi2024optimal} who considered a minimization of a quadratic function under solvency constraints including option-like constraints as the expected shortfall. This piecewise linearity is typical for problems with some kind of quadratic penalties, whereas the option-like constraints entail the existence of the plateaus and the jumps.}

In Figure \ref{fig: numerical non-protected}, we show the optimal wealth process and the optimal share into the risky asset over time for the non-protected participating life insurance product. In both figures, the red line shows the average of the {$10000$} realizations, and each of the ten black lines shows a single realization. Note that for the optimal strategy, we used a weighted average with the weight given by the absolute investment amount (which also holds for all other averages in this section). Furthermore, we get $\lambda \approx 3.423$ and $y \approx 0.860$. The average optimal wealth process develops from $4$ to approximately {$7.6$}, where the increase is slightly larger in the beginning than in the end. A terminal wealth of about {$7.6$} corresponds to a terminal wealth for the insurer of {$4.95$} and a wealth of {$2.65$} for the policyholder. Note that this is higher than the guarantee value, which is $2.5$. When exercising the optimal strategy, we start with high investment into the risky asset (approximately $110 \%$ of our wealth), which decreases relatively constant over time. The final optimal investment share into the risky asset is around ${45} \%$. When analyzing the strategies in more detail, one observes that the most risky investments are taken when the economy has poorly evolved until then, i.e., when $\xi_t$ is high, while the least risky investments are taken in cases where the economy has developed well. These results are reasonable since when the wealth is below the guarantee, the insurer has nothing to lose anymore, while the wealth is already over the second threshold $k_2$ (i.e., the policyholder also participates in the surplus over $k_2$), the upside potential of the insurer is reduced, whereas the downside potential is not (or only slightly).

\begin{figure}[!htb]
	\centering
	\begin{minipage}{0.48\textwidth}
		\includegraphics[trim= 1mm 6mm 10mm 8mm, clip,width=\textwidth]{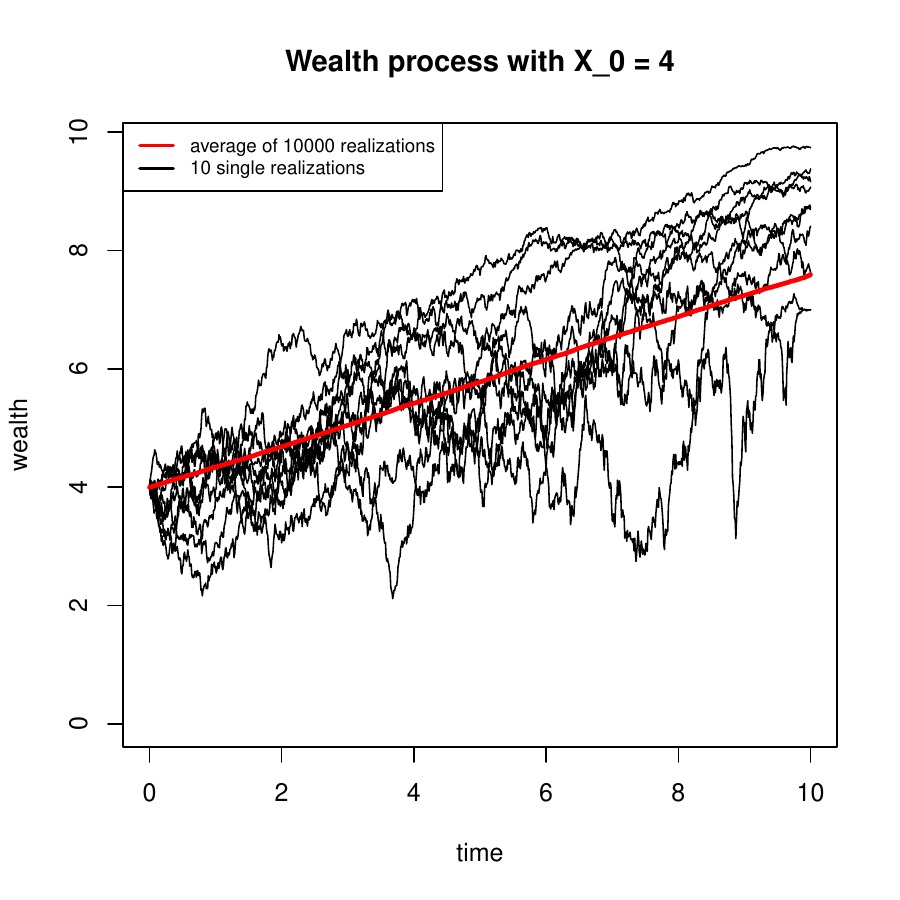}
	\end{minipage}
	\quad
	\begin{minipage}{0.48\textwidth}
		\includegraphics[trim= 1mm 6mm 10mm 8mm, clip,width=\textwidth]{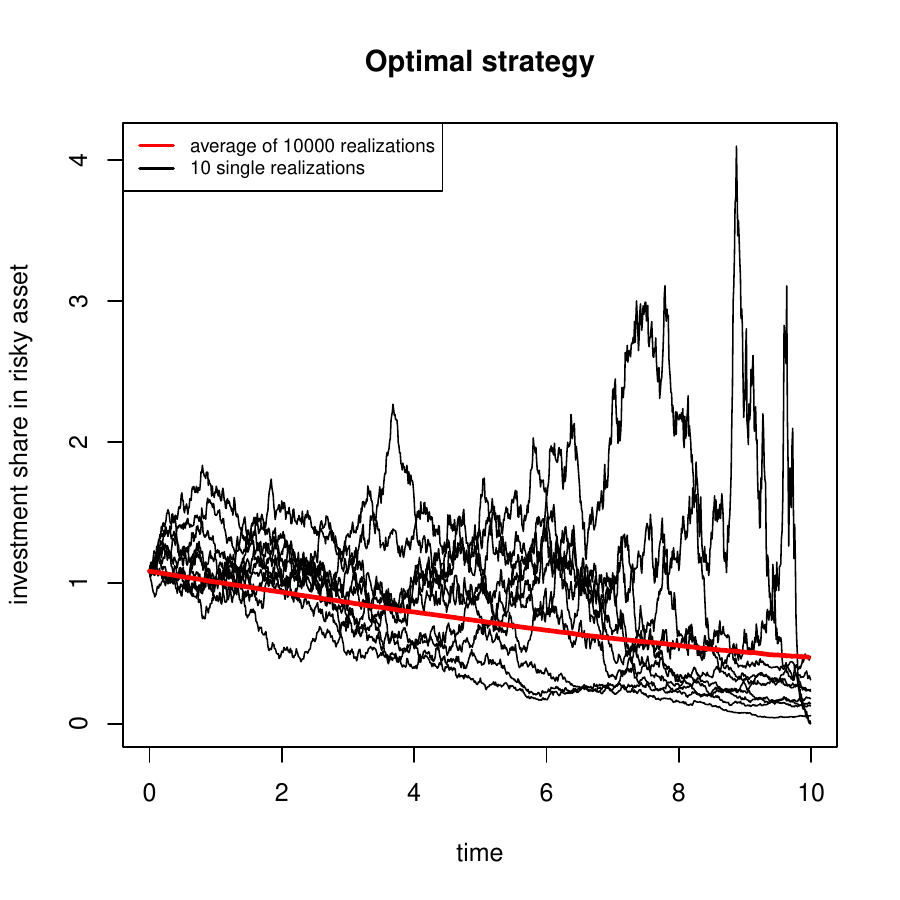}
	\end{minipage}
	\caption{Optimal wealth and optimal strategy for the non-protected participating life insurance product.}
	\label{fig: numerical non-protected}
\end{figure}	

In Figure \ref{fig: numerical protected}, we show the optimal wealth process and the optimal share into the risky asset over time for the protected participating life insurance product. Again, the red line shows the average of the {$10000$} realizations, and each of the ten black lines shows a single realization. Moreover, the variables $\lambda$ and $y$ are given by $\lambda \approx 2.893$ and $y \approx 1.003$. In this case, the optimal value evolves only to around {$6.7$}, corresponding to a terminal wealth of the insurance company of {$4.2$} since we are below $k_2$, i.e., there is no surplus participation of the policyholder. The optimal strategy starts risky but not as risky as in the non-protected case, with around $75\%$ of the wealth invested into the risky asset. It also decreases over time, but the reduction is smaller, and the final optimal investment percentage is at around {$35\%$} of the wealth. As in the non-protected case, the realizations in the best economic states correspond with the least risky investments and vice versa. One of the shown strategies is rather inconspicuous over most of the time but has a major change close to the final time point $T=10$, i.e., a drop-down from $u_{9.93} = 0.2$ to $u_{9.99} = 0.00001$. Such an effect (also in the other way, i.e., an upward move) happens for several realizations when their wealth for time points $t$, which are close to the maturity $T$, is close to the second threshold $k_2$ since the optimal terminal wealth as a function of $\xi$ has a small plateau at $k_2$, i.e., if $\xi_t \in (\tilde{\alpha} \hat{\xi}, \xi_2^*]$. Hence, for $\xi_t \approx \tilde{\alpha} \hat{\xi}$ (resp. $\xi_t \approx \tilde{\alpha} \xi_2^*$) for $t$ close to $T$, the upside (resp. downside) potential is reduced and the optimal strategy is to invest safely (resp. risky). 

\begin{figure}[!htb]
	\centering
	\begin{minipage}{0.48\textwidth}
		\includegraphics[trim= 1mm 6mm 10mm 8mm, clip,width=\textwidth]{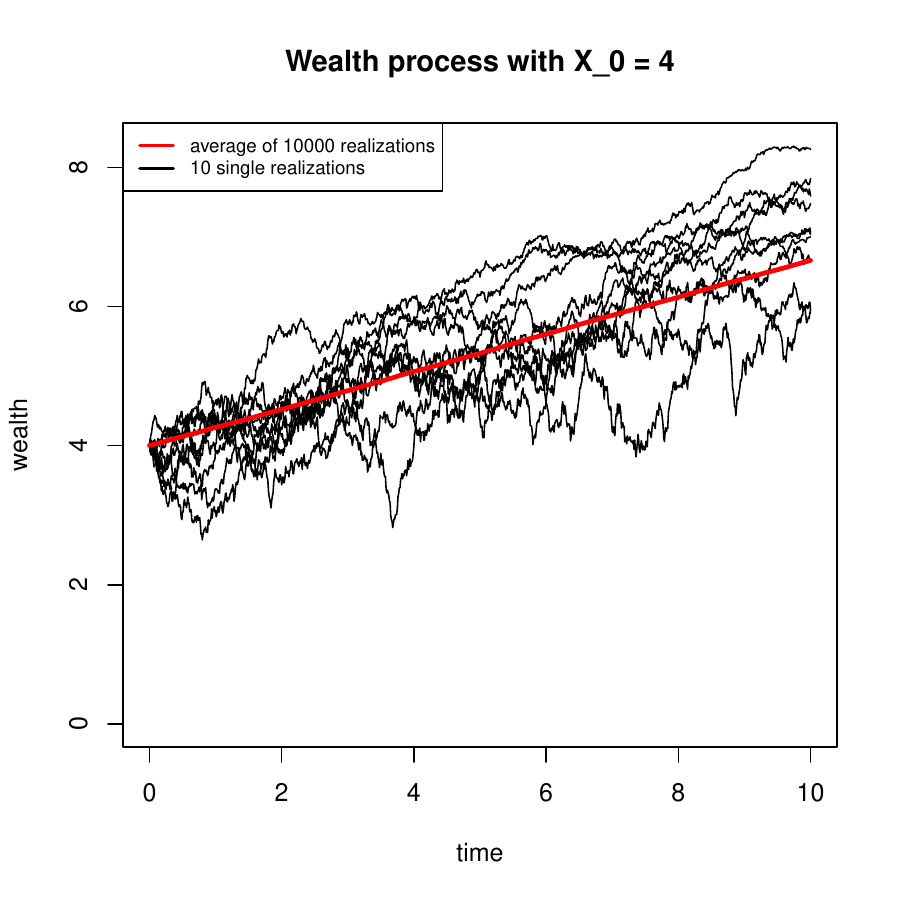}
	\end{minipage}
	\quad
	\begin{minipage}{0.48\textwidth}
		\includegraphics[trim= 1mm 6mm 10mm 8mm, clip,width=\textwidth]{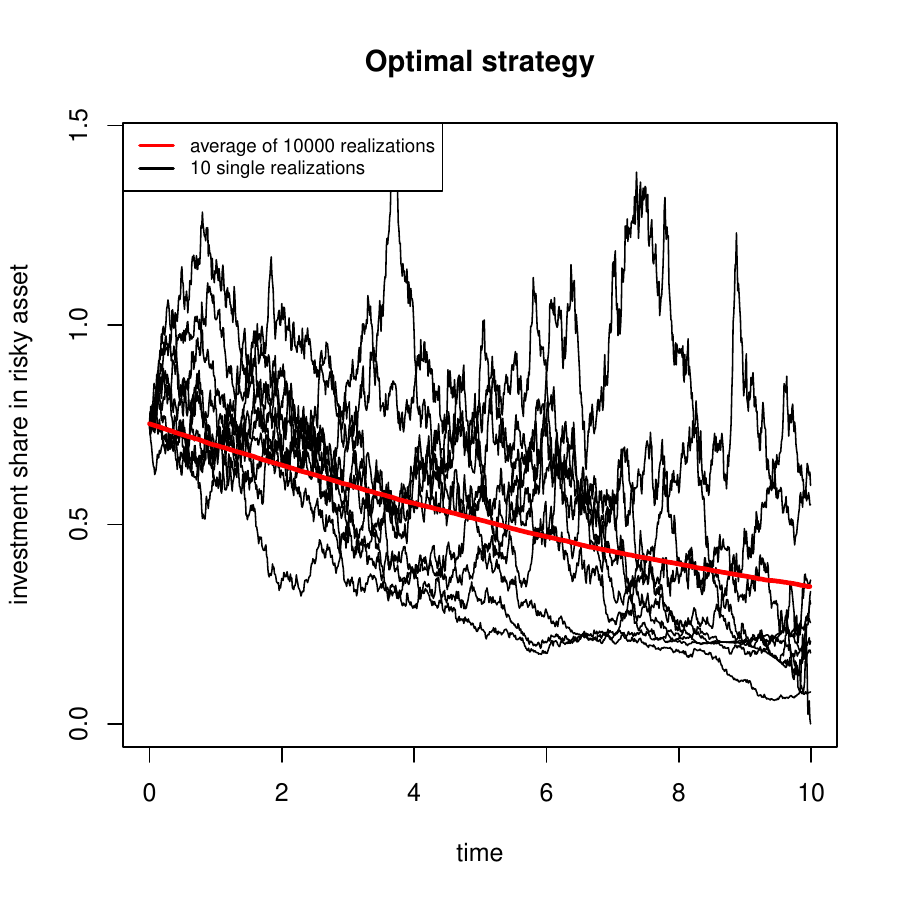}
	\end{minipage}
	\caption{Optimal wealth and optimal strategy for the protected participating life insurance product.}
	\label{fig: numerical protected}
\end{figure}	

In Figure \ref{fig: numerical compare}, we compare the optimal wealth before splitting it between the policyholders and the insurance company and the optimal strategy of our two participating life insurance products with mean-variance to a non-participating investment, i.e., we set $\alpha_2=0$, $k_0=0$, and $k_1=0$. 
Moreover, we set the initial wealth such that the mean of $\hat{X}_T$ is approximately equal to {$7.59$}. Therefore, we get the following initial values: $x_0 = {\scriptsize \begin{cases}
	4 & \text{ non-protected product}, \\
	{4.692} & \text{ protected product}, \\
	{4.665} & \text{ no participation}.
\end{cases}}$ The lines show again an average of {$10000$} realizations each. The figure shows that the investor makes the riskiest investment when offering a non-protected insurance product and approximately identical investments when offering one of the other two products. When offering the protected product compared to the non-participation product, it is remarkable that the investor invests slightly safer in the beginning and somewhat riskier close to maturity. This investment behavior leads to the highest wealth gain for the non-protected product since, on average, the risky asset performs better than the risk-less asset (due to $\mu > r$). For the other two products, the wealth gain is approximately equal. The protected life insurance investment strategy is relatively close to the non-participating strategy since a fixed-guarantee payment only minimally influences the optimal strategy. Then, the only remaining difference in the payoff shape is when the wealth is above $k_2$, which happens in our chosen parametrization either relatively late or not at all (see Figure \ref{fig: numerical protected}). The payoff structure for the non-protected insurance product differs highly in the values due to the reduced downside potential below $k_1$. 

\begin{figure}[!htb]
	\centering
	\begin{minipage}{0.48\textwidth}
		\includegraphics[trim= 1mm 6mm 10mm 8mm, clip,width=\textwidth]{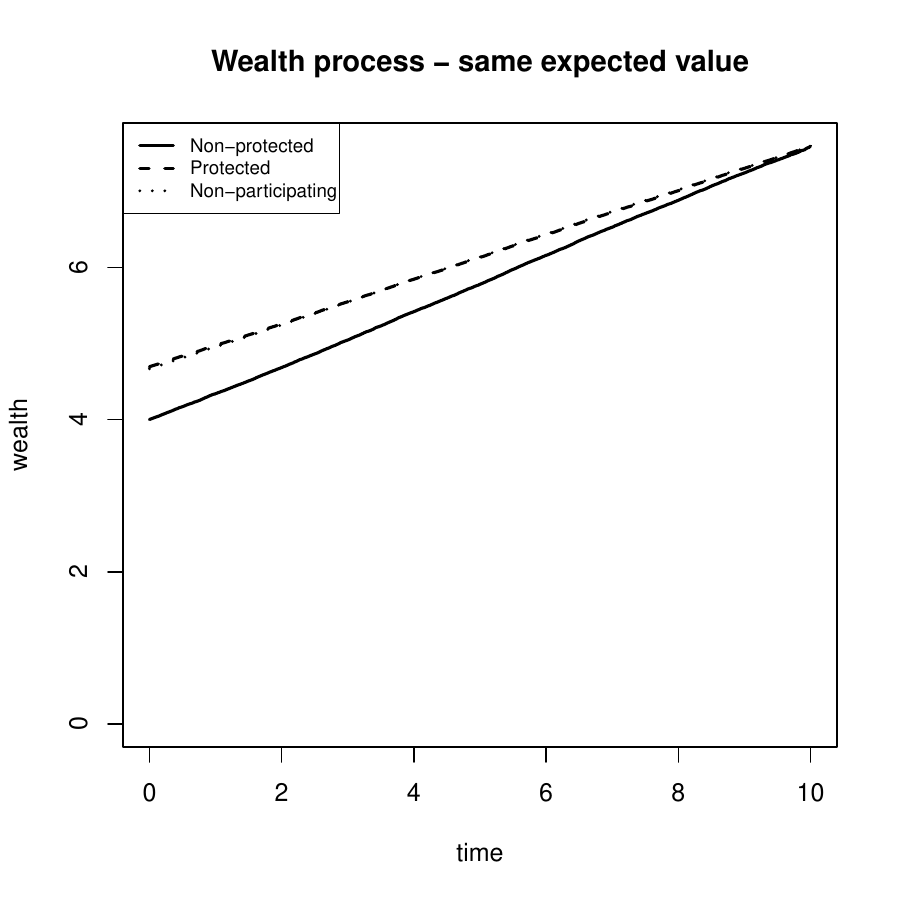}
	\end{minipage}
	\quad
	\begin{minipage}{0.48\textwidth}
		\includegraphics[trim= 1mm 6mm 10mm 8mm, clip,width=\textwidth]{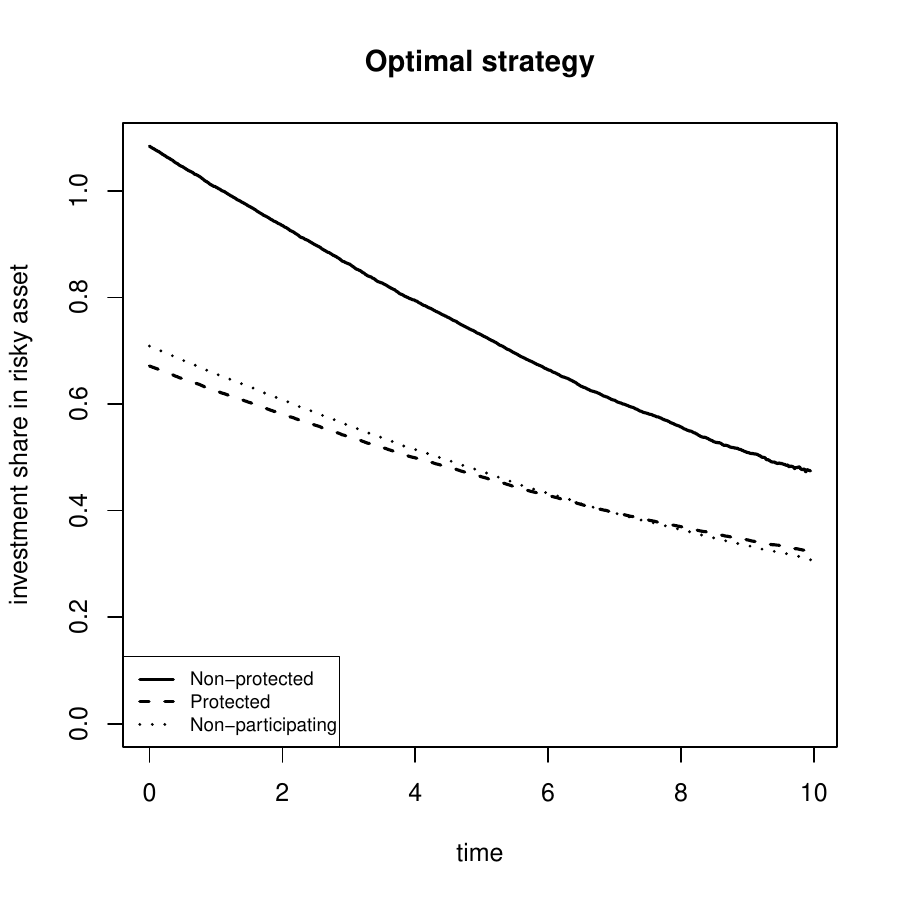}
	\end{minipage}
	\caption{Comparison of the optimal wealth (left) and strategy (right) between a (non-)protected participating life insurance product and a non-participating investment.}
	\label{fig: numerical compare}
\end{figure}	

In Figure \ref{fig: numerical compare Saunders}, we compare the optimal wealth and strategy for the non-protected and the protected life insurance contract when considering mean-variance resp. \ac{eu}-optimization. We take the result for optimizing \ac{eu} from Lin et al. \cite{lin2017optimal} as in Figure \ref{fig: xiplot} with parameter values $\tilde{\gamma}=0.125$ and $\tilde{\lambda}=2$. As in the other figures, the lines show the average of {$10000$} realizations, and we state the total wealth (before splitting it between the policyholders and the insurance company). As in the previous figure, we again choose the initial value such that the terminal wealth is approximately equal for all products, i.e., we take $x_0 = {\scriptsize \begin{cases}
	4 & \text{ non-protected product}, \\
	{4.692} & \text{ protected product}
\end{cases}}$ for the mean-variance optimization and $x_0 = {\scriptsize \begin{cases}
{3.31} & \text{ non-protected product}, \\
{3.59} & \text{ protected product}
\end{cases}}$ for the \ac{eu}-optimization. Both optimizations have in common that the insurance company offering a non-protected product invests riskier than the insurance company offering the protected product due to the reduced downside potential. When comparing the strategies, we observe that the strategies optimizing \ac{eu} become riskier over time (except for the final time points). In contrast, the strategies optimizing mean-variance get less risky over time, as already discussed. Since we chose the two (not comparable) risk aversion factors such that the initial investments in the risky asset are similar, the wealth gain of the \ac{eu}-optimization is higher. The decreasing portion invested into the risky asset is typical for a pre-commitment mean-variance optimization strategy (this can be seen by implementing the optimal mean-variance strategy by Zhou and Li \cite{zhou2000continuous} when using typical ranges for the parameters), which is structurally different from \ac{eu}-optimization strategies.

\begin{figure}[!htb]
	\centering
	\begin{minipage}{0.48\textwidth}
		\includegraphics[trim= 1mm 6mm 10mm 8mm, clip,width=\textwidth]{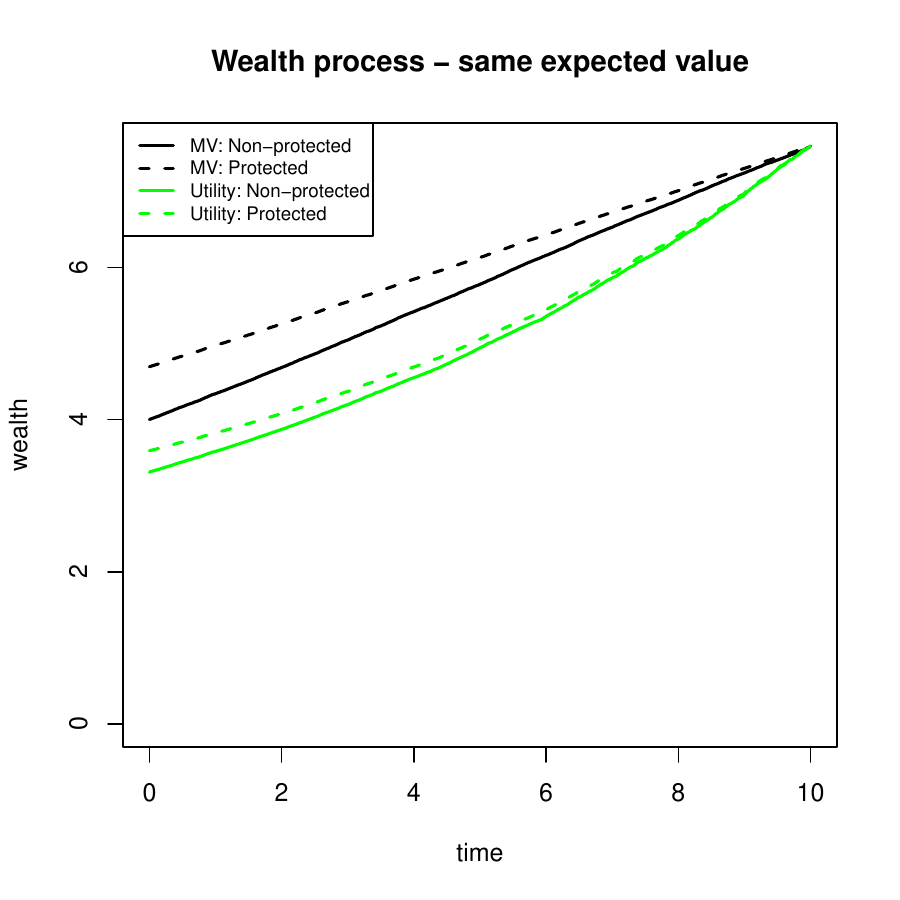}
	\end{minipage}
	\quad
	\begin{minipage}{0.48\textwidth}
		\includegraphics[trim= 1mm 6mm 10mm 8mm, clip,width=\textwidth]{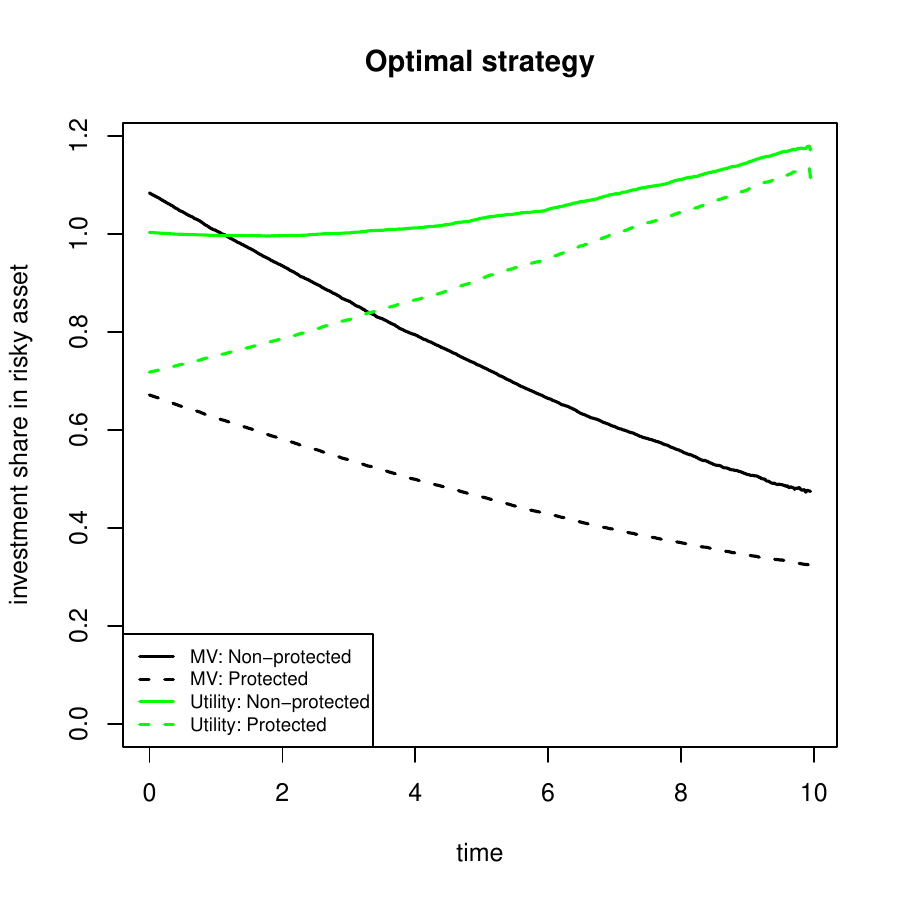}
	\end{minipage}
	\caption{Comparison of the optimal wealth (left) and the optimal strategy (right) between a (non-)protected participating life insurance contract optimizing mean-variance and \ac{eu} {with S-shaped utility from Lin et al. \cite{lin2017optimal}}.}
	\label{fig: numerical compare Saunders}
\end{figure}	

{In Figure \ref{fig: numerical compare Exp}, we show another comparison for the optimal wealth and strategy for the non-protected and the protected life insurance contract when considering mean-variance resp. \ac{eu}-optimization. Here, we take for the \ac{eu}-optimization an exponential utility function $U(x) = 1 - e^{-\hat{\gamma}x}$, where $\hat{\gamma} = 0.5$ denotes the risk aversion of the investor. Note that of course this risk aversion parameter is not directly comparable to the risk aversion parameter from our mean-variance model. (For small risks though $U$ is actually related to mean-variance by a Taylor-expansion.) The optimal wealth and strategy are calculated based on Korn and Trautmann \cite{korn1999optimal}. Due to their assumptions, we change the risk-free interest rate to $r=0$. Again, we show the average of {$10000$} realizations, state the total wealth (before splitting it between the policyholders and the insurance company), and choose the initial value such that the terminal wealth is approximately equal for all products, i.e., we take $x_0 = {\scriptsize \begin{cases}
		4 & \text{ non-protected product}, \\
		4.77 & \text{ protected product}
\end{cases}}$ for the mean-variance optimization and $x_0 = {\scriptsize \begin{cases}
		6.224 & \text{ non-protected product}, \\
		6.262 & \text{ protected product}
\end{cases}}$ for the \ac{eu}-optimization. We obtain again that the insurance company offering a non-protected product invests riskier than the insurance company offering the protected product due to the reduced downside potential, where the difference between the strategies is larger for the mean-variance optimization. In particular, we observe for exponential \ac{eu}-optimization that the optimal strategy is relatively similar for the non-protected and the protected life insurance products. Furthermore, these strategies are decreasing for longer maturities and slightly increasing for short maturities, and are closer to the mean-variance optimization than the ones discussed in Figure \ref{fig: numerical compare Saunders} for S-shaped utility. However, the change in the strategies for exponential \ac{eu}-maximization over time is less than for mean-variance optimization.}

\begin{figure}[!htb]
	\centering
	\begin{minipage}{0.48\textwidth}
		\includegraphics[trim= 1mm 6mm 10mm 8mm, clip,width=\textwidth]{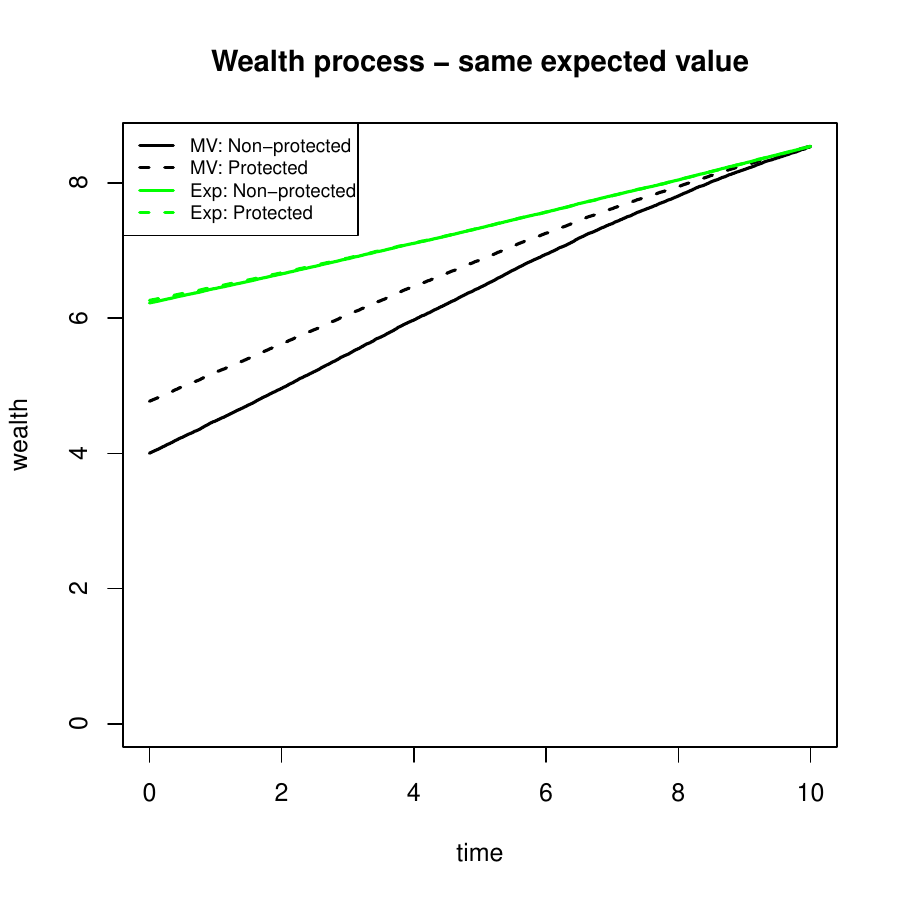}
	\end{minipage}
	\quad
	\begin{minipage}{0.48\textwidth}
		\includegraphics[trim= 1mm 6mm 10mm 8mm, clip,width=\textwidth]{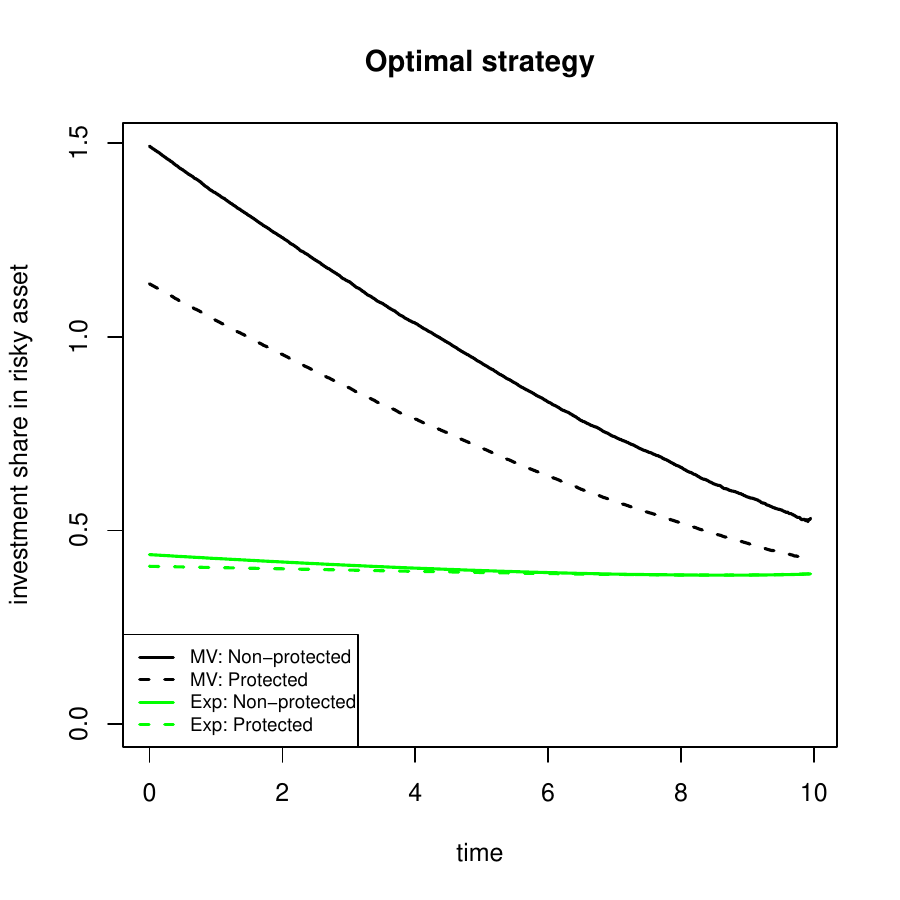}
	\end{minipage}
	\caption{{Comparison of the optimal wealth (left) and the optimal strategy (right) between a (non-)protected participating life insurance contract optimizing mean-variance, or \ac{eu} with an exponential utility function.}}
	\label{fig: numerical compare Exp}
\end{figure}	

\section{Conclusion} \label{conclusion}

In this paper, we derived explicit analytic formulas for general contracts, including participating life insurance contracts, when optimizing mean-variance in the multi-dimensional Black-Scholes market. Moreover, by showing the existence of all arising parameters, we showed the existence of the optimal solution. 
A numerical analysis shows that the mean-variance optimal strategy compared to the \ac{eu} optimal strategy becomes more conservative the shorter the maturity is and is increased in particular in bad economic states. Future research directions include possibly generalizing this approach to other stock market models.

\section*{Funding}

This research did not receive any specific grant from funding agencies in the public, commercial, or not-for-profit sectors.

\setcounter{section}{0}
\renewcommand{\thesection}{\Alph{section}}
\appendix
\section*{Appendix}

\section{Proofs} \label{proofs}

\begin{myproof}[Proof of Lemma \ref{Alternative optimization}] 
	We show this lemma by contradiction. In particular, we assume that a $\hat{u}$ exists, which is an optimal solution for $J$, but not for $\tilde{J}$. 
	Consequently, there exists a strategy $u$ which is better than $\hat{u}$ for $\tilde{J}$, i.e., 
	\begin{align} \label{proof: F F2 inequality}
		&&\tilde{J}(0,T,u,x_0) - \tilde{J}(0,T,\hat{u},x_0) &> 0 \notag \\
		&\Leftrightarrow& \lambda \left(\EX \left[F(0,T,u,x_0)- F(0,T,\hat{u},x_0)\right]\right) - \gamma \left( \EX \left[ F(0,T,u,x_0)^2 - F(0,T,\hat{u},x_0)^2 \right] \right) &> 0.
	\end{align}
	The next step is to define the function $G(x,y) := y - \gamma x + \gamma y^2$. We observe that $G$ is a convex function, and it holds that
	\begin{align*}
		G\left(\EX \left[ F(0,T,u,x_0)^2 \right], \EX \left[ F(0,T,u,x_0) \right]\right) = \EX \left[ F(0,T,u,x_0) \right] - \gamma \Var \left(F(0,T,u,x_0) \right) = J(0,T,u,x_0).
	\end{align*}
	We note that $\frac{\partial}{\partial x} G(x,y) = -\gamma$ and $\frac{\partial}{\partial y} G(x,y) = 1 + 2\gamma y$. Then, the convexity of $G$ implies
	\begin{align} \label{proof: G inequality}
		&\hspace{-1.5cm}G\left(\EX \left[ F(0,T,u,x_0)^2 \right], \EX \left[ F(0,T,u,x_0) \right] \right) \notag \\
		\geq& \  G\left(\EX \left[ F(0,T,\hat{u},x_0)^2 \right], \EX \left[ F(0,T,\hat{u},x_0) \right]\right) \notag \\
		&+ \begin{pmatrix}
			-\gamma \\
			1 + 2\gamma \EX \left[ F(0,T,\hat{u},x_0) \right]
		\end{pmatrix} \cdot \begin{pmatrix}
			\EX \left[ F(0,T,u,x_0)^2 \right] - \EX \left[ F(0,T,\hat{u},x_0)^2 \right] \\
			\EX \left[ F(0,T,u,x_0) \right] - \EX \left[ F(0,T,\hat{u},x_0) \right]
		\end{pmatrix} \notag \\
		=& \  G\left(\EX \left[ F(0,T,\hat{u},x_0)^2 \right], \EX \left[ F(0,T,\hat{u},x_0) \right]\right) \notag\\
		&-\gamma \left(\EX \left[ F(0,T,u,x_0)^2 \right] - \EX \left[ F(0,T,\hat{u},x_0)^2 \right]\right) \notag\\ &+(1+2\gamma \EX \left[ F(0,T,\hat{u},x_0) \right]) \left(\EX \left[ F(0,T,u,x_0) \right] - \EX \left[ F(0,T,\hat{u},x_0) \right]\right) \notag\\
		>& \  G\left(\EX \left[ F(0,T,\hat{u},x_0)^2 \right], \EX \left[ F(0,T,\hat{u},x_0) \right]\right),
	\end{align}
	where we used \eqref{proof: F F2 inequality} and $\lambda = 1 + 2\gamma \EX \left[ F(0,T,\hat{u},x_0) \right]$. Hence, \eqref{proof: G inequality} implies that $J(0,T,u,x_0) > J(0,T,\hat{u},x_0)$, i.e., $\hat{u}$ is not optimal, which is a contradiction and implies the lemma.
\end{myproof}

\begin{myproof}[Proof of Proposition \ref{prop: * Eigenschaften}]
	We prove this proposition by showing the following two statements from which the claim follows directly: (i) If $\xi_3^* > \alpha \hat{\xi}$, then it holds that $\xi_2^* = \alpha \hat{\xi}$, and (ii) if $\xi_2^* > \tilde{\alpha} \hat{\xi}$, then it holds that $\xi_1^* = \tilde{\alpha} \hat{\xi}$.
	
	We start with the proof of (i). First, if $\hat{\xi}=0$, then it holds that $\xi _{2}^{*}=0$ by definition, i.e., we get the claim. Hence, we assume that $\hat{\xi}>0$. Next, we reformulate the condition $\xi_3^* > \alpha \hat{\xi}$ to get a condition for $\lambda$. When plugging in the formulas as defined in Theorem \ref{optimal wealth}, it holds that $\xi_3^* > \alpha \hat{\xi}$ if and only if 
	\begin{align} \label{eq: condition formula for y}
		\dfrac{2 \gamma \alpha^2 (k_0+k_1) - 2 \gamma \alpha^2\sqrt{k_1^2 + k_1\left(2 k_0 + \frac{\lambda}{\gamma \alpha}\right)}}{y} > - \dfrac{2 \gamma \alpha^2 (k_2-k_1-k_0)}{y}.
	\end{align}
	Hence, it follows that $y<\infty$ and  $k_2 > \sqrt{k_1^2 + 2 k_0 k_1 + \frac{\lambda k_1}{\gamma \alpha}}$ which is again equivalent to $\frac{\lambda k_1}{\gamma \alpha} < k_2^2 -k_1^2 - 2k_0k_1$. Thus, we get that $\xi_3^* > \alpha \hat{\xi}$ holds if and only if $k_1=0$ or $\lambda < \gamma \alpha \left(\frac{k_2^2}{k_1} -k_1-2k_0\right)$ since $k_2>0$. Before showing the implication (i), let us also reformulate the equality $\xi_2^* = \alpha \hat{\xi}$, which we want to show. First, $\xi_2^* = \alpha \hat{\xi}$ if and only if $\tilde{\xi}_2^* \geq \alpha \hat{\xi}$ which is equivalent to:
	\begin{align*}
		-\dfrac{\gamma \alpha^2 (k_2-k_1)^2 -2 \gamma \alpha^2 k_0 (k_2-k_1) + \lambda \alpha k_1}{yk_2} \geq - \dfrac{2 \gamma \alpha^2 (k_2-k_1-k_0)}{y}.
	\end{align*}
	Since $y < \infty$ (otherwise \eqref{eq: condition formula for y} would not hold), this is equivalent to:
	\begin{align*}
		2 \gamma \alpha^2 \left(k_0(k_2-k_1) + k_2(k_2-k_1-k_0)\right) - \gamma \alpha^2 (k_2^2 -2 k_1k_2 + k_1^2) \geq \lambda \alpha k_1.
	\end{align*}
	This again is equivalent to:
	\begin{align*}
		\lambda k_1 \leq \gamma \alpha \left(2k_0k_2-2k_0k_1+2k_2^2-2k_1k_2-2k_0k_2-k_2^2+2k_1k_2-k_1^2\right) = \gamma \alpha \left(k_2^2 - k_1^2 - 2k_0k_1\right).
	\end{align*}
	Now, if $k_1=0$ or if $\lambda < \gamma \alpha \left(\frac{k_2^2}{k_1} -k_1-2k_0\right)$ the inequality is fulfilled since $k_2>0$. Thus, statement (i) follows.
	
	Second, we prove (ii). By defintion, it holds that $\hat{\xi}>0$ in order to get $\xi _{2}^{*} > \tilde{\alpha} \hat{\xi}$. As in the proof of (i), we start by reformulating the first condition. We note that $\xi_2^* > \tilde{\alpha} \hat{\xi}$ if and only if $\tilde{\xi}_2^* > \tilde{\alpha} \hat{\xi}$ which is equivalent to:
	\begin{align*}
		\dfrac{\alpha \lambda}{y}-\dfrac{\gamma \alpha^2 (k_2-k_1)^2 -2 \gamma \alpha^2 k_0 (k_2-k_1) + \lambda \alpha k_1}{yk_2} > \dfrac{\tilde{\alpha} \lambda}{y} - \dfrac{2 \gamma \alpha \tilde{\alpha} (k_2-k_1-k_0)}{y}.
	\end{align*}
	From this inequality, it follows that $y < \infty$ and 
	\begin{align*}
		\lambda (\alpha-\tilde{\alpha}) k_2 - \lambda \alpha k_1 + \gamma \alpha^2 \left( -k_2^2 + 2k_1k_2 - k_1^2 + 2k_0k_2-2k_0k_1 \right) > 2 \gamma \alpha \tilde{\alpha} \left(-k_2^2+k_1k_2+k_0k_2\right).
	\end{align*}
	Since $\alpha-\tilde{\alpha} = \alpha_2$, this is equivalent to:
	\begin{align}
		\lambda (\alpha_2k_2-\alpha k_1) > - \gamma \alpha \left( k_2^2 (\tilde{\alpha} - \alpha_2) + 2k_1k_2 \alpha_2 + 2k_0k_2 \alpha_2 - \alpha k_1^2 - 2\alpha k_0k_1 \right). \label{eq: reformulation 3.4 ii}
	\end{align}
	As before, we also reformulate the equality $\xi_1^* = \tilde{\alpha} \hat{\xi}$. This equality is indeed true if $\tilde{\xi}_1^* \geq \tilde{\alpha} \hat{\xi}$. If the term under the square root in the definition of $\tilde{\xi}_1^*$ (see Theorem \ref{optimal wealth}) is $0$, then $\tilde{\xi}_1^* \geq \tilde{\alpha} \hat{\xi}$ is fulfilled since $\frac{2 \gamma \tilde{\alpha}^2 k_2}{y} \geq 0$. Now, if this term is positive, $\tilde{\xi}_1^* \geq \tilde{\alpha} \hat{\xi}$ is equivalent to:
	\begin{align*}
		\dfrac{2 \gamma \tilde{\alpha}^2 k_2}{y} \geq \dfrac{2 \gamma \tilde{\alpha}}{y} \sqrt{ (\alpha(k_0+k_1)-\alpha_2k_2)^2 -\alpha^2k_0^2+ \frac{\lambda}{\gamma} (\alpha k_1-\alpha_2k_2)}.
	\end{align*}
	Then, since $y < \infty$, this is equivalent to:
	\begin{align*}
		\tilde{\alpha}^2 k_2^2 \geq \alpha^2k_0^2 + 2\alpha^2k_0k_1 + \alpha^2k_1^2-2\alpha\alpha_2 k_0k_2 - 2\alpha\alpha_2k_1k_2 + \alpha_2^2k_2^2 - \alpha^2k_0^2 + \frac{\lambda}{\gamma} (\alpha k_1-\alpha_2k_2),
	\end{align*}
	which is again equivalent to:
	\begin{align*}
		\lambda (\alpha_2k_2-\alpha k_1) \geq -\gamma \alpha \left( k_2^2 \left(\frac{\tilde{\alpha}^2-\alpha_2^2}{\alpha}\right)  + 2 k_1k_2 \alpha_2 + 2 k_0k_2 \alpha_2 - \alpha k_1^2 - 2 \alpha k_0k_1 \right).
	\end{align*}
	Now, we get that $\frac{\tilde{\alpha}^2-\alpha_2^2}{\alpha} = \frac{(\tilde{\alpha}-\alpha_2)(\tilde{\alpha}+\alpha_2)}{\alpha} = \tilde{\alpha}-\alpha_2$ since $\tilde{\alpha}+\alpha_2=\alpha$. Hence, the inequality is fulfilled if $\lambda (\alpha_2k_2-\alpha k_1) > - \gamma \alpha \left( k_2^2 (\tilde{\alpha} - \alpha_2) + 2k_1k_2 \alpha_2 + 2k_0k_2 \alpha_2 - \alpha k_1^2 - 2\alpha k_0k_1 \right)$ which by equation \eqref{eq: reformulation 3.4 ii} implies statement (ii) and finishes the proof.
\end{myproof}

\begin{myproof}[Proof of Proposition \ref{prop: xhat continuous}]
	Using the two properties proven in the proof of Proposition \ref{prop: * Eigenschaften}, we only have to check the boundary values $\xi_1^*=\tilde{\alpha} \hat{\xi}$ and $\xi_2^* = \alpha \hat{\xi}$ for continuity. However, this follows immediately by plugging in the values. Now, if $k_1>0$, we see that $\hat{X}_T > 0$ if $\xi^* \neq \xi_3^*$. If $\xi^* = \xi_3^*$, by the definition of $\xi^{*}$ in Theorem \ref{optimal wealth}, it actually holds that $\xi_{3}^{*}>\alpha \hat{\xi}$. We note that
	\begin{align*}
		\hat{X}_T (\xi_3^*) &= k_0 + k_1 + \dfrac{\lambda \alpha - \lambda \alpha - 2\gamma \alpha^2 k_0 +  2\gamma \alpha^2 \left(\sqrt{k_1^2 + k_1\left(2 k_0 + \frac{\lambda}{\gamma \alpha}\right)} - k_1 \right)}{2 \gamma \alpha^2} \\
		&= k_0 + k_1 - k_0 -k_1 + \sqrt{k_1^2 + k_1\left(2 k_0 + \frac{\lambda}{\gamma \alpha}\right)} = \sqrt{k_1 \left( k_1 + 2 k_0+ \frac{\lambda }{\gamma \alpha}\right)}.
	\end{align*}
	Hence, $\hat{X}_T (\xi_3^*) > 0$ if $k_1 > 0$ and $\lambda > -2 \gamma \alpha k_0 - \gamma \alpha k_1$. From the proof of Proposition \ref{y and lambda existence}, we see that $\lambda > C$ with $C \geq -2 \gamma \alpha k_0$. Thus, $\hat{X}_T (\xi_3^*) > 0$ if $k_1>0$.
\end{myproof}

\begin{myproof}[Proof of Proposition \ref{prop: equivalent-like problem}]
	{For this proof, we show that the alternative formulation of this proposition admits the same Lagrange function as our main problem given in \eqref{eq: L function}. Therefore, we first reformulate the problem as a maximization problem by replacing $\min_{u \in \mathcal{U}} \EX [\tilde{L}(X_T)]$ with $\max_{u \in \mathcal{U}} \EX [-\tilde{L}(X_T)]$. Then, we get for the Lagrange function with $X \geq 0$:
	\begin{align*}
		L(X,y,\lambda) &= -\tilde{L}(X) - \lambda [\alpha k_0-\alpha(X-k_1) \1_{k_1 \leq X < k_2} - \tilde{\alpha}(X-k_2+\tfrac{\alpha}{\tilde{\alpha}}(k_2-k_1)) \1_{X \geq k_2}] -y \xi X \\
		&= \begin{cases}
				-\gamma \alpha^2 k_0^2 - \lambda \alpha k_0 -y \xi X, & \text{if } X \in [0,k_1), \\
				-\gamma \alpha^2 (X-k_1-k_0)^2 - \lambda \alpha k_0 + \lambda \alpha(X-k_1) -y \xi X, & \text{if } X \in [k_1,k_2), \\
				-\gamma \tilde{\alpha}^2 (X-k_2+\tfrac{\alpha}{\tilde{\alpha}}(k_2-k_1-k_0))^2 - \lambda \alpha k_0 & \\
				\hspace{3cm}+ \lambda \tilde{\alpha}(X-k_2+\tfrac{\alpha}{\tilde{\alpha}}(k_2-k_1)) -y \xi X, & \text{if } X \in [k_2,\infty),
		\end{cases} \\
		&= \begin{cases}
			-\gamma \alpha^2 k_0^2 - \lambda \alpha k_0 -y \xi X, & \text{if } X \in [0,k_1), \\
			-\gamma \alpha^2 (X-k_1-k_0)^2 + \lambda \alpha(X-k_1-k_0) -y \xi X, & \text{if } X \in [k_1,k_2), \\
			-\gamma (\tilde{\alpha}(X-k_2)+\alpha(k_2-k_1-k_0))^2 & \\
			\hspace{3cm}+ \lambda (\tilde{\alpha}(X-k_2)+\alpha(k_2-k_1-k_0)) -y \xi X, & \text{if } X \in [k_2,\infty).
		\end{cases}
	\end{align*}
	By comparing the terms with \eqref{eq: L function}, we observe that the Lagrange functions are identical which gives us the proof.}
\end{myproof}

\begin{myproof}[Proof of Theorem \ref{optimal strategy}]
	From \eqref{X definition} and \eqref{xi definition}, we conclude using It{\^o}'s formula that $\xi_t \hat{X}_t$ is a martingale. Hence, we get using Theorem \ref{optimal wealth}:
	\begin{align} \label{calculation formula xhat_t}
		\hat{X}_t =&  \EX \left[ \left. \dfrac{\xi_T}{\xi_t} \hat{X}_T \right| \mathcal{F}_t\right] \notag \\
		=& \left(k_2 + \dfrac{\lambda}{2 \gamma \tilde{\alpha}} - \dfrac{\alpha}{\tilde{\alpha}}(k_2-k_1-k_0) \right) \EX \left[ \left. \dfrac{\xi_T}{\xi_t} \1_{\xi_T \leq \xi_1^*} \right| \mathcal{F}_t\right] - \dfrac{y}{2\gamma \tilde{\alpha}^2} \EX \left[ \left. \dfrac{\xi_T^2}{\xi_t} \1_{\xi_T \leq \xi_1^*} \right| \mathcal{F}_t\right] \notag \\
		&+ k_2 \EX \left[ \left. \dfrac{\xi_T}{\xi_t} \1_{\tilde{\alpha} \hat{\xi} < \xi_T \leq \xi_2^*} \right| \mathcal{F}_t\right] + \left(k_0 + k_1 + \dfrac{\lambda}{2 \gamma \alpha} \right) \EX \left[ \left. \dfrac{\xi_T}{\xi_t} \1_{\alpha \hat{\xi} < \xi_T \leq \xi_3^*} \right| \mathcal{F}_t\right] \notag \\
		&-\dfrac{y}{2\gamma \alpha^2} \EX \left[ \left. \dfrac{\xi_T^2}{\xi_t} \1_{\alpha \hat{\xi} < \xi_T \leq \xi_3^*} \right| \mathcal{F}_t\right]. 
	\end{align}
	Now, the claim for $\hat{X}_t$ follows from the formula for the conditional expectation of log-normal distributions (see \eqref{conditional expectation log normal}) using that it holds conditionally on $\mathcal{F}_t$:
	\begin{align*}
		\frac{\xi_T}{\xi_t} &\sim \mathcal{LN} \left(-\int_t^T r_s+\frac{\norm{\kappa_s}^2}{2} \diff s, \int_t^T \norm{\kappa_s}^2 \diff s\right), \\
		\frac{\xi_T^2}{\xi_t} &\sim \mathcal{LN} \left(\ln \xi_t -2 \int_t^T r_s+\frac{\norm{\kappa_s}^2}{2} \diff s, 4 \int_t^T\norm{\kappa_s}^2 \diff s\right),
	\end{align*}
	where $\mathcal{LN}$ denotes a log-normal distribution.\\
	The next step is to calculate the volatility process of $\hat{X}_t$, i.e., the term before the $\diff W_t$ in its SDE, denoted by $\sigma_{\hat{X}_t}$ using \eqref{xi definition}. There, we get $\sigma_{\hat{X}_t} = (-\kappa_t^T \xi_t) \dfrac{\partial \hat{X}_t}{\partial \xi_t} = \kappa_t^T v_t$,
	where we used that $\dfrac{\partial d_1(x)}{\partial \xi_t} = \dfrac{\partial d_2(x)}{\partial \xi_t} = \dfrac{1}{-\xi_t \sqrt{\int_t^T \norm{\kappa_s}^2 \diff s}}$. Now, we get the optimal strategy $\hat{u}_t$ by comparing the volatility processes from $\hat{X}_t$ which is $\sigma_{\hat{X}_t} = \hat{X}_t \hat{u}^T \sigma_t$. Hence, the claim follows.
\end{myproof}

\subsection{Proposition \ref{y lambda exist}}

Since the equation $\EX [\xi_T \hat{X}_T (y)]  = \xi_0 x_0$ depends on $\lambda$ and $\lambda = 1 + 2\gamma \EX \left[ F(0,T,\hat{u},x_0) \right]$ depends on $y$, we have to solve these two equations together. We show that these variables always exist in the following Proposition \ref{y lambda exist}. 

\begin{proposition} \label{y lambda exist}
	There always exists a solution for $\lambda$ and $y$ and an equation system that can be numerically solved to determine them.
\end{proposition}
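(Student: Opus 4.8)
The plan is to decouple the two defining relations — the budget constraint $\EX[\xi_T \hat{X}_T] = x_0$ and the consistency relation $\lambda = 1 + 2\gamma\EX[F(\hat{X}_T)]$ — into a nested one-dimensional root-finding problem and then apply the intermediate value theorem twice. Both relations depend on $(\lambda,y)$ only through the explicit piecewise-linear terminal wealth \eqref{eq: optimal terminal wealth}. Since $\xi_T$ is log-normally distributed (as exploited in the proof of Theorem \ref{optimal strategy}), the two expectations $\EX[\xi_T\hat{X}_T]$ and $\EX[F(\hat{X}_T)]$ can be written in closed form through $\Phi$ and $\varphi$ evaluated at the $d_1,d_2$-type arguments of the thresholds $\xi_1^*,\xi_2^*,\xi_3^*,\hat{\xi}$; this closed-form pair of equations is exactly the system that one solves numerically. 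It therefore remains to prove that this system admits a solution with $\lambda,y>0$.

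First I would treat $\lambda$ as a fixed parameter and solve the budget equation for $y$. Writing $B(\lambda,y):=\EX[\xi_T\hat{X}_T(\lambda,y)]$, each branch of $\hat{X}_T$ is non-increasing in $y$ for fixed $\xi_T$ (the two linear branches satisfy $\partial_y\hat{X}_T=-\xi_T/(2\gamma\tilde{\alpha}^2)<0$ resp. $-\xi_T/(2\gamma\alpha^2)<0$, while the plateau $k_2$ is constant), and the support $(0,\xi^*]$ on which $\hat{X}_T>0$ shrinks as $y$ grows because $\hat{\xi},\bar{\xi}$, and hence $\xi^*$, are decreasing in $y$. Combining these, $B(\lambda,\cdot)$ is continuous and strictly decreasing on $(0,\infty)$, with $B(\lambda,y)\to0$ as $y\to\infty$ (the support collapses, so dominated convergence applies) and $B(\lambda,y)>x_0$ for $y$ small. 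The intermediate value theorem then yields a unique $y(\lambda)>0$ with $B(\lambda,y(\lambda))=x_0$, and strict monotonicity together with the joint continuity of $B$ makes $\lambda\mapsto y(\lambda)$ continuous.

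Second I would substitute to obtain the scalar map $\Psi(\lambda):=1+2\gamma\EX[F(\hat{X}_T(\lambda,y(\lambda)))]$ and seek a fixed point $\lambda=\Psi(\lambda)$ via the intermediate value theorem applied to $h(\lambda):=\Psi(\lambda)-\lambda$; continuity of $h$ follows from that of $y(\lambda)$ and dominated convergence. The lower boundary is essentially free: one has $F(\hat{X}_T)\geq-\alpha k_0$ pointwise, with strict inequality on the positive-measure set $\{\hat{X}_T>0\}$ (the budget constraint with $x_0>0$ forbids $\hat{X}_T\equiv0$, and $\hat{X}_T$ never lands in the flat region below $k_1$), so $\EX[F(\hat{X}_T)]>-\alpha k_0$ and thus $h$ is strictly positive at the natural lower endpoint $C=1-2\gamma\alpha k_0$ (which indeed satisfies $C\ge-2\gamma\alpha k_0$, matching the bound invoked in the proof of Proposition \ref{prop: xhat continuous}); positivity of $\lambda$ at the eventual root follows from the same bound. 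The remaining task is to show $h(\lambda)<0$ for large $\lambda$.

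The main obstacle is precisely this last sign change: controlling $\EX[F(\hat{X}_T(\lambda,y(\lambda)))]$ as $\lambda\to\infty$. When $\lambda$ grows, $y(\lambda)$ must grow in tandem to keep $B=x_0$, so one cannot read off the behaviour of $\Psi$ from its explicit $\lambda$-dependence alone; instead one must track how the thresholds $\xi_1^*,\xi_2^*,\xi_3^*$ (which scale like the ratio $\lambda/y$) and the log-normal tail integrals behave jointly in $(\lambda,y(\lambda))$, and establish that the expected insurer payoff grows strictly sub-linearly in $\lambda$, so that $\Psi(\lambda)<\lambda$ eventually. This joint asymptotic estimate — delicate because $F$ is unbounded above while the budget ties the two multipliers together — is where the real work lies. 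Once it is in place, continuity of $h$ and the two opposite signs produce a root $\hat{\lambda}$, and $(\hat{\lambda},y(\hat{\lambda}))$ is the desired solution of the system.
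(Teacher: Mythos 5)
Your architecture is the same as the paper's (Lemma \ref{y and lambda existence}): write the system of Lemma \ref{y and lambda formulas} as $f_1(y,\lambda)=0=f_2(y,\lambda)$, use monotonicity and the intermediate value theorem in $y$ to define a continuous branch $\lambda\mapsto y^*_\lambda$ solving the budget equation, and then apply the intermediate value theorem to $\lambda\mapsto f_2(y^*_\lambda,\lambda)$. However, two of your steps have genuine gaps. First, the left endpoint is not ``essentially free.'' The budget equation is only solvable for $\lambda$ above the threshold $C$ of the paper, which equals $2\gamma\alpha(x_0e^{\int_0^Tr_s\diff s}-k_1-k_0)$ resp.\ $2\gamma(\tilde{\alpha} x_0e^{\int_0^Tr_s\diff s}+\alpha_2k_2-\alpha(k_0+k_1))$ when $x_0e^{\int_0^Tr_s\diff s}>k_1$, and this can strictly exceed your proposed endpoint $1-2\gamma\alpha k_0$ (take $x_0$ or $\gamma$ large). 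In that regime the pointwise bound $F(\hat X_T)\geq-\alpha k_0$ only gives $\Psi(\lambda)>1-2\gamma\alpha k_0$, which does not imply $\Psi(\lambda)>\lambda$ for $\lambda$ near $C$. The paper instead proves $y^*_\lambda\to0$ as $\lambda\searrow C$ (statement \eqref{eq: y star lambda C}) and then shows that the terms multiplying $\lambda$ and $2\gamma\alpha(k_2-k_1-k_0)$ in $f_2$ cancel in the limit, leaving $\liminf_{\lambda\searrow C}f_2\geq1$ (statement \eqref{eq: f2 C}); this requires a case analysis over the accumulation points of $\hat{\xi}$ and is among the most technical parts of the argument. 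Relatedly, your claim that $B(\lambda,y)>x_0$ for small $y$ at \emph{fixed} $\lambda$ is simply false for $\lambda\leq C$; the restriction $\lambda>C$ must be imposed from the outset (this is the content of \eqref{eq: f1 0}, which itself needs the auxiliary facts \eqref{eq: xi3* bigger 0} and \eqref{eq: xi1* bigger 0}).

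Second, you explicitly leave the sign change at $+\infty$ unproven (``this joint asymptotic estimate \dots is where the real work lies''). That estimate is the other half of the proof, not a routine verification: the paper first shows $y^*_\lambda\to\infty$ (statement \eqref{eq: y star lambda infty}), then splits according to whether $\lambda/y^*_\lambda$ tends to $0$, to a finite positive constant, or to $\infty$ along subsequences, and in each case identifies which of the terms $\lambda(-1+\cdots)$ and $-y(\cdots)$ in $f_2$ forces $f_2\to-\infty$ (statement \eqref{eq: f2 infty}). Without this, your intermediate value argument has only one usable endpoint. So the skeleton is right, but the two boundary estimates that make the intermediate value theorem applicable are precisely the substance of the paper's proof, and they are respectively incorrect and missing in yours.
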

	
We split this proof into two lemmas. In the first Lemma \ref{y and lambda formulas}, we give the equation system to solve for the two parameters $y$ and $\lambda$, where we added $y$ and $\lambda$ as a superscript to help the reader follow the interdependence. In the second Lemma \ref{y and lambda existence}, we show that these parameters exist: 
	
\begin{lemma} \label{y and lambda formulas}
		The variables $y$ and $\lambda$ are the solution of the following equation system:
		\begin{align*}
			x_0 =& \left(k_2 + \dfrac{\lambda}{2 \gamma \tilde{\alpha}} - \dfrac{\alpha}{\tilde{\alpha}}(k_2-k_1-k_0) \right) e^{-\int_0^T r_s \diff s} \Phi\left(d_1 \left(\xi_1^{*,y,\lambda},0\right)\right) \\
			&- \dfrac{y}{2\gamma \tilde{\alpha}^2} e^{\int_0^T -2r_s+\norm{\kappa_s}^2 \diff s} \Phi \left( d_2 \left(\xi_1^{*,y,\lambda},0\right) \right)  \\
			&+ k_2 e^{-\int_0^T r_s \diff s} \left( \Phi\left(d_1 \left(\xi_2^{*,y,\lambda},0\right)\right) - \Phi\left(d_1 \left(\tilde{\alpha} \hat{\xi}^{y,\lambda},0\right)\right) \right) \\
			&+ \left(k_0+k_1 + \dfrac{\lambda}{2 \gamma \alpha} \right) e^{-\int_0^T r_s \diff s} \left( \Phi\left(d_1 \left(\xi_3^{*,y,\lambda},0\right)\right) - \Phi\left(d_1 \left(\alpha \hat{\xi}^{y,\lambda},0\right)\right) \right) \\
			&- \dfrac{y}{2\gamma \alpha^2} e^{\int_0^T -2r_s+\norm{\kappa_s}^2 \diff s} \left(\Phi \left( d_2 \left(\xi_3^{*,y,\lambda},0\right) \right) - \Phi \left( d_2 \left(\alpha \hat{\xi}^{y,\lambda},0\right) \right) \right) ,\\
			\lambda =& \ 1-2\gamma\alpha k_0 + 2\gamma\alpha k_0 \left( \Phi\left(d_0 \left(\xi_3^{*,y,\lambda},0\right)\right) - \Phi\left(d_0 \left(\alpha \hat{\xi}^{y,\lambda},0\right)\right) + \Phi\left(d_0 \left(\xi_2^{*,y,\lambda},0\right)\right) \right. \\
			&\hspace{102pt}- \left. \Phi\left(d_0 \left(\tilde{\alpha} \hat{\xi}^{y,\lambda},0\right)\right) + \Phi\left(d_0 \left(\xi_1^{*,y,\lambda},0\right)\right) \right) \\
			&+ \lambda \left( \Phi\left(d_0 \left(\xi_3^{*,y,\lambda},0\right)\right) - \Phi\left(d_0 \left(\alpha \hat{\xi}^{y,\lambda},0\right)\right) + \Phi\left(d_0 \left(\xi_1^{*,y,\lambda},0\right)\right) \right) \\
			&- y \left( \dfrac{1}{\alpha}\Phi\left(d_1 \left(\xi_3^{*,y,\lambda},0\right)\right) - \dfrac{1}{\alpha}\Phi\left(d_1 \left(\alpha \hat{\xi}^{y,\lambda},0\right)\right) + \dfrac{1}{\tilde{\alpha}}\Phi\left(d_1 \left(\xi_1^{*,y,\lambda},0\right)\right) \right) \\
			&+2 \gamma \alpha (k_2-k_1-k_0) \left( \Phi\left(d_0 \left(\xi_2^{*,y,\lambda},0\right)\right) - \Phi\left(d_0 \left(\tilde{\alpha} \hat{\xi}^{y,\lambda},0\right)\right) \right)
		\end{align*}
		with
		\begin{align*}
			d_0 (x,t) =& \dfrac{\ln x - \ln \xi_t + \int_t^T r_s + \frac{\norm{\kappa_s}^2}{2} \diff s}{\sqrt{\int_t^T \norm{\kappa_s}^2 \diff s}}.
		\end{align*}
		The values $\hat{\xi}^{y,\lambda}$, $\xi_1^{*,y,\lambda}$, $\xi_2^{*,y,\lambda}$, and $\xi_3^{*,y,\lambda}$ are defined as in Theorem \ref{optimal wealth} and the functions $d_1$ and $d_2$ are defined as in Theorem \ref{optimal strategy}.
\end{lemma}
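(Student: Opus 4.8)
The plan is to obtain the two equations directly from the two relations that pin down $y$ and $\lambda$, namely the budget constraint $\EX[\xi_T \hat{X}_T] = \xi_0 x_0$ and the defining identity $\lambda = 1 + 2\gamma\,\EX[F(\hat{X}_T)]$. Both right-hand sides are expectations of explicit piecewise functions of the single log-normal variable $\xi_T$, so the whole lemma reduces to evaluating these two expectations and recording that the thresholds $\hat{\xi},\xi_1^*,\xi_2^*,\xi_3^*$ depend on both $y$ and $\lambda$; this coupling is exactly what forces a genuine $2\times 2$ system, which is why we carry the superscripts $y,\lambda$ throughout.

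For the first equation I would simply specialise the expression for $\hat{X}_t$ from Theorem \ref{optimal strategy} to $t=0$. Since $\xi_0 = 1$ and $\mathcal{F}_0$ is trivial, the martingale property of $\xi_t\hat{X}_t$ gives $\hat{X}_0 = \EX[(\xi_T/\xi_0)\hat{X}_T] = \EX[\xi_T \hat{X}_T]$, so the budget constraint reads precisely $\hat{X}_0 = x_0$. Setting $t=0$ collapses every factor $\xi_t\,e^{\int_t^T -(2r_s-\norm{\kappa_s}^2)\diff s}$ into $e^{\int_0^T(-2r_s+\norm{\kappa_s}^2)\diff s}$ and every $d_i(\cdot,t)$ into $d_i(\cdot,0)$, which reproduces the first displayed equation verbatim. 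Nothing beyond Theorem \ref{optimal strategy} is needed here.

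For the second equation I would take the expectation of the insurer's payoff $F(\hat{X}_T)$ in the piecewise form given in the corollary after Theorem \ref{optimal wealth}. The only analytic inputs are the log-normal cumulative distribution $\PP(\xi_T \le x) = \Phi(d_0(x,0))$ and the truncated first moment $\EX[\xi_T \1_{\xi_T \le x}]$, expressible through $\Phi(d_1(x,0))$; both follow from the conditional log-normal moment formula already used in the proof of Theorem \ref{optimal strategy}, now evaluated at $t=0$. The former produces the $d_0$-terms and the latter the $d_1$-terms. Applying these on each of the three active regions $(0,\xi_1^*]$, $(\tilde{\alpha}\hat{\xi},\xi_2^*]$, $(\alpha\hat{\xi},\xi_3^*]$ and substituting into $\lambda = 1 + 2\gamma\,\EX[F(\hat{X}_T)]$ yields all the $d_0$- and $d_1$-terms of the stated formula, the $\alpha(k_2-k_1-k_0)$-block arising from the constant middle region.

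The only non-mechanical point — and the step I expect to be the main obstacle — is the treatment of the constant piece $F(\hat{X}_T) = F(0) = -\alpha k_0$ on the complement region $\{\xi_T > \xi^*\}$, whose contribution $-\alpha k_0\,\PP(\xi_T > \xi^*)$ must be re-expressed through the thresholds that actually appear in the formula. Here I would invoke Proposition \ref{prop: * Eigenschaften}, which shows that the three active intervals tile $(0,\xi^*]$, so that $\PP(\xi_T > \xi^*) = 1 - \Phi(d_0(\xi_1^*,0)) - \big(\Phi(d_0(\xi_2^*,0)) - \Phi(d_0(\tilde{\alpha}\hat{\xi},0))\big) - \big(\Phi(d_0(\xi_3^*,0)) - \Phi(d_0(\alpha\hat{\xi},0))\big)$. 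Feeding this into the $2\gamma(-\alpha k_0)$-term reproduces exactly the constant $1 - 2\gamma\alpha k_0$ together with the block of $2\gamma\alpha k_0$-weighted $d_0$-terms, after which collecting the coefficients of $\lambda$ and of $y$ matches the asserted equation and completes the derivation of the system.
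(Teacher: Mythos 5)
Your proposal is correct and follows essentially the same route as the paper's proof: the first equation is obtained by evaluating the martingale representation \eqref{calculation formula xhat_t} of $\hat{X}_t$ at $t=0$ and imposing $\EX[\xi_T\hat{X}_T]=x_0$, and the second by computing $\EX[F(\hat{X}_T)]$ region by region via the log-normal identities $\EX[\1_{\xi_T\in[a,b]}]=\Phi(d_0(b,0))-\Phi(d_0(a,0))$ and $\EX[\xi_T\1_{\xi_T\in[a,b]}]=\Phi(d_1(b,0))-\Phi(d_1(a,0))$. In particular, your handling of the complement region, rewriting $-\alpha k_0\,\PP(\hat{X}_T<k_1)$ as $-\alpha k_0$ plus the $2\gamma\alpha k_0$-weighted interval probabilities using the tiling of $(0,\xi^*]$, is exactly the step the paper performs (there justified via \eqref{eq: optimal terminal wealth} combined with Proposition \ref{prop: xhat continuous}).
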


\begin{proof}
	The first formula follows from the definition $\EX [\xi_T \hat{X}_T (y)]  = \xi_0 x_0 = x_0$ and \eqref{calculation formula xhat_t} for $t=0$. The second formula follows from the definition $\lambda = 1 + 2\gamma \EX \left[ F(0,T,\hat{u},x_0) \right]$ and the following properties for arbitrary $0\leq a <b<\infty$:
	\begin{align}
		\hat{X}_T =& \left(k_2 + \dfrac{\lambda \tilde{\alpha} - y \xi_T}{2 \gamma \tilde{\alpha}^2} - \dfrac{\alpha}{\tilde{\alpha}} (k_2-k_1-k_0)\right) \1_{\xi_T \in (0,\xi_1^*]} + k_2 \1_{\xi_T \in [\tilde{\alpha}\hat{\xi},\xi_2^*]} \notag \\
		&+ \left(k_0 + k_1 + \dfrac{\lambda \alpha - y \xi_T}{2 \gamma \alpha^2}\right) \1_{\xi_T \in [\alpha\hat{\xi},\xi_3^*]}, \notag \\
		\ln \xi_T \sim&\ \mathcal{N} \left(-\int_0^T r_s+\frac{\norm{\kappa_s}^2}{2} \diff s, \int_0^T \norm{\kappa_s}^2 \diff s\right), \notag\\
		\EX[\xi_T \1_{\xi_T \in [a,b]}] =& \ \Phi \left( d_1(b,0)\right) - \Phi \left( d_1(a,0)\right), \label{equation: EX1 xi} \\
		\EX[\1_{\xi_T \in [a,b]}] =& \ \PP(\xi_T \in [a,b]) = \Phi \left( d_0(b,0)\right) - \Phi \left( d_0(a,0)\right). \label{equation: EX1} 
	\end{align}
	Moreover, we conclude from \eqref{eq: optimal terminal wealth} (combined with Proposition \ref{prop: xhat continuous}) that $\1_{\hat{X}_T \geq k_2} = \1_{\xi_T \in (0,\xi_1^*]} + \1_{\xi_T \in (\tilde{\alpha} \hat{\xi},\xi_2^*]}$, $\1_{\hat{X}_T \in [k_1,k_2)} = \1_{\xi_T \in (\alpha \hat{\xi} ,\xi_3^*]}$, and $\1_{\hat{X}_T < k_1} = \1_{\hat{X}_T =0} = \1_{\xi_T \not\in (0,\xi_1^*] \cup (\tilde{\alpha} \hat{\xi},\xi_2^*] \cup (\alpha \hat{\xi} ,\xi_3^*]}$.\\
	Indeed, we get:
	\begin{align*}
		\EX \left[ F(0,T,\hat{u},x_0) \right] =& \EX \left[ -\alpha k_0 \1_{\hat{X}_T < k_1} + \alpha (\hat{X}_T-k_1-k_0) \1_{\hat{X}_T \in [k_1,k_2)} + \tilde{\alpha} (\hat{X}_T-k_2) \1_{\hat{X}_T \geq k_2} \right] \\
		&+ \EX \left[ \alpha (k_2-k_1-k_0) \1_{\hat{X}_T \geq k_2} \right] \\
		=& -\alpha k_0 \PP(\hat{X}_T < k_1) + \alpha \EX[(\hat{X}_T-k_1-k_0) \1_{\xi_T \in (\alpha \hat{\xi} ,\xi_3^*]}] \\
		&+ \EX \left[ \left(\tilde{\alpha} (\hat{X}_T-k_2) + \alpha (k_2-k_1-k_0)\right) \left(\1_{\xi_T \in (0,\xi_1^*]} + \1_{\xi_T \in (\tilde{\alpha} \hat{\xi},\xi_2^*]} \right)\right] \\
		=& -\alpha k_0 + \alpha k_0 \left( \PP(\xi_T \in (0,\xi_1^*] ) + \PP(\xi_T \in (\tilde{\alpha} \hat{\xi},\xi_2^*]) + \PP(\xi_T \in (\alpha \hat{\xi} ,\xi_3^*])\right) \\
		&+ \dfrac{\lambda}{2 \gamma} \PP (\xi_T \in (\alpha \hat{\xi} ,\xi_3^*]) - \dfrac{y}{2 \gamma \alpha} \EX \left[ \xi_T \1_{\xi_T \in (\alpha \hat{\xi} ,\xi_3^*]}\right] + \dfrac{\lambda}{2 \gamma} \PP (\xi_T \in (0,\xi_1^*]) \\
		&- \dfrac{y}{2 \gamma \tilde{\alpha}} \EX \left[ \xi_T \1_{\xi_T \in (0,\xi_1^*]}\right] + \alpha (k_2-k_1-k_0) \PP(\xi_T \in (\tilde{\alpha} \hat{\xi},\xi_2^*]),
	\end{align*}
	where we used the above-discussed results in the second and \eqref{eq: optimal terminal wealth} in the third equation.
	The claim follows with \eqref{equation: EX1 xi} and \eqref{equation: EX1}.
\end{proof}

\begin{lemma} \label{y and lambda existence}
	The equation system from Lemma \ref{y and lambda formulas} admits a solution.
\end{lemma}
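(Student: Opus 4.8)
The plan is to decouple the two equations by solving the budget constraint for $y$ as a function of $\lambda$, reducing everything to a one-dimensional root-finding problem to which I can apply the intermediate value theorem. First I would fix $\lambda$ in the range where the problem is nondegenerate and study the budget map $y\mapsto B(y,\lambda):=\EX[\xi_T\hat X_T(y,\lambda)]$. All thresholds $\hat\xi,\xi_1^*,\xi_2^*,\xi_3^*,\xi^*$ of Theorem~\ref{optimal wealth} are continuous in $(y,\lambda)$ (being built from $\max$, $\min$ and square roots of continuous expressions), and $\hat X_T(\xi_T)$ is dominated by its value $X_{\max}(\lambda):=k_2+\frac{\lambda}{2\gamma\tilde{\alpha}}-\frac{\alpha}{\tilde{\alpha}}(k_2-k_1-k_0)$ attained as $\xi_T\downarrow0$; so dominated convergence makes $B(\cdot,\lambda)$ continuous. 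Since the Lagrangian $L(X,y)$ of \eqref{eq: L function} has decreasing differences in $(X,y)$ (because $\partial^2 L/\partial X\partial y=-\xi<0$), its maximiser $\hat X_T$ is non-increasing in $y$, strictly so on the regions where $\hat X_T>0$ (as is also visible from the explicit formula); hence $B(\cdot,\lambda)$ is strictly decreasing with $\lim_{y\to\infty}B=0$ and $\lim_{y\to0^+}B=X_{\max}(\lambda)\,e^{-\int_0^T r_s\diff s}$. Thus $B(y,\lambda)=x_0$ has a unique solution $y(\lambda)$ exactly when $\lambda>\lambda_0$, where $\lambda_0$ solves $X_{\max}(\lambda_0)e^{-\int_0^T r_s\diff s}=x_0$, and $y(\lambda)$ is continuous by strict monotonicity.

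Second, I would plug in $y(\lambda)$ and look for a zero of $H(\lambda):=1+2\gamma\,\EX[F(\hat X_T)]-\lambda$ on $(\lambda_0,\infty)$; a zero produces the pair $(\lambda^*,y^*)=(\lambda^*,y(\lambda^*))$ solving the whole system. The crucial simplification is the pointwise inequality $\frac{\lambda}{2\gamma}-F(\hat X_T)\ge0$, which one checks region by region from the corollary after Theorem~\ref{optimal wealth}: the difference equals $\frac{y\xi_T}{2\gamma\tilde{\alpha}}$, $\frac\lambda{2\gamma}-\alpha(k_2-k_1-k_0)$, $\frac{y\xi_T}{2\gamma\alpha}$ and $\frac{\lambda}{2\gamma}+\alpha k_0$ on the four regions (nonnegativity on the plateau uses that $\xi^{\text{region 2}}\neq\emptyset$ forces $\frac\lambda{2\gamma}>\alpha(k_2-k_1-k_0)$). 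This rewrites $H$ as the continuous function $H(\lambda)=1-2\gamma\,\EX\!\left[\frac{\lambda}{2\gamma}-F(\hat X_T)\right]$, which is far easier to sign.

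At the lower end, as $\lambda\downarrow\lambda_0$ the budget target approaches the supremum of the budget range, forcing $y(\lambda)\to0^+$; then $\hat X_T\to X_{\max}(\lambda)$ for a.e.\ $\xi_T$, all the mass concentrates in the first region, and $\frac{\lambda}{2\gamma}-F(\hat X_T)=\frac{y\xi_T}{2\gamma\tilde{\alpha}}\to0$ in $L^1$, giving $H(\lambda)\to1>0$. At the upper end I would use $\{\hat X_T=0\}=\{\xi_T>\xi^*\}$ from Theorem~\ref{optimal wealth} together with Proposition~\ref{prop: xhat continuous} to bound $\EX[\frac{\lambda}{2\gamma}-F(\hat X_T)]\ge(\frac{\lambda}{2\gamma}+\alpha k_0)\,\PP(\xi_T>\xi^*)$, so that $H(\lambda)\to-\infty$ as soon as $\PP(\xi_T>\xi^*)$ is bounded away from $0$. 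The intermediate value theorem applied to the continuous $H$ on $(\lambda_0,\infty)$ then yields a zero $\lambda^*$, and $y^*=y(\lambda^*)$ completes the existence proof.

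The hard part is exactly this last a priori estimate: controlling $\xi^*$, equivalently the growth of $y(\lambda)$, as $\lambda\to\infty$. I would show that along the budget curve $y(\lambda)$ is of order $\lambda$. Indeed, if $y=\ao(\lambda)$ then $\xi_1^*\to\infty$, the first region covers all of $(0,\infty)$ with $\hat X_T\sim X_{\max}(\lambda)\to\infty$, and expanding $B=X_{\max}(\lambda)\,\EX[\xi_T\1_{\xi_T\le\xi_1^*}]-\frac{y}{2\gamma\tilde{\alpha}^2}\EX[\xi_T^2\1_{\xi_T\le\xi_1^*}]$ shows the two leading (unbounded) terms must cancel to keep $B=x_0$, which already forces $y$ to grow like $\lambda$; conversely $y\gg\lambda$ shrinks the support so fast that $B\to0<x_0$. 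Once $y(\lambda)$ is of order $\lambda$ one gets $\hat\xi$ of order $1$ and $\xi^*=\xi_3^*\to\bar\xi$ bounded, so $\PP(\xi_T>\xi^*)$ is bounded below by a positive constant and the upper estimate closes. Making this cancellation rigorous and uniform, and checking the degenerate cases ($k_1=0$, or $\tilde{\alpha}\hat\xi=0$ when $\alpha_2=\alpha$), is the delicate step of the argument.
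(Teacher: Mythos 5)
Your overall architecture is the same as the paper's: solve the budget equation for $y$ as a function of $\lambda$ (using strict monotonicity of $f_1$ in $y$ and joint continuity to get a continuous curve $\lambda\mapsto y(\lambda)$), then run an intermediate-value argument in $\lambda$ on the second equation along that curve, with the value tending to something $\geq 1$ at the left endpoint and to $-\infty$ at the right. Your rewriting $H(\lambda)=1-2\gamma\,\EX[\tfrac{\lambda}{2\gamma}-F(\hat X_T)]$ with a pointwise nonnegative integrand is a genuinely nice simplification that the paper does not exploit; it makes the sign analysis transparent and immediately gives $H\leq 1$.

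However, there is a concrete gap at the lower endpoint. You define $\lambda_0$ by $X_{\max}(\lambda_0)e^{-\int_0^T r_s\,\diff s}=x_0$ with $X_{\max}(\lambda)=k_2+\tfrac{\lambda}{2\gamma\tilde{\alpha}}-\tfrac{\alpha}{\tilde{\alpha}}(k_2-k_1-k_0)$, claiming this is the supremum of $\hat X_T$ attained as $\xi_T\downarrow 0$. That is only true when $\lambda>2\gamma\alpha(k_2-k_1-k_0)$, i.e.\ when $\hat{\xi}>0$ and the first region is nonempty. If $\lambda\leq 2\gamma\alpha(k_2-k_1-k_0)$ the first region is empty, the supremum of $\hat X_T$ is $k_0+k_1+\tfrac{\lambda}{2\gamma\alpha}$, and one checks $X_{\max}(\lambda)\leq k_0+k_1+\tfrac{\lambda}{2\gamma\alpha}$ (strictly if $\alpha_2>0$). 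Hence your $\lambda_0$ is strictly larger than the correct threshold $C$ of the paper whenever $x_0e^{\int_0^T r_s\,\diff s}<k_2$, the budget target does \emph{not} approach the supremum of the budget range as $\lambda\downarrow\lambda_0$, $y(\lambda)$ does \emph{not} tend to $0$ there, and your limit $H(\lambda)\to 1$ is unjustified (and your interval may miss the true root). The fix is the paper's piecewise definition of $C$; but then the mass concentrates in the \emph{third} region (not the first) near the left endpoint when $x_0e^{\int_0^T r_s\,\diff s}\in(k_1,k_2]$, and in the case $x_0e^{\int_0^T r_s\,\diff s}\leq k_1$ one must additionally handle accumulation points with $y\not\to 0$, where $\bar{\xi}=\alpha(\lambda-C)/y\to 0$ and the mass sits at $\hat X_T=0$ — a different mechanism yielding $H\to 1$ via $C=-2\gamma\alpha k_0$. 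Separately, your right-endpoint argument hinges on the a priori estimate $\liminf_{\lambda\to\infty}y(\lambda)/\lambda>0$, which you correctly flag as the delicate, unproven step; the paper avoids needing it entirely by showing $f_2(h(\lambda),\lambda)\to-\infty$ along \emph{any} admissible curve $h$, treating the three regimes $\lambda/y\to 0$, $\lambda/y\to c$, and $\lambda/y\to\infty$ directly (in the last regime $\PP(\xi_T>\xi^*)\to 0$, yet the term $-y\cdot\tfrac{1}{\tilde{\alpha}}\Phi(d_1(\xi_1^*,0))$ still drives $f_2$ to $-\infty$), so no control of $y(\lambda)/\lambda$ is required. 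As written, your proof therefore has a real gap at the lower endpoint and an acknowledged incomplete step at the upper one.
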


\begin{proof}
	For this existence proof, we make the following two definitions by writing the two equations as functions depending on $y$ and $\lambda$:
	\begin{align}
		f_1(y,\lambda) :=& -x_0 + \left(k_2 + \dfrac{\lambda}{2 \gamma \tilde{\alpha}} - \dfrac{\alpha}{\tilde{\alpha}}(k_2-k_1-k_0) \right) e^{-\int_0^T r_s \diff s} \Phi\left(d_1 \left(\xi_1^{*},0\right)\right) \notag \\
		&- \dfrac{y}{2\gamma \tilde{\alpha}^2}  e^{\int_0^T -2r_s+\norm{\kappa_s}^2 \diff s} \Phi \left( d_2 \left(\xi_1^{*},0\right) \right)  + k_2 e^{-\int_0^T r_s \diff s} \left( \Phi\left(d_1 \left(\xi_2^{*},0\right)\right) - \Phi\left(d_1 \left(\tilde{\alpha} \hat{\xi},0\right)\right) \right) \notag \\
		&+ \left(k_0+k_1 + \dfrac{\lambda}{2 \gamma \alpha} \right) e^{-\int_0^T r_s \diff s} \left( \Phi\left(d_1 \left(\xi_3^{*},0\right)\right) - \Phi\left(d_1 \left(\alpha \hat{\xi},0\right)\right) \right) \notag \\
		&- \dfrac{y}{2\gamma \alpha^2} e^{\int_0^T -2r_s+\norm{\kappa_s}^2 \diff s} \left(\Phi \left( d_2 \left(\xi_3^{*},0\right) \right) - \Phi \left( d_2 \left(\alpha \hat{\xi},0\right) \right) \right), \label{eq: definition f1} \\		
		f_2(y,\lambda) :=& \ 1-2\gamma\alpha k_0 + 2\gamma\alpha k_0 \left( \Phi\left(d_0 \left(\xi_3^{*},0\right)\right) - \Phi\left(d_0 \left(\alpha \hat{\xi},0\right)\right) + \Phi\left(d_0 \left(\xi_2^{*},0\right)\right) \right. \notag \\
		&\hspace{102pt}- \left. \Phi\left(d_0 \left(\tilde{\alpha} \hat{\xi},0\right)\right) + \Phi\left(d_0 \left(\xi_1^{*},0\right)\right) \right) \notag \\
		&+ \lambda \left( -1 + \Phi\left(d_0 \left(\xi_3^{*},0\right)\right) - \Phi\left(d_0 \left(\alpha \hat{\xi},0\right)\right) + \Phi\left(d_0 \left(\xi_1^{*},0\right)\right) \right) \notag \\
		&- y \left( \dfrac{1}{\alpha}\Phi\left(d_1 \left(\xi_3^{*},0\right)\right) - \dfrac{1}{\alpha}\Phi\left(d_1 \left(\alpha \hat{\xi},0\right)\right) + \dfrac{1}{\tilde{\alpha}}\Phi\left(d_1 \left(\xi_1^{*},0\right)\right) \right) \notag \\
		&+2 \gamma \alpha (k_2-k_1-k_0) \left( \Phi\left(d_0 \left(\xi_2^{*},0\right)\right) - \Phi\left(d_0 \left(\tilde{\alpha} \hat{\xi},0\right)\right) \right). \notag
	\end{align}
	Note that we suppress again the dependence of $\hat{\xi}$ and $\xi_i^*$, $i \in \{1,2,3\}$, on $y$ and $\lambda$.
	By definition, the equation system from Lemma \ref{y and lambda formulas} is solved if there exist $\lambda^*$, $y^*$ such that $f_1(y^*,\lambda^*) = 0 = f_2(y^*,\lambda^*)$. 
	
	To show this, we define 
	\begin{align*}
		C := \begin{cases}
			-2\gamma \alpha k_0 & x_0 e^{\int_0^T r_s \diff s} \leq k_1, \\
			2 \gamma \alpha (x_0 e^{\int_0^T r_s \diff s}-k_1-k_0) & x_0 e^{\int_0^T r_s \diff s} \in (k_1,k_2], \\
			2 \gamma \left(\tilde{\alpha} x_0 e^{\int_0^T r_s \diff s} + \alpha_2 k_2 - \alpha (k_0+k_1)\right) & x_0 e^{\int_0^T r_s \diff s} > k_2,
		\end{cases}
	\end{align*}
	and $h:(C,\infty) \to \Real_{\geq 0}$ an arbitrary continuous function with $\lim_{x \to C} h(x)=0$ if $x_0 e^{\int_0^T r_s \diff s} > k_1$ resp. $\liminf_{x \to C} h(x) \geq 0$ if $x_0 e^{\int_0^T r_s \diff s} \leq k_1$, and $\lim_{x \to \infty} h(x) = \infty$. Then, we make the following four statements, which are shown at the end of this proof:
	\begin{align}
		\lim_{y \to 0} f_1(y,\lambda) &> 0 \text{ for all $\lambda > C$},\label{eq: f1 0} \\
		\lim_{y \to \infty} f_1(y,\lambda) &\leq -x_0 \text{ for all $\lambda > C$},\label{eq: f1 infty} \\
		\liminf_{\lambda \searrow C} f_2(h(\lambda),\lambda) &\geq 1, \label{eq: f2 C} \\
		\lim_{\lambda \to \infty} f_2(h(\lambda),\lambda) &= -\infty. \label{eq: f2 infty} 
	\end{align}
	Due to the continuity of $y \mapsto f_1(y,\cdot)$ and $\lambda \mapsto f_2(h(\lambda),\lambda)$, there exists for each $\lambda > C$ a $y^*_{\lambda} \in [0,\infty)$ and for all functions $h$ a $\lambda^*_h \in (C,\infty)$ such that $f_1(y^*_{\lambda},\lambda)=0$ and $f_2(h(\lambda^*_h),\lambda^*_h)=0$. If there exists more than one solution, then we take $y_{\lambda}^*$ as the smallest one of those (well-defined due to $y$ being bounded from below and the continuity of $f_1$ and $f_2$). Moreover, we have the following four statements, which are also shown at the end of this proof:
	\begin{align}
		\lambda \mapsto y^*_{\lambda} &\text{ is continuous in $\lambda$ on $(C,\infty)$} \label{eq: y star continuous}, \\
		\lim_{\lambda \searrow C} y^*_{\lambda} &= 0 \label{eq: y star lambda C} \text{ if $x_0 e^{\int_0^T r_s \diff s} > k_1$}, \\
		\liminf_{\lambda \searrow C} y^*_{\lambda} &\geq 0 \label{eq: y star lambda C alt} \text{ if $x_0 e^{\int_0^T r_s \diff s} \leq k_1$}, \\
		\lim_{\lambda \to \infty} y^*_{\lambda} &= \infty \label{eq: y star lambda infty}.
	\end{align}
	Now, if we consider $y^*_{\lambda}$ as a function of $\lambda$, it fulfills the assumptions of $h$. Hence, there exists a $\lambda^*$ such that $f_2(y^*_{\lambda^*},\lambda^*)=0$. On the other hand, by the definition of $\lambda \mapsto y_\lambda$, we have that $f_1(y_{\lambda^*}^*,\lambda^*)=0$, which implies the claim.
	
	To finalize the proof, we show the remaining eight statements afterwards. But first, we note that it holds by definition (see Theorem \ref{optimal wealth}):
	\begin{align} \label{order of xis}
		0 \leq \xi_1^* \leq \tilde{\alpha} \hat{\xi} \leq \xi_2^* \leq \alpha \hat{\xi} \leq \xi_3^* \leq \bar{\xi}.
	\end{align}
	\begin{myproof}[Proof of \eqref{eq: f1 0}]
		First of all, we claim that 
		\begin{align}
			\xi_3^* > 0 \text{ if $\lambda > C$}, \label{eq: xi3* bigger 0} \\
			\xi_1^* > 0 \text{ if and only if $\lambda > 2 \gamma \alpha (k_2-k_1-k_0)$}, \label{eq: xi1* bigger 0}
		\end{align}
		which we will show directly after the proof of \eqref{eq: f1 0}. So for now assume that \eqref{eq: xi3* bigger 0} and \eqref{eq: xi1* bigger 0} hold, and let $\lambda>C$.\\
		By \eqref{eq: xi3* bigger 0}, $\xi_3^* \xrightarrow{y \to 0} +\infty$ if $\lambda>C$. Moreover, it follows directly from the definition that
		\begin{align} \label{eq: xihat 0 infty}
			\hat{\xi} \xrightarrow{y \to 0} \begin{cases}
				0, & \text{if } \lambda \leq 2 \gamma \alpha (k_2-k_1-k_0), \\
				+\infty, & \text{if }\lambda >2 \gamma \alpha (k_2-k_1-k_0).
			\end{cases}
		\end{align}
		Thus, if $\lambda \leq 2 \gamma \alpha (k_2-k_1-k_0)$ then it holds that (i) $\Phi(d_1 (\xi_3^*,0)) - \Phi(d_1 (\alpha \hat{\xi},0)) \xrightarrow{y \to 0} 1$ and $\Phi (d_{1} (\xi _{1}^{*},0)) \xrightarrow{y \to 0} 0$. 
		On the other hand, if $\lambda >2 \gamma \alpha (k_2-k_1-k_0)$, then it holds that (ii) $\Phi\left(d_1 \left(\xi_1^*,0\right)\right) \xrightarrow{y \to 0} 1$ as \eqref{eq: xi1* bigger 0} entails that $\xi_1^* \xrightarrow{y \to 0} \infty$. Now, if $x_0 e^{\int_0^T r_s \diff s} \leq k_1$, it holds that (iii) $\left(k_0+k_1 + \frac{\lambda}{2 \gamma \alpha} \right) e^{-\int_0^T r_s \diff s} > x_0$ for $\lambda > C = -2\gamma \alpha k_0$. Furthermore, if $x_0 e^{\int_0^T r_s \diff s} \in (k_1,k_2]$, we deduce that (iv) $\left(k_0+k_1 + \frac{\lambda}{2 \gamma \alpha} \right) e^{-\int_0^T r_s \diff s} > x_0$ for $\lambda > C = 2 \gamma \alpha (x_0 e^{\int_0^T r_s \diff s}-k_1-k_0)$ and, if $x_0 e^{\int_0^T r_s \diff s} \leq k_2$, it follows directly that (v) $\left(k_2 + \frac{\lambda}{2 \gamma \tilde{\alpha}} - \frac{\alpha}{\tilde{\alpha}}(k_2-k_1-k_0) \right) e^{-\int_0^T r_s \diff s} > x_0$ for $\lambda >2 \gamma \alpha (k_2-k_1-k_0) (\geq C)$. Finally, if $x_0 e^{\int_0^T r_s \diff s} > k_2$, then it holds that $C > 2 \gamma \alpha (k_2-k_1-k_0)$ and (vi) $\left(k_2 + \frac{\lambda}{2 \gamma \tilde{\alpha}} - \frac{\alpha}{\tilde{\alpha}}(k_2-k_1-k_0) \right) e^{-\int_0^T r_s \diff s} > x_0$ for $\lambda > C= 2 \gamma \left(\tilde{\alpha} x_0 e^{\int_0^T r_s \diff s} + \alpha_2 k_2 - \alpha (k_0+k_1)\right)$. Note that $\tilde{\alpha} \hat{\xi} \leq \xi_2^* \leq \alpha \hat{\xi}$ (see \eqref{order of xis}) and hence (vii) $\xi_2^*$ shows the same limiting behavior as $\hat{\xi}$ for $y \to 0$. Summarizing these results, we conclude for \underline{$\lambda>C$}:
		\begin{enumerate}[(a)]
			\item Properties (i), (ii), (v), and (vi) imply:
			\begin{align*}
				\lim_{y \to 0} \left(k_2 + \dfrac{\lambda}{2 \gamma \tilde{\alpha}} - \dfrac{\alpha}{\tilde{\alpha}}(k_2-k_1-k_0) \right) e^{-\int_0^T r_s \diff s} \Phi\left(d_1 \left(\xi_1^{*},0\right)\right) & \\
				&\hspace{-5.3cm}\begin{cases}
					> x_0 & \text{ if $\lambda > 2\gamma \alpha (k_2-k_1-k_0)$ and $x_0e^{\int_0^T r_s \diff s} \leq k_2$,} \\
					> x_0 & \text{ if $x_0e^{\int_0^T r_s \diff s} > k_2$,} \\
					\geq 0 & \text{ else.} 
				\end{cases}
			\end{align*}
			\item One can see directly that $\lim_{y \to 0}\dfrac{y}{2\gamma \tilde{\alpha}^2}  e^{\int_0^T -2r_s+\norm{\kappa_s}^2 \diff s} \Phi \left( d_2 \left(\xi_1^{*},0\right) \right) = 0$.
			\item It holds that $\lim_{y \to 0}k_2 e^{-\int_0^T r_s \diff s} \left( \Phi\left(d_1 \left(\xi_2^{*},0\right)\right) - \Phi\left(d_1 \left(\tilde{\alpha} \hat{\xi},0\right)\right) \right) = 0$ due to (vii).
			\item Properties (i), (iii), (iv), $\lambda > -2\gamma\alpha k_0$, and \eqref{order of xis} imply:
			\begin{align*}
				\lim_{y \to 0}\left(k_0+k_1 + \dfrac{\lambda}{2 \gamma \alpha} \right) e^{-\int_0^T r_s \diff s} \left( \Phi\left(d_1 \left(\xi_3^{*},0\right)\right) - \Phi\left(d_1 \left(\alpha \hat{\xi},0\right)\right) \right) &\\
				&\hspace{-6cm} \begin{cases}
					> x_0 & \text{ if $\lambda \in (C,2 \gamma \alpha (k_2-k_1-k_0)]$ and $x_0e^{\int_0^T r_s \diff s}\leq k_2$,} \\
					\geq 0 & \text{ else.}
				\end{cases}
			\end{align*}
			\item One can see directly that $\lim_{y \to 0}\dfrac{y}{2\gamma \alpha^2} e^{\int_0^T -2r_s+\norm{\kappa_s}^2 \diff s} \left(\Phi \left( d_2 \left(\xi_3^{*},0\right) \right) - \Phi \left( d_2 \left(\alpha \hat{\xi},0\right) \right) \right) = 0$.
		\end{enumerate}
		Note that since either $x_{0}e^{\int _{0}^{T} r_{s} \diff s} > k_{2}$, or $x_{0}e^{\int _{0}^{T} r_{s} \diff s} \leq k_{2}$ and $\lambda > 2\gamma \alpha (k_{2}-k_{1}-k_{0})$ or $\lambda \in (C,2 \gamma \alpha (k_{2}-k_{1}-k_{0})]$, we always have in (a) or in (d) "$> x_0$". Thus, the claim follows.
	\end{myproof}
	\begin{myproof}[Proof of \eqref{eq: xi3* bigger 0}]
		We show this equation by considering the three different cases of $C$:
		
		\underline{Case 1:} $x_0 e^{\int_0^T r_s \diff s} \leq k_1$, i.e., $C = -2\gamma \alpha k_0$: \\
		Let $\lambda > C$, i.e., there exists an $\varepsilon >0$ such that $\lambda = C + 2 \gamma \alpha \varepsilon$. First, we notice that $\xi_3^* >0$ if $\tilde{\xi}_3^* := \dfrac{\lambda \alpha}{y} + \dfrac{2 \gamma \alpha^2 k_0}{y} - \dfrac{2 \gamma \alpha^2}{y} \left( \sqrt{k_1^2 +k_1 \left(2k_0 + \dfrac{\lambda}{\gamma \alpha}\right)} - k_1 \right)>0$. Now, it holds if we plug in $\lambda$:
		\begin{align*}
			\tilde{\xi}_3^* &= \dfrac{2 \gamma \alpha^2}{y} \left( -k_0 + \varepsilon + k_0 + k_1 - \sqrt{k_1^2 + k_1 \left( 2 k_0 - 2k_0 + 2\varepsilon \right)}\right) 
			= \dfrac{2 \gamma \alpha^2}{y} \left( k_1 + \varepsilon - \sqrt{(k_1 + \varepsilon)^2 - \varepsilon^2}\right).
		\end{align*}
		Since $y, \gamma, \alpha, \varepsilon >0$ and $k_1 \geq 0$, we get that $\tilde{\xi}_3^* > 0$ if and only if $(k_1 + \varepsilon)^2 > \left(\sqrt{(k_1 + \varepsilon)^2 - \varepsilon^2}\right)^2$,
		which is equivalent to $\varepsilon^2>0$. Thus, the claim, i.e., $\xi_3^*>0$ for $\lambda>C$, follows. 
		
		\underline{Case 2:} $x_0 e^{\int_0^T r_s \diff s} \in (k_1,k_2]$, i.e., $C = 2\gamma \alpha (x_0 e^{\int_0^T r_s \diff s} - k_1 - k_0)$: \\
		Let $\lambda > C$, i.e., there exists an $\varepsilon >0$ such that $\lambda = C + 2 \gamma \alpha \varepsilon$. As in the first case, we get that $\xi_3^*>0$ if $\tilde{\xi}_3^*>0$ (defined as in Case 1). Then, we get after plugging in $\lambda$:
		\begin{align*}
			\tilde{\xi}_3^* &= \dfrac{2 \gamma \alpha^2}{y} \left( x_0 e^{\int_0^T r_s \diff s} - k_1 - k_0 + \varepsilon + k_0 + k_1 - \sqrt{k_1^2 + k_1 \left( 2 k_0 +2 x_0 e^{\int_0^T r_s \diff s} - 2 k_1 - 2k_0 + 2\varepsilon\right)}\right) \\
			&= \dfrac{2 \gamma \alpha^2}{y} \left( x_0 e^{\int_0^T r_s \diff s} + \varepsilon - \sqrt{ -k_1^2 + 2 k_1 x_0 e^{\int_0^T r_s \diff s} - \left(x_0 e^{\int_0^T r_s \diff s}\right)^2 + \left(x_0 e^{\int_0^T r_s \diff s}\right)^2 + 2\varepsilon k_1}\right) \\
			&= \dfrac{2 \gamma \alpha^2}{y} \left( x_0 e^{\int_0^T r_s \diff s} + \varepsilon - \sqrt{-\left(x_0 e^{\int_0^T r_s \diff s}-k_1\right)^2 + \left(x_0e^{\int_0^T r_s \diff s}\right)^2 + 2\varepsilon k_1} \right).
		\end{align*}
		Since $y, \gamma, \alpha, \varepsilon >0$ and $x_0 e^{\int_0^T r_s \diff s} \geq 0$, we get that $\tilde{\xi}_3^* > 0$ if and only if 
		\begin{align*}
			\left( x_0 e^{\int_0^T r_s \diff s} + \varepsilon \right)^2 > -\left(x_0 e^{\int_0^T r_s \diff s}-k_1\right)^2 + \left(x_0e^{\int_0^T r_s \diff s}\right)^2 + 2\varepsilon k_1.
		\end{align*}
		Since $\left( x_0 e^{\int_0^T r_s \diff s} + \varepsilon \right)^2 = \left(x_0 e^{\int_0^T r_s \diff s}\right)^2 + 2 \varepsilon x_0 e^{\int_0^T r_s \diff s} + \varepsilon^2$, this is equivalent to:
		\begin{align*}
			0 < \left(x_0 e^{\int_0^T r_s \diff s}-k_1\right)^2 - 2\varepsilon k_1 + 2 \varepsilon x_0 e^{\int_0^T r_s \diff s} + \varepsilon^2 = \left( \left(x_0 e^{\int_0^T r_s \diff s}-k_1\right) + \varepsilon\right)^2.
		\end{align*}
		Now, the claim, i.e., $\xi_3^*>0$ for $\lambda>C$, follows.
		
		\underline{Case 3:} $x_0 e^{\int_0^T r_s \diff s} > k_2$, i.e., $C = 2 \gamma \left(\tilde{\alpha} x_0 e^{\int_0^T r_s \diff s} + \alpha_2 k_2 - \alpha (k_0+k_1)\right)$: \\
		Then, we get immediately that $\hat{\xi}>0$ for $\lambda>C (> 2\gamma \alpha (k_2-k_1-k_0) \text{ since } \alpha_2+\tilde{\alpha}=\alpha)$. Hence, it holds that $\xi_3^*>0$ due to $\xi_3^* \geq \alpha \hat{\xi}$ (see \eqref{order of xis}).
	\end{myproof}
	\begin{myproof}[Proof of \eqref{eq: xi1* bigger 0}]
		First, let $\lambda > 2 \gamma \alpha (k_2-k_1-k_0)$, i.e., there exists an $\varepsilon >0$ such that $\lambda = 2 \gamma \alpha (k_2-k_1-k_0) + 2 \gamma \tilde{\alpha} \varepsilon$. We note that $\hat{\xi} >0$ for this $\lambda$ which implies that $\xi_1^* > 0$ if and only if $\tilde{\xi}_1^* > 0$. Now, if the term under the square root in the maximum in the formula of $\tilde{\xi}_1^*$ (for the definition, see Theorem \ref{optimal wealth}) is negative, the claim follows immediately. Hence, we assume that the term is non-negative. If we plug in $\lambda$ into $\tilde{\xi}_1^*$, we get:
		\begin{align*}
			\tilde{\xi}_1^* &= \dfrac{2 \gamma \tilde{\alpha}}{y} \left( \dfrac{\lambda}{2 \gamma} - \alpha (k_2-k_1-k_0) + \tilde{\alpha} k_2 - \sqrt{ (\alpha (k_0+k_1) - \alpha_2 k_2)^2 - \alpha^2k_0^2 + \dfrac{\lambda}{\gamma} (\alpha k_1 - \alpha_2 k_2)}\right) \\
			&= \dfrac{2 \gamma \tilde{\alpha}}{y} \left( \tilde{\alpha} \varepsilon + \tilde{\alpha} k_2 - \sqrt{ \alpha_2^2 k_2^2 + 2 \alpha^2 k_1k_2 -2 \alpha \alpha_2 k_2^2 - \alpha^2 k_1^2 +2\alpha \tilde{\alpha} k_1 \varepsilon - 2 \tilde{\alpha} \alpha_2 k_2 \varepsilon}\right) \\
			&= \dfrac{2 \gamma \tilde{\alpha}^2}{y} \left( k_2 + \varepsilon - \frac{1}{\tilde{\alpha}}\sqrt{ k_2^2 \alpha_2 (\alpha_2-2\alpha) + \alpha^2 k_1 (2k_2-k_1)+2\tilde{\alpha} \varepsilon (\alpha k_1 -\alpha_2 k_2)}\right).
		\end{align*}
		Since $y, \gamma, \tilde{\alpha}, \varepsilon >0$ and $k_2 \geq 0$, it holds that $\tilde{\xi}_1^* > 0$ if and only if 
		\begin{align*}
			k_2^2 + 2 \varepsilon k_2 + \varepsilon^2 > \frac{1}{\tilde{\alpha}^2} \left( k_2^2 \alpha_2 (-\tilde{\alpha}-\alpha) + \alpha^2 k_1 (2k_2-k_1)+2\tilde{\alpha} \varepsilon (\alpha k_1 -\alpha_2 k_2) \right).
		\end{align*}
		This is equivalent to:
		\begin{align*}
			L := k_2^2 \left( 1 + \dfrac{\alpha_2}{\tilde{\alpha}} + \dfrac{\alpha \alpha_2}{\tilde{\alpha}^2}\right) + 2\varepsilon \left( k_2 \left(1 + \dfrac{\alpha_2}{\tilde{\alpha}}\right) - k_1 \frac{\alpha}{\tilde{\alpha}}\right)  + \varepsilon^2 - \dfrac{\alpha^2}{\tilde{\alpha}^2} k_1 (2 k_2 -k_1) >0.
		\end{align*}
		Now, we notice with $\alpha=\tilde{\alpha}+\alpha_2$ that
		\begin{align*}
			1 + \dfrac{\alpha_2}{\tilde{\alpha}} + \dfrac{\alpha \alpha_2}{\tilde{\alpha}^2} = \dfrac{\tilde{\alpha}^2+\alpha_2 \tilde{\alpha}+\tilde{\alpha}\alpha_2+\alpha_2^2}{\tilde{\alpha}^2} = \dfrac{(\tilde{\alpha}+\alpha_2)^2}{\tilde{\alpha}^2} = \dfrac{\alpha^2}{\tilde{\alpha}^2}.
		\end{align*}
		Then, it follows with $L$ as above since $1 + \frac{\alpha_2}{\tilde{\alpha}} = \frac{\alpha}{\tilde{\alpha}}$:
		\begin{align*}
			\dfrac{\tilde{\alpha}^2}{\alpha^2} L &= k_2^2 + 2 \dfrac{\tilde{\alpha}}{\alpha} \varepsilon (k_2-k_1) + \dfrac{\tilde{\alpha}^2}{\alpha^2} \varepsilon^2 - 2 k_1 k_2 + k_1^2 \\
			&= (k_2-k_1)^2 + 2 \dfrac{\tilde{\alpha}}{\alpha} \varepsilon (k_2-k_1) + \left(\dfrac{\tilde{\alpha}}{\alpha} \varepsilon\right)^2 = \left( (k_2-k_1) + \frac{\tilde{\alpha}}{\alpha} \varepsilon\right)^2 .
		\end{align*}
		Hence, $L > 0$. 
		Second, we note that $\hat{\xi} = 0$ (and thus also $\xi_1^*$) for $\lambda \leq 2 \gamma \alpha (k_2-k_1-k_0)$ by definition. Hence, the claim $\xi_1^* > 0$ if and only if $\lambda > 2 \gamma \alpha (k_2-k_1-k_0)$ follows.
	\end{myproof}
	\begin{myproof}[Proof of \eqref{eq: f1 infty}]
		By definition, it holds that $\bar{\xi} \xrightarrow{y \to \infty} 0$. Then, we get the claim due to \eqref{order of xis} and $\lim_{x \to 0} \Phi (d_i(x,0)) = 0$ for $i \in \{1,2\}$.
	\end{myproof}
	\begin{myproof}[Proof of \eqref{eq: f2 C}]
		Recall that $h$ is a continuous function from $(C,\infty) \to \Real_{\geq 0}$ with some limiting properties at $C$ and $\infty$, and we plug this function into $f_2$ with $y=h(\lambda)$. To start, we give a small overview of the following proof: To show \eqref{eq: f2 C}, we must find the limiting behavior of $\xi_1^*$, $\tilde{\alpha} \hat{\xi}$, $\xi_2^*$, $\alpha \hat{\xi}$, and $\xi_3^*$ when $\lambda \searrow C$. However, this behavior heavily depends on $x_0 e^{\int_0^T r_s \diff s}$, i.e., the different values of $C$ and the limiting properties of $h$. Hence, we have to distinguish several cases. First, we differentiate between the different limiting properties for $\lambda \searrow C$ depending if $x_0 e^{\int_0^T r_s \diff s} > k_1$ is true. If this is true, we first analyze the limit of $\xi_3^*$. After that, we have to separate the cases depending if $C$ is bigger, smaller, or equal to $2 \gamma \alpha (k_2-k_1-k_0)$. For the equality case, we even have to separate along the limiting behavior of $\hat{\xi}$ to derive the limiting behavior for $\xi_1^*$ and $\xi_2^*$. If $x_0 e^{\int_0^T r_s \diff s} > k_1$ is not true, we only have to separate along the possible limiting values of $h$. 
		
		\underline{Case 1:} $x_0 e^{\int_0^T r_s \diff s} > k_1$, i.e., $\lim_{x \to C} h(x)=0$: \\
		Since $\lim_{\lambda \searrow C} h(\lambda)=0$ and $C > -2\gamma \alpha k_0$, it holds that (i) $\xi_3^* \to + \infty$ for $\lambda \searrow C$, i.e., $y=h(\lambda) \xrightarrow{\lambda \searrow C} 0$. Indeed, it holds: If $x_0 e^{\int_0^T r_s \diff s} > k_2$, we get that $C>2\gamma\alpha (k_2-k_1-k_0)$ and hence $\hat{\xi} \xrightarrow{\lambda \searrow C} + \infty$ since $y = h(\lambda) \xrightarrow{\lambda \searrow C} 0$. Now, \eqref{order of xis} implies that $\xi_3^* \xrightarrow{\lambda \searrow C} + \infty$. If $x_0 e^{\int_0^T r_s \diff s} \leq k_2$, it holds that $C=2\gamma \alpha \left( x_0 e^{\int_{0}^{T} r_s \diff s}-k_1-k_0\right)$ since $x_0 e^{\int_0^T r_s \diff s} > k_1$ by assumption and we get that:
		\begin{align*}
			\xi_3^* \geq \dfrac{\alpha \lambda + 2\gamma \alpha^2 \left(k_0+k_1-\sqrt{k_1^2 +k_1(2k_0+\frac{\lambda}{\alpha \gamma})}\right)}{h(\lambda)} =: \dfrac{D}{h(\lambda)}.
		\end{align*}
		When calculating the limit for $\lambda \searrow C$ for $D$, it holds that
		\begin{align*}
			\lim_{\lambda \to C} D &= 2 \gamma \alpha^2 \left( x_0 e^{\int_0^T r_s \diff s} -k_1-k_0+k_0+k_1- \sqrt{k_1^2 +k_1(2k_0+2 x_0 e^{\int_0^T r_s \diff s} -2k_1-2k_0)}\right) \\
			&= 2 \gamma \alpha^2 \left( x_0 e^{\int_0^T r_s \diff s} -\sqrt{-k_1^2 + 2 k_1 x_0 e^{\int_0^T r_s \diff s} }\right).
		\end{align*}
		Now, we see that $\lim_{\lambda \searrow C} D > 0$ if and only if $\left( x_0 e^{\int_0^T r_s \diff s} - k_1 \right)^2 > 0$ which is true by assumption. 
		Thus, $\xi_3^* \xrightarrow{\lambda \searrow C} + \infty$ since $y=h(\lambda) \xrightarrow{\lambda \searrow C} 0$, i.e., (i) is proven. \\
		Next, if $C > 2 \gamma \alpha (k_2-k_1-k_0)$, we get that ((ii).1) $\xi_1^* \xrightarrow{\lambda \searrow C} + \infty$ due to \eqref{eq: xi1* bigger 0} and $y=h(\lambda) \xrightarrow{\lambda \searrow C} 0$ and thus also ((ii).2) $\alpha \hat{\xi}, \xi_2^*,\tilde{\alpha}\hat{\xi} \xrightarrow{\lambda \searrow C} + \infty$ due to \eqref{order of xis}. If $C < 2 \gamma \alpha (k_2-k_1-k_0)$, we get that ((iii).1) $\alpha \hat{\xi} \xrightarrow{\lambda \searrow C} 0$ due to \eqref{eq: xihat 0 infty} and $y=h(\lambda) \xrightarrow{\lambda \searrow C} 0$ which can be factored out. Hence, also ((iii).2) $\xi_1^*, \tilde{\alpha} \hat{\xi} , \xi_2^*  \xrightarrow{\lambda \searrow C} 0$ due to \eqref{order of xis}. The remaining case that $C = 2 \gamma \alpha (k_2-k_1-k_0)$ needs a closer look. First, we note that then $k_2 = x_0 e^{\int_0^T r_s \diff s}$ and hence $k_2>k_1$ by the assumption of Case 1. There, we have to distinguish three more cases for the limiting behavior of $\xi_1^*$, $\tilde{\alpha} \hat{\xi}$, $\xi_2^*$, and $\alpha \hat{\xi}$ depending on the behavior of $\hat{\xi}$. Note that $\hat{\xi}$ can have multiple accumulation points for $\lambda \searrow C$ (all in [0,$\infty$]) despite $\hat{\xi}$ being continuous in $\lambda$. Now, let $\lambda_n$ be a sequence converging to the $\liminf$, i.e., $\lim_{n \to \infty} f_2(h(\lambda_n),\lambda_n) = \liminf_{\lambda \searrow C} f_2(h(\lambda),\lambda)$. Note that $\lambda_n \xrightarrow{n \to \infty} C$ from above. We show \eqref{eq: f2 C} then for each accumulation point separately by possibly switching to subsequences, i.e., we can assume without loss of generality that $\lim_{n \to \infty} \hat{\xi}^{\lambda_n}$ exists (in [0,$\infty$]) and have to consider the following cases:
		
		\underline{Case 1.1:} $\hat{\xi}^{\lambda_n} \xrightarrow{n \to\infty} 0$:\\
		Then, it follows immediately that (iv) $\xi_1^{*,\lambda_n}, \xi_2^{*,\lambda_n} \xrightarrow{n \to \infty} 0$ due to \eqref{order of xis}.
		
		\underline{Case 1.2:} $\hat{\xi}^{\lambda_n} \xrightarrow{n \to \infty} + \infty$:\\
		Here, we first note that when applying the same argument as in the proof of \eqref{eq: xi1* bigger 0} (with $\varepsilon=0$) to the second term in the definition of $\tilde{\xi}_1^{*,\lambda_n}$, this second term is non-negative for $\lambda=C$, i.e., $\tilde{\xi}_1^{*,\lambda_n} \geq \tilde{\alpha} \hat{\xi}^{\lambda_n}$ when taking the limit $n \to \infty$. Then, we obtain ((v).1) $\xi_1^{*,\lambda_n} \xrightarrow{n \to \infty} + \infty$. In particular, we get ((v).2) $\xi_1^{*,\lambda_n},\tilde{\alpha} \hat{\xi}^{\lambda_n},\xi_2^{*,\lambda_n},\alpha \hat{\xi}^{\lambda_n} \xrightarrow{n \to \infty} + \infty$ due to \eqref{order of xis}.
		
		\underline{Case 1.3:} $\hat{\xi}^{\lambda_n} \xrightarrow{n \to \infty} c \in (0,\infty)$:\\
		First, since $k_2>k_1$, we get by plugging in $\lambda = 2 \gamma \alpha (k_2-k_1-k_0)$ into $\tilde{\xi}_2^{*}$ that $\tilde{\xi}_2^{*,\lambda_n} \xrightarrow{n \to \infty} + \infty$. The reason is that $y_n=h(\lambda_n) \xrightarrow{n \to \infty} 0$ and the nominator in $\tilde{\xi}_2^{*,\lambda_n}$ converges to a positive number, i.e.:
		\begin{align*}
			\alpha \lambda k_2 - \gamma \alpha^2 &(k_2-k_1)^2 + 2 \gamma \alpha^2 k_0 (k_2-k_1) - \lambda_n \alpha k_1 \\
			\xrightarrow{n \to \infty}& \ 2 \gamma \alpha^2 k_2 (k_2-k_1-k_0) - \gamma \alpha^2 (k_2-k_1)^2 + 2 \gamma \alpha^2 k_0 (k_2-k_1) - 2 \gamma \alpha^2 k_1 (k_2-k_1-k_0) \\
			=& \ 2 \gamma \alpha^2 (k_2-k_1-k_0) (k_2-k_1) - \gamma \alpha^2 (k_2-k_1) (k_2-k_1-2k_0) \\
			=& \ \gamma \alpha^2 (k_2-k_1) (2k_2-2k_1-2k_0-k_2+k_1+2k_0) = \gamma \alpha^2 (k_2-k_1)^2 > 0.
		\end{align*}
		Hence, ((vi).1) $\xi_2^{*,\lambda_n} \xrightarrow{n \to \infty} \alpha c$. Second, we note that $\tilde{\xi}_1^{*,\lambda_n} \geq \tilde{\alpha} \hat{\xi}^{\lambda_n}$ for $n \to \infty$ by the same argument as in Case 1.2. Thus, ((vi).2) $\xi_1^{*,\lambda_n} \xrightarrow{n \to \infty} \tilde{\alpha} c$.
		
		Summarizing, we get in all subcases of Case 1 with $C=2 \gamma \alpha (k_2-k_1-k_0)$, that $\Phi(d_0 (\xi_3^*,0)) \xrightarrow{\lambda \searrow C} 1$ since $\lim_{x \to \infty} \Phi \left(d_0 \left(x,0\right)\right) =1$ due to (i) and $-\Phi(d_0 (\alpha \hat{\xi}^{\lambda_n},0)) + \Phi(d_0 (\xi_2^{*,\lambda_n},0)) - \Phi(d_0 (\tilde{\alpha} \hat{\xi}^{\lambda_n},0)) + \Phi(d_0 (\xi_1^{*,\lambda_n},0)) \xrightarrow{n \to \infty} 0$ due to either (ii), (iii), (iv), (v), or (vi) depending on $C$. Hence, it holds that $\Phi(d_0 (\xi_3^{*,\lambda_n},0)) - \Phi(d_0 (\alpha \hat{\xi}^{\lambda_n},0)) + \Phi(d_0 (\xi_2^{*,\lambda_n},0)) - \Phi(d_0 (\tilde{\alpha} \hat{\xi}^{\lambda_n},0)) + \Phi(d_0 (\xi_1^{*,\lambda_n},0)) \xrightarrow{n \to \infty} 1$ since $\lim_{x \to \infty} \Phi \left(d_0 \left(x,0\right)\right) =1$ and $\lambda_n \xrightarrow{n \to \infty} C$ from above. In particular, we get that $\lambda ( -1 + \Phi(d_0 (\xi_3^{*,\lambda_n},0)) - \Phi(d_0 (\alpha \hat{\xi}^{\lambda_n},0)) + \Phi(d_0 (\xi_1^{*,\lambda_n},0)) ) +2 \gamma \alpha (k_2-k_1-k_0) ( \Phi(d_0 (\xi_2^{*,\lambda_n},0)) - \Phi(d_0 (\tilde{\alpha} \hat{\xi}^{\lambda_n},0)) ) \xrightarrow{n \to \infty} 0$ since $C = 2 \gamma \alpha (k_2-k_1-k_0)$ and $\lambda_n \xrightarrow{n \to \infty} C$ from above. Thus, the claim follows.
		
		\underline{Case 2:} $x_0 e^{\int_0^T r_s \diff s} \leq k_1$, i.e., $\liminf_{x \searrow C} h(x) \geq 0$ and $C = -2\gamma \alpha k_0$:\\
		For this case, we have to distinguish two more subcases depending on the possible limit of $h$. However, $h$ can have multiple accumulation points. As before, we consider each accumulation point separately, denote by $\lambda_n$ a sequence converging to the $\liminf$ of $h$, and assume, without loss of generality by possibly switching to a subsequence, that $\lambda_n \xrightarrow{n \to \infty} C$ from above and $\lim_{n \to \infty} h(\lambda_n)$ exists (in $[0,\infty]$). This gives us the following cases:
		
		\underline{Case 2.1:} $\lim_{n \to \infty} h(\lambda_n) > 0$: \\
		Then, we get that $\bar{\xi}^{\lambda_n} \xrightarrow{n \to \infty} 0$ since $\bar{\xi}^{\lambda_n} = \frac{\alpha (\lambda_n-C)}{h(\lambda_n)}$ for $C = -2\gamma \alpha k_0$ which implies that $\hat{\xi}^{\lambda_n},\xi_1^{*,\lambda_n},\xi_2^{*,\lambda_n},\xi_3^{*,\lambda_n} \xrightarrow{n \to \infty} 0$ using \eqref{order of xis}. Thus, it holds that $\lim_{n \to \infty} f_2 (h(\lambda_n),\lambda_n) = 1-2\gamma \alpha k_0 - C = 1$ using that $\lim_{x \to 0} \Phi (d_i(x,0)) = 0$ for $i \in \{0,1\}$.
		
		\underline{Case 2.2:} $\lim_{n \to \infty} h(\lambda_n) = 0$:\\
		Here, we first rewrite $f_2$ into:
		\begin{align*}
			f_2(h(\lambda_n),\lambda_n) :=& \ 1 + 2\gamma\alpha k_0 \left(-1 + \Phi\left(d_0 \left(\xi_3^{*,\lambda_n},0\right)\right) - \Phi\left(d_0 \left(\alpha \hat{\xi}^{\lambda_n},0\right)\right) + \Phi\left(d_0 \left(\xi_1^{*,\lambda_n},0\right)\right) \right) \\
			&+ \lambda_n \left( -1 + \Phi\left(d_0 \left(\xi_3^{*,\lambda_n},0\right)\right) - \Phi\left(d_0 \left(\alpha \hat{\xi}^{\lambda_n},0\right)\right) + \Phi\left(d_0 \left(\xi_1^{*,\lambda_n},0\right)\right) \right) \\
			&- h(\lambda_n) \left( \dfrac{1}{\alpha}\Phi\left(d_1 \left(\xi_3^{*,\lambda_n},0\right)\right) - \dfrac{1}{\alpha}\Phi\left(d_1 \left(\alpha \hat{\xi}^{\lambda_n},0\right)\right) + \dfrac{1}{\tilde{\alpha}}\Phi\left(d_1 \left(\xi_1^{*,\lambda_n},0\right)\right) \right) \\
			&+2 \gamma \alpha (k_2-k_1) \left( \Phi\left(d_0 \left(\xi_2^{*,\lambda_n},0\right)\right) - \Phi\left(d_0 \left(\tilde{\alpha} \hat{\xi}^{\lambda_n},0\right)\right) \right) \\
			=& \ 1 + (\lambda_n-C) \left(-1 + \Phi\left(d_0 \left(\xi_3^{*,\lambda_n},0\right)\right) - \Phi\left(d_0 \left(\alpha \hat{\xi}^{\lambda_n},0\right)\right) + \Phi\left(d_0 \left(\xi_1^{*,\lambda_n},0\right)\right) \right) \\
			&- h(\lambda_n) \left( \dfrac{1}{\alpha}\Phi\left(d_1 \left(\xi_3^{*,\lambda_n},0\right)\right) - \dfrac{1}{\alpha}\Phi\left(d_1 \left(\alpha \hat{\xi}^{\lambda_n},0\right)\right) + \dfrac{1}{\tilde{\alpha}}\Phi\left(d_1 \left(\xi_1^{*,\lambda_n},0\right)\right) \right) \\
			&+2 \gamma \alpha (k_2-k_1) \left( \Phi\left(d_0 \left(\xi_2^{*,\lambda_n},0\right)\right) - \Phi\left(d_0 \left(\tilde{\alpha} \hat{\xi}^{\lambda_n},0\right)\right) \right),
		\end{align*}
		where we used that $C = -2\gamma \alpha k_0$. Now, the claim follows, i.e., $\liminf_{\lambda \searrow C} f_2(h(\lambda),\lambda) \geq 1$ since $\lim_{n \to\infty} h(\lambda_n) = 0$, $\lambda_n \xrightarrow{n \to \infty} C$ from above, $\xi_2^{*,\lambda_n} \geq \tilde{\alpha} \hat{\xi}^{\lambda_n}$, and $\Phi(d_0(\cdot,0))$ being non-decreasing. 
	\end{myproof}
	\begin{myproof}[Proof of \eqref{eq: f2 infty}]
		For this proof, we have to consider three cases depending on the limiting behavior of $\frac{\lambda}{h(\lambda)}$ for $\lambda\to \infty$. As in the proof of \eqref{eq: f2 C}, we have here possibly multiple accumulation points that we treat separately and denote by $\lambda_n$ the sequence converging to the $\limsup$. Note that $\lambda_n \xrightarrow{n \to \infty} \infty$ in this case. Hence, we can assume, without loss of generality by possibly switching to a subsequence, that $\lim_{n \to \infty} \frac{\lambda_n}{h(\lambda_n)}$ exists (in $[0,\infty]$) and consider the three cases that $\lim_{n \to \infty} \frac{\lambda_n}{h(\lambda_n)} = 0$, $\lim_{n \to \infty} \frac{\lambda_n}{h(\lambda_n)} = c \in (0,\infty)$, and $\lim_{n \to \infty} \frac{\lambda_n}{h(\lambda_n)} = \infty$. Note that the factor to the right of "$\lambda$" in $f_2$ is negative, while the factor to the right of ``$y$'' in $f_2$ is positive by \eqref{order of xis}.
		
		\underline{Case 1:} $\lim_{n \to \infty} \frac{\lambda_n}{h(\lambda_n)} = 0$:\\
		It holds that $\bar{\xi}^{\lambda_n} \xrightarrow{n \to \infty} 0$ using $y_n=h(\lambda_n)$ and the assumption. Therefore, we also get that $\hat{\xi}^{\lambda_n},\xi_1^{*,\lambda_n},\xi_2^{*,\lambda_n},\xi_3^{*,\lambda_n} \xrightarrow{n \to \infty} 0$ due to \eqref{order of xis}. Thus, the claim follows using $\lim_{x \to 0} \Phi (d_i(x,0)) = 0$ for $i \in \{0,1\}$ and $h(\lambda _{n}) \xrightarrow{n \to \infty} + \infty $.
		
		\underline{Case 2:} $\lim_{n \to \infty} \frac{\lambda_n}{h(\lambda_n)} = c \in (0,\infty)$:\\
		In this case, it holds by assumption that $\hat{\xi}^{\lambda_n} \xrightarrow{n \to \infty} c$, and $\bar{\xi}^{\lambda_n},\xi_3^{*,\lambda_n} \xrightarrow{n \to \infty} \alpha c$. Moreover, it holds that $\lim_{n \to \infty} \xi_2^{*,\lambda_n} \in [\tilde{\alpha}c,\alpha c]$ by \eqref{order of xis} and  $\xi_1^{*,\lambda_n} \xrightarrow{n \to \infty} \tilde{\alpha} c$ since $\frac{\lambda_n-\sqrt{l(\lambda_n)}}{h(\lambda_n)} \xrightarrow{n \to \infty} c$ for any affine function $l$. Next, we note that $\Phi(d_i(ac,0)) \in (0,1)$ for all $a>0$ and $i \in \{0,1\}$. In total, we get that $\lim_{n \to \infty} -1 + \Phi(d_0 (\xi_3^{*,\lambda_n},0)) - \Phi(d_0 (\alpha \hat{\xi}^{\lambda_n},0)) + \Phi(d_0 (\xi_1^{*,\lambda_n},0)) = -1 + \Phi(d_0 (\tilde{\alpha} c,0)) < 0$ and $\lim_{n \to \infty} \frac{1}{\alpha}\Phi(d_1 (\xi_3^{*,\lambda_n},0)) - \frac{1}{\alpha}\Phi(d_1 (\alpha \hat{\xi}^{\lambda_n},0)) + \frac{1}{\tilde{\alpha}}\Phi(d_1 (\xi_1^{*,\lambda_n},0)) = \frac{1}{\tilde{\alpha}}\Phi(d_1 (\tilde{\alpha} c,0)) > 0$. Thus, the claim follows since $\lambda_n,h(\lambda_n) \xrightarrow{n \to \infty} + \infty$.
		
		\underline{Case 3:} $\lim_{n \to \infty} \frac{\lambda_n}{h(\lambda_n)} = \infty$:\\
		In this case, we get that $\xi_1^{*,\lambda_n} \xrightarrow{n \to \infty} \infty$ as $\lim_{n \to \infty} \frac{\lambda_n-\sqrt{l(\lambda_n)}}{h(\lambda_n)} = \infty$ for any affine function $l$. Hence, it holds that $\hat{\xi}^{\lambda_n},\bar{\xi}^{\lambda_n},\xi_2^{*,\lambda_n},\xi_3^{*,\lambda_n} \xrightarrow{n \to \infty} \infty$ due to \eqref{order of xis}. Thus, we get the claim using $\lim_{n \to \infty} h(\lambda_n) = \infty$, $\lim_{n \to \infty} \lambda_n = \infty$, and $\lim_{x \to \infty} \Phi(d_i(x,0)) = 1$ for $i \in \{0,1\}$.
	\end{myproof}
	\begin{myproof}[Proof of \eqref{eq: y star continuous}]
		To show this, it is sufficient to show that (i) $\lambda \mapsto f_1(\cdot,\lambda)$ is strictly increasing, (ii) $y \mapsto f_1(y,\cdot)$ is strictly decreasing and (iii) $f_1$ is jointly continuous on $\Real_{>0} \times (C,\infty)$. Indeed, let $\lambda, \lambda_n > C$ with $\lambda_n \xrightarrow{n \to \infty} \lambda$. Due to the existence of the zero root (see the main part of the proof) and the uniqueness of the zero root (by (ii)), there exist then unique $y_{\lambda}^*,y_{\lambda_n}^* \in (0,\infty)$ such that $f_1(y_{\lambda}^*,\lambda)=0=f_1(y_{\lambda_n}^*,\lambda_n)$. Then, we have to show that $y_{\lambda_n}^* \xrightarrow{n \to \infty} y_{\lambda}^*$. Due to (i), (ii), and $\lambda_n \xrightarrow{n \to \infty} \lambda$, it holds that $y_{\lambda_n}^* \in [y_{\max_{k \in \N}\{\lambda,\lambda_k\}}^*,y_{\min_{k \in \N}\{\lambda,\lambda_k\}}^*]$ (maximum and minimum exist due to the convergence of $\lambda_n$ to $\lambda$). Then, there exist a subsequence $y_{\lambda_{n_l}}^*$ and a $\tilde{y}$ such that $y_{\lambda_{n_l}}^* \xrightarrow{l \to \infty} \tilde{y}$. Moreover, we know that $f_1(y_{\lambda}^*,\lambda) = 0 = f_1(y_{\lambda_{n_l}}^*,\lambda_{n_l}) \xrightarrow{l \to \infty} f_1(\tilde{y},\lambda)$ due to the joint continuity of $f_1$. Now, the uniqueness of the zero root implies that $\tilde{y}=y_{\lambda}^*$. Hence, all subsequences converge to $y_{\lambda}^*$ and thus also the sequence itself. Therefore, \eqref{eq: y star continuous} would follow provided we can show (i), (ii), and (iii) from the beginning:
		
		First, the joint continuity (i.e., property (iii)) follows directly from the definition of $f_1$. 
		
		Second, we derive property (i), i.e., the strict monotonicity in $\lambda$, from the formula of $\hat{X}_T$ (see Theorem \ref{optimal wealth}), which gives us the claim. Note that we add in the following paragraph a superscript to $\hat{X}_T$, $\xi_1^*$, $\xi_2^*$, and $\xi_3^*$ when we take these values for a certain fixed $\lambda$. \\
		Indeed, if, for all $\omega \in \Omega$, we can show that $\hat{X}_T^{\lambda_1} (\xi_T(\omega)) \geq \hat{X}_T^{\lambda_2} (\xi_T(\omega))$ with a strict inequality for a set with positive probability for all $\lambda_1 > \lambda_2 \, (>C)$, then also $f_1(\cdot,\lambda_1)>f_1(\cdot,\lambda_2)$ since $f_1 = \EX[\xi_T \hat{X}_T] -x_0$ and $\xi_T >0$. Therefore, let $\lambda_1 > \lambda_2 \, (>C)$: Due to \eqref{eq: xi3* bigger 0} and Proposition \ref{prop: * Eigenschaften}, we conclude that for all $\lambda>C$ (and hence for $\lambda_1$ and $\lambda_2$) $\hat{X}_T \not \equiv 0$, i.e., $\hat{X}_T (\xi_T) > 0$ for $\xi_T$ small enough. The formula of $\hat{X}_T$ (see Theorem \ref{optimal wealth}) implies that for fixed $y$ the slope remains unchanged in each interval $(0,\xi_1^*]$, $(\tilde{\alpha}\hat{\xi},\xi_2^*]$, resp. $(\alpha\hat{\xi},\xi_3^*]$ when changing $\lambda$, but the interval boundaries change. Note that the slope (as a function of $\xi_T$) is strictly negative in $(0,\xi_1^*]$ and $(\alpha\hat{\xi},\xi_3^*]$, and constant otherwise. Therefore, due to $\hat{X}_T$ being non-increasing and a non-increasing function getting larger when being shifted to the right, it is sufficient to show that all interval boundaries do not decrease when $\lambda$ increases and at least one boundary value strictly increases when $\lambda$ increases: \\
		It follows directly from the definition of $\hat{X}_T$ that $\lim_{\xi_T \to 0} \hat{X}_T^{\lambda_1} (\xi_T) > \lim_{\xi_T \to 0} \hat{X}_T^{\lambda_2} (\xi_T) \, (>0)$. Hence, due to having the same slope and the continuity of $\hat{X}_T^{\lambda_1}$ (resp. $\hat{X}_T^{\lambda_2}$) in $\xi_T$ on $(0,\xi_1^{*,\lambda_1}]$ (resp. $(0,\xi_1^{*,\lambda_2}]$), we conclude that $\xi_1^{*,\lambda_1} > \xi_1^{*,\lambda_2}$ since $\hat{X}_T^{\lambda_1} (\xi_T=\xi_1^{*,\lambda_1}) = k_2 = \hat{X}_T^{\lambda_2} (\xi_T=\xi_1^{*,\lambda_2})$, i.e., the interval is strictly increasing in $\lambda$. Next, we observe immediately from its definition that $\hat{\xi}^{\lambda_1} \geq \hat{\xi}^{\lambda_2}$. For $\xi_2^*$ and $\xi_3^*$, we show this property by proving that their derivatives with respect to $\lambda$ are non-negative. Let $\tilde{\xi}_2^*$ be defined as in Theorem \ref{optimal wealth} and $\tilde{\xi}_3^* := \bar{\xi} - \frac{2 \gamma \alpha^2}{y} \left( \sqrt{k_1^2 +k_1 \left(2k_0 + \frac{\lambda}{\gamma \alpha}\right)} - k_1 \right)$ with $\bar{\xi}$ as in Theorem \ref{optimal wealth}. Then, we get:
		\begin{align*}
			&&\frac{\partial}{\partial \lambda} \tilde{\xi}_2^* &= \frac{\alpha}{y} - \frac{\alpha k_1}{y k_2} = \frac{\alpha}{y} \left(1- \frac{k_1}{k_2}\right) \geq 0 \\
			&\Rightarrow& \frac{\partial^-}{\partial \lambda} \xi_2^* &\geq 0, \\
			&&\frac{\partial}{\partial \lambda} \tilde{\xi}_3^* &= \begin{cases}
				\dfrac{\alpha}{y} & \textit{ if $k_1=0$}, \\
				\dfrac{\alpha}{y} - \dfrac{2 \gamma \alpha^2}{y} \dfrac{\frac{1}{2} \cdot \frac{k_1}{\gamma \alpha}}{\sqrt{k_1^2+k_1(2k_0+\frac{\lambda}{\gamma \alpha})}} = \frac{\alpha}{y} \left(1- \dfrac{k_1}{\sqrt{k_1^2+k_1(2k_0+\frac{\lambda}{\gamma \alpha})}}\right) & \textit{ if $k_1>0$},
			\end{cases} \\
			&&&\geq 0, \\
			&\Rightarrow& \frac{\partial^-}{\partial \lambda} \xi_3^* &\geq 0,
		\end{align*}
		where $\frac{\partial^-}{\partial \lambda}$ denotes the left side derivative with respect to $\lambda$. Summarizing, $\xi_1^*$ strictly increases and $\xi_2^*$ and $\xi_3^*$ non-decrease in $\lambda$. Thus, property (i) follows.
		
		Third, we prove property (ii), i.e., that $f_1$ is strictly decreasing in $y$. Let $\hat{X}_T (y)$ be a function of $y$ as in Theorem \ref{optimal wealth}. It follows directly from the definitions of $\hat{\xi}$, $\xi_1^*$, $\xi_2^*$, and $\xi_3^*$ that they are non-increasing in $y$ for fixed $\lambda$ since they are non-negative by definition and $\tfrac{1}{y}$ can be factored out. Now, let $y_1 < y_2$ and $\xi_T>0$ arbitrary. Then, the claim follows if $\hat{X}_T (y_1,\xi_T) \geq \hat{X}_T (y_2,\xi_T)$ and $\hat{X}_{T} (y_{1},\xi _{T}) > \hat{X}_{T} (y_{2},\xi _{T})$ on a set with positive probability. If $\xi_T$ is in the same interval for both values $y_1$ and $y_2$, ``$\geq$'' is obvious due to \eqref{eq: optimal terminal wealth}. Moreover, we get "$ > $" if $\xi _{T} \in (0,\xi _{1}^{*}] \cup (\alpha\hat{\xi},\xi _{3}^{*}]$ (attained with positive probability). Since $\hat{\xi}$, $\xi_1^*$, $\xi_2^*$, and $\xi_3^*$ are non-increasing in $y$, $\hat{X}_T$ is non-increasing in $\xi_T$ (check Proposition \ref{prop: xhat continuous}), and for all $y \geq 0$ holds that $\hat{X}_{T} (y,a) > \hat{X}_{T} (y,b) > \hat{X}_{T} (y,c) > \hat{X}_{T} (y,d)$ for $a \in (0,\xi _{1}^{*})$, $b \in (\tilde{\alpha}\hat{\xi},\xi _{2}^{*}]$, $c \in (\alpha\hat{\xi},\xi _{3}^{*})$, and $d > \xi _{3}^{*}$, the claim also follows if $\xi_T$ is in different intervals for $y_1$ and $y_2$.
	\end{myproof}
	\begin{myproof}[Proof of \eqref{eq: y star lambda C}]
		For the proof of this equation, we have to consider the two cases $x_0 e^{\int_0^T r_s \diff s} < k_2$ and $x_0 e^{\int_0^T r_s \diff s} \geq k_2$. Note that if $x_0 e^{\int_0^T r_s \diff s} = k_2$, we get that $C = 2 \gamma \alpha (x_0 e^{\int_0^T r_s \diff s}-k_1-k_0) =
		2 \gamma \left(\tilde{\alpha} x_0 e^{\int_0^T r_s \diff s} + \alpha_2 k_2 - \alpha (k_0+k_1)\right)$. Moreover, notice that $x_0 e^{\int_0^T r_s \diff s} > k_1$ by assumption.
		
		\underline{Case 1:} $x_0 e^{\int_0^T r_s \diff s} < k_2$, i.e., $C = 2 \gamma \alpha (x_0 e^{\int_0^T r_s \diff s}-k_1-k_0)$: \\
		For $\lambda_{\varepsilon}=C+2\gamma \alpha e^{\int_0^T r_s \diff s}\varepsilon$, it holds that $\hat{\xi}=0$ for $\varepsilon$ small enough and hence also $\xi_1^* = \xi_2^* = 0$ due to \eqref{order of xis}. However, it holds with \eqref{eq: xi3* bigger 0} that $\xi_3^* >0$. Note that $\lim_{x \to 0} \Phi(d_j(x,0))=0$ for $j \in \{0,1,2\}$. Thus, $f_1$ reduces for all $\varepsilon>0$ small enough to
		\begin{align} \label{eq: f1 formula in the proof}
			f_1(y,C+2\gamma \alpha e^{\int_0^T r_s \diff s}\varepsilon) =&-x_0 + \left(k_0+k_1 + \dfrac{C+2\gamma \alpha e^{\int_0^T r_s \diff s}\varepsilon}{2 \gamma \alpha} \right) e^{-\int_0^T r_s \diff s} \Phi\left(d_1 \left(\xi_3^*,0\right)\right) \notag \\
			&- \dfrac{y}{2\gamma \alpha^2} e^{\int_0^T -2r_s+\norm{\kappa_s}^2 \diff s} \Phi \left( d_2 \left(\xi_3^*,0\right) \right) \\
			=&-x_0 + (x_0+\varepsilon) \Phi\left(d_1 \left(\xi_3^*,0\right)\right) - \dfrac{y}{2\gamma \alpha^2} e^{\int_0^T -2r_s+\norm{\kappa_s}^2 \diff s} \Phi \left( d_2 \left(\xi_3^*,0\right) \right). \notag
		\end{align}
		Now, if $\limsup_{\varepsilon \to 0} y_{\lambda_{\varepsilon}}^* >0$, it follows that $\liminf_{\varepsilon \to 0} f_1(y_{\lambda_{\varepsilon}}^*,\lambda_{\varepsilon}) < 0$ (since $\Phi\left(d_1 \left(\xi_3^*,0\right)\right) < 1$ for all $y>0$) which is a contradiction to $0 = \liminf_{\varepsilon \to 0} 0 = \liminf_{\varepsilon \to 0} f_{1}(y_{\lambda _{\varepsilon}}^{*},
		\lambda _{\varepsilon})$ by definition of $y_{\lambda _{\varepsilon}}^{*}$.
		Thus, the claim follows, i.e., $\lim_{\lambda \to C} y_{\lambda}^*=0$, when taking the limit of $\varepsilon \to 0$ (and hence $\lambda \to C$) on both sides, since $y_{\lambda}^* \geq 0$. Note that for every $y$, the existence of a solution $\lambda_y$ of \eqref{eq: f1 formula in the proof} was already ensured in the main part of the proof.
		
		\underline{Case 2:} $x_0 e^{\int_0^T r_s \diff s} \geq k_2$, i.e., $C = 2 \gamma \left(\tilde{\alpha} x_0 e^{\int_0^T r_s \diff s} + \alpha_2 k_2 - \alpha (k_0+k_1)\right)$: \\
		For $\lambda_{\varepsilon}=C+2\gamma \tilde{\alpha} e^{\int_0^T r_s \diff s}\varepsilon$, it holds that:
		\begin{align*}
			f_1(y,\lambda_{\varepsilon}) =& -x_0 + (x_0+\varepsilon) \Phi\left(d_1 \left(\xi_1^*,0\right)\right) - \dfrac{y}{2\gamma \tilde{\alpha}^2}  e^{\int_0^T -2r_s+\norm{\kappa_s}^2 \diff s} \Phi \left( d_2 \left(\xi_1^*,0\right) \right)  \\
			&+ k_2 e^{-\int_0^T r_s \diff s} \left( \Phi\left(d_1 \left(\xi_2^*,0\right)\right) - \Phi(d_1 (\tilde{\alpha} \hat{\xi},0)) \right) \\
			&+ \left( \dfrac{\tilde{\alpha}}{\alpha} (x_0+\varepsilon) + \dfrac{\alpha_2}{\alpha} k_2 e^{-\int_0^T r_s \diff s} \right) \left( \Phi\left(d_1 \left(\xi_3^{*},0\right)\right) - \Phi(d_1 (\alpha \hat{\xi},0)) \right) \\
			&- \dfrac{y}{2\gamma \alpha^2} e^{\int_0^T -2r_s+\norm{\kappa_s}^2 \diff s} \left(\Phi \left( d_2 \left(\xi_3^{*},0\right) \right) - \Phi ( d_2 (\alpha \hat{\xi},0)) \right) \\
			\leq& -x_0 + x_0 \left( \Phi\left(d_1 \left(\xi_1^{*},0\right)\right) + \Phi\left(d_1 \left(\xi_2^*,0\right)\right) - \Phi(d_1 (\tilde{\alpha} \hat{\xi},0)) + \Phi\left(d_1 \left(\xi_3^{*},0\right)\right) - \Phi(d_1 (\alpha \hat{\xi},0)) \right) \\
			&+ \varepsilon \left( \Phi\left(d_1 \left(\xi_1^{*},0\right)\right) + \Phi\left(d_1 \left(\xi_3^{*},0\right)\right) - \Phi(d_1 (\alpha \hat{\xi},0)) \right) \\
			&- \dfrac{y}{2\gamma}  e^{\int_0^T -2r_s+\norm{\kappa_s}^2 \diff s}  \left( \frac{1}{\alpha^2}\Phi \left( d_2 \left(\xi_3^{*},0\right) \right) - \frac{1}{\alpha^2} \Phi ( d_2 (\alpha \hat{\xi},0)) + \frac{1}{\tilde{\alpha}^2}\Phi \left( d_2 \left(\xi_1^{*},0\right) \right)\right) \\
			\leq& -x_0 + x_0 + \varepsilon - \dfrac{y}{2\gamma \tilde{\alpha}^2}  e^{\int_0^T -2r_s+\norm{\kappa_s}^2 \diff s} \left ( \Phi \left ( d_{2} \left (\xi _{3}^{*},0
			\right ) \right ) - \Phi ( d_{2} (\alpha \hat{\xi},0))	+ \Phi \left ( d_{2} \left (
			\xi _{1}^{*},0\right ) \right )\right ),
		\end{align*}
		where we used that $k_2 e^{-\int_0^T r_s \diff s} \leq x_0$, $\tilde{\alpha}+\alpha_2 = \alpha$, and $\frac{\tilde{\alpha}}{\alpha} \leq 1$ in the first inequality. Moreover, we used \eqref{order of xis} and $\Phi\left(d_1 \left(\xi_3^*,0\right)\right) \leq 1$ for all $y>0$ in the second inequality. Due to $\xi _{3}^{*}>0$ and \eqref{order of xis}, it holds that $\Phi \left ( d_{2} \left (\xi _{3}^{*},0
		\right ) \right ) - \Phi ( d_{2} (\alpha \hat{\xi},0)) + \Phi \left ( d_{2} \left (
		\xi _{1}^{*},0\right ) \right ) >0$.
		Hence, the claim follows, i.e., $\lim_{\varepsilon \to 0} y_{\lambda_{\varepsilon}}^*=0$ by the same argument as in Case 1. Note that the existence of a solution was already ensured in the main part of the proof.
	\end{myproof}
	\begin{myproof}[Proof of \eqref{eq: y star lambda C alt}]
		There is nothing to show since we already have that $y_{\lambda}^* \in [0,\infty)$.
	\end{myproof}
	\begin{myproof}[Proof of \eqref{eq: y star lambda infty}]
		We prove this by contradiction. Therefore, we assume that there exists an $L >0$ such that $\limsup_{\lambda \to \infty} y^*_{\lambda} \leq L$. Under this assumption, it holds that $\xi_1^* \xrightarrow{\lambda \to \infty} \infty$ since $y_{\lambda}^{*}$ is bounded and $\lambda - \sqrt{l(\lambda)} \xrightarrow{\lambda \to \infty} \infty$ for all affine functions $l$. Then \eqref{order of xis} implies that  $\hat{\xi},\bar{\xi},\xi_2^*,\xi_3^* \xrightarrow{\lambda \to \infty} \infty$ and we know that $\Phi(d_1(\xi_3^*,0)) \geq \Phi(d_1(\alpha \hat{\xi},0))$. However, then $\lim_{\lambda \to \infty} \inf_{0 \leq y \leq L} f_1 (y,\lambda) = \infty$ which is a contradiction to $y^*_{\lambda}$ being the zero root of $f_1$.
	\end{myproof}
	Since all statements are proved, we have shown the lemma.
\end{proof}

\section{Additional lemma} \label{lemmas}

The following lemma restates a well-known result for log-normal distributions, which is used, e.g., in deriving the pricing formula of a put or call in a Black-Scholes model. The proof is just a straightforward calculation. However, we will give it for the sake of completeness.

\begin{lemma}
	Let $0 \leq a \leq b \leq + \infty$ and $X \sim \mathcal{LN} (\mu,\sigma^2)$, where $\mathcal{LN}$ denotes a log-normal distribution. Then, it holds that:
	\begin{align} \label{conditional expectation log normal}
		\EX \left[X \1_{X \in [a,b]}\right] = e^{\mu+\frac{\sigma^2}{2}} \left( \Phi \left(\frac{\ln b - \mu - \sigma^2}{\sigma}\right) - \Phi \left(\frac{\ln a - \mu - \sigma^2}{\sigma}\right) \right),
	\end{align}
	where $\Phi$ denotes the cdf of a standard normal distribution with $\Phi(+\infty)=1$ and $\Phi(-\infty)=0$. The formula remains unchanged when we replace the interval $[a,b]$ by $(a,b]$, $[a,b)$, or $(a,b)$.
\end{lemma}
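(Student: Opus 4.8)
The plan is to reduce the claim to a one–dimensional Gaussian integral and evaluate it by completing the square. First I would write $X = e^Y$ with $Y \sim \mathcal{N}(\mu,\sigma^2)$, so that the event $\{X \in [a,b]\}$ is equivalent to $\{Y \in [\ln a, \ln b]\}$ (using the conventions $\ln 0 = -\infty$ and $\ln(+\infty) = +\infty$). Plugging in the normal density, the expectation becomes
\begin{align*}
	\EX \left[ X \1_{X \in [a,b]} \right] = \int_{\ln a}^{\ln b} e^y \, \frac{1}{\sqrt{2\pi}\,\sigma} \, e^{-\frac{(y-\mu)^2}{2\sigma^2}} \, \diff y.
\end{align*}

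Next I would merge the two exponentials and complete the square in $y$. The key algebraic identity is
\begin{align*}
	y - \frac{(y-\mu)^2}{2\sigma^2} = -\frac{\left(y-(\mu+\sigma^2)\right)^2}{2\sigma^2} + \mu + \frac{\sigma^2}{2},
\end{align*}
so that the constant factor $e^{\mu + \sigma^2/2}$ pulls out of the integral and what remains is the density of an $\mathcal{N}(\mu+\sigma^2,\sigma^2)$ law integrated over $[\ln a, \ln b]$. Standardizing via the substitution $z = (y - \mu - \sigma^2)/\sigma$ turns this remaining integral into $\Phi\!\left(\frac{\ln b - \mu - \sigma^2}{\sigma}\right) - \Phi\!\left(\frac{\ln a - \mu - \sigma^2}{\sigma}\right)$, which is exactly the asserted formula.

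Finally I would settle the two side remarks. The degenerate endpoints $a = 0$ and $b = +\infty$ are handled directly by the stated conventions $\Phi(-\infty) = 0$ and $\Phi(+\infty) = 1$ together with $\ln 0 = -\infty$ and $\ln(+\infty) = +\infty$. The invariance of the formula when $[a,b]$ is replaced by $(a,b]$, $[a,b)$, or $(a,b)$ follows because $X$ has an absolutely continuous distribution for $\sigma > 0$, so the singletons $\{a\}$ and $\{b\}$ carry no mass and altering the integrand on them does not change the integral. I do not expect any genuine obstacle in this lemma: the whole content is the completing-the-square computation, and the only points requiring attention are verifying that the additive constant equals $\mu + \tfrac{\sigma^2}{2}$ and correctly propagating the limiting conventions for $\Phi$ through the degenerate endpoints.
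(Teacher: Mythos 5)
Your proposal is correct and follows essentially the same route as the paper: substitute $y=\ln x$, complete the square in the exponent to extract the factor $e^{\mu+\sigma^2/2}$, and standardize via $z=(y-\mu-\sigma^2)/\sigma$ to obtain the difference of $\Phi$-values. Your explicit handling of the degenerate endpoints and the half-open intervals via absolute continuity is a small addition the paper leaves implicit, but the core computation is identical.
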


\begin{proof}
	It holds with $y := \ln x$ and $z := \frac{y-\mu-\sigma^2}{\sigma}$:
	\begin{align*}
		\EX \left[X \1_{X \in [a,b]}\right] &= \int_{\ln a}^{\ln b} \dfrac{1}{\sqrt{2 \pi \sigma^2}} \exp \left( y - \frac{(y-\mu)^2}{2 \sigma^2}\right) \diff y \\
		&= \exp \left( -\frac{1}{2 \sigma^2} \left(\mu^2 - (\mu+\sigma^2)^2\right)\right) \int_{\frac{\ln a-\mu-\sigma^2}{\sigma}}^{\frac{\ln b-\mu-\sigma^2}{\sigma}} \dfrac{1}{\sqrt{2\pi}} \exp \left(\frac{1}{2} z^2\right) \diff z \\
		&= \exp \left(\mu+\frac{\sigma^2}{2}\right) \left( \Phi \left(\frac{\ln b - \mu - \sigma^2}{\sigma}\right) - \Phi \left(\frac{\ln a - \mu - \sigma^2}{\sigma}\right) \right).
	\end{align*}
\end{proof}

\footnotesize
\bibliography{bibliography}
\footnotesize
\bibliographystyle{plain}

\footnotesize
\end{document}